%% file: iclr2026_conference.tex
\documentclass{article} 
\usepackage{iclr2026_conference,times}
\iclrfinalcopy
\input{math_commands.tex}

\usepackage{hyperref}
\usepackage{url}

\newcommand\independent{\protect\mathpalette{\protect\independenT}{\perp}}
\def\independenT#1#2{\mathrel{\rlap{$#1#2$}\mkern2mu{#1#2}}}

\usepackage{booktabs}
\usepackage{graphicx}
\usepackage{tikz}
\usepackage{pgfplots}
\pgfplotsset{compat=1.18}
\usepgfplotslibrary{groupplots}
\usepackage{subfig}
\usepackage{subcaption}
\usepackage{natbib}
\usepackage{multirow}
\usepackage{diagbox}
\usepackage{adjustbox} 
\usepackage{tcolorbox}
\usepackage{lipsum}
\usepackage{algpseudocode}
\usepackage{algorithm}
\usepackage{marvosym}
\usepackage[table,dvipsnames]{xcolor}

\definecolor{bblue}{HTML}{4F81BD}
\definecolor{rred}{HTML}{C0504D}
\definecolor{ggreen}{HTML}{9BBB59}
\definecolor{ppurple}{HTML}{9F4C7C}
\definecolor{viridis3}{HTML}{73D055}
\definecolor{viridis1}{HTML}{453781}
\definecolor{viridis2}{RGB}{32,144,140}
\definecolor{toolA}{RGB}{89,161,79}
\definecolor{toolB}{RGB}{140,209,125}
\definecolor{toolLine}{RGB}{225,87,89}

\definecolor{aqua}{rgb}{0.0, 1.0, 1.0}

\definecolor{reviewer1}{rgb}{0.0, 0.0, 0.0}
\definecolor{reviewer2}{rgb}{0.0, 0.0, 0.0}
\definecolor{reviewer3}{rgb}{0.0, 0.0, 0.0}
\definecolor{joint}{rgb}{0.0, 0.0, 0.0}

\usepackage{geometry}
\usepackage{xcolor}
\usepackage{framed}
\usepackage{enumitem}
\usepackage{amssymb}

\usepackage{amsmath,amssymb}
\usepackage{mathtools}
\usepackage{amsthm}
\newtheorem{definition}{Definition}
\newtheorem{example}{Example}
\newtheorem{theorem}{Theorem}
\newtheorem{assumption}{Assumption}
\newcommand\ourmethod{\textsc{CaST}}

\title{CausalSteward: An Agentic Divide-Conquer-Combine Copilot for Causal Discovery}

\author{Nicholas Tagliapietra$^{1,2,\dagger,\text{\Letter}}$, Gian Lorenzo Marchioni$^{5, \dagger, \star}$, Moritz Willig$^{1}$, Juergen Luettin$^{2}$, \\
\textbf{Lavdim Halilaj$^{2}$, Kristian Kersting$^{1,3,4}$}\\
$^1 : \textit{Computer Science Department, TU Darmstadt, Germany}$\\
$^2 : \textit{Bosch Center for Artificial Intelligence, Renningen, Germany}$\\
$^3 : \textit{Hessian Center for AI (hessian.AI), Germany}$ \\
$^4 : \textit{German Research Center for AI (DFKI), Germany}$ \\
$^5 : \textit{LUISS University, Rome, Italy}$
}
\begin{document}

\maketitle

\begin{abstract}
Learning causal models from high-dimensional data is a significant challenge, particularly in real-world settings where violations of core assumptions lead to causal identifiability issues. Although massive amounts of prior knowledge are available, and contain valuable causal information, effectively integrating this knowledge into the causal discovery process remains an open problem. We introduce \textbf{C}ausal\textbf{ST}eward (\ourmethod), a novel human-in-the-loop framework for interactively assembling large causal models. CausalSteward is a multi-agent collaborative system that tackles high-dimensional causality through a divide-and-conquer approach where large clusters of variables are iteratively partitioned and then separately analyzed. Our framework fuses prior knowledge with a data-driven approach by using tailored tools such as retrieval augmented generation and conditional independence tests.
Finally, we use this work to examine the capabilities and limitations of causal reasoning in multi-agent frameworks, and how the human-in-the-loop can contribute to accurate and trustworthy results.
\end{abstract}

\section{Introduction}\label{sec:introduction}
\def\thefootnote{$\dagger$}\footnotetext{Equal Contribution.}
\def\thefootnote{\Letter}\footnotetext{Correspondence to: \texttt{tagliapietra.nicholas@gmail.com}}
\def\thefootnote{$\star$}\footnotetext{Work carried on during an internship at the Bosch Center for Artificial Intelligence.}
\def\thefootnote{\arabic{footnote}} 
Modern science and engineering increasingly demand for causal models \citep{Pearl_2009} that move beyond purely statistical correlation 
to predict the outcomes of manipulations on the system (interventions) and reasoning about hypothetical scenarios (counterfactuals).
Performing such inferences often requires a causal graph representing the network of cause-and-effect relationships, however the true causal graph is rarely known. 
Existing approaches \textcolor{reviewer1}{\citep{spirtes2001pc, shimizu2011directlingamdirectmethodlearning,spirtes01fci}} commonly attempt to learn this graph from data by relying on causal discovery methods.
However, causal discovery from observational data is fundamentally limited by the assumptions required for causal identifiability \citep{shimizu2011directlingamdirectmethodlearning, zhang2012identifiabilitypostnonlinearcausalmodel, jaber2018causalidentificationmarkovequivalence}. 
Violations of these assumptions can result in multiple causal graphs being statistically undistinguishable, which leads to non-identifiable causal relationships and limits accurate causal discovery. 
These limitations can be overcome by integrating prior knowledge \citep{donnel2006discoverypriorknowledge}. 
Domain experts can manually incorporate their knowledge by imposing edge constraints, but this has limited utility in high-dimensional settings \textcolor{reviewer1}{\citep{constantinou2023impactpriorcausaldiscovery}}.
In particular, unstructured text is a rich source of prior knowledge that is dense of causal information, while being massively available in the web, documents, and more.
Recent works using Large Language Models (LLMs) show a promising outlook for automating the extraction of causal information \citep{ban2023querytoolscausalarchitects, antonucci2023zeroshotcausalgraphextrapolation, sheth2025causalgraph2llmevaluatingllmscausal}, offering a great tool for the construction of large causal graphs \citep{abdulaal2024causalmodelingagents, chen2023mitigatingpriorerrorscausal}. Furthermore, a Human-in-the-Loop (HITL) approach enables domain experts to guide the LLM reasoning process and obtain accurate solutions without requiring tedious manual specification.
\begin{figure}[t]
    \centering
    \includegraphics[width=0.99\textwidth]{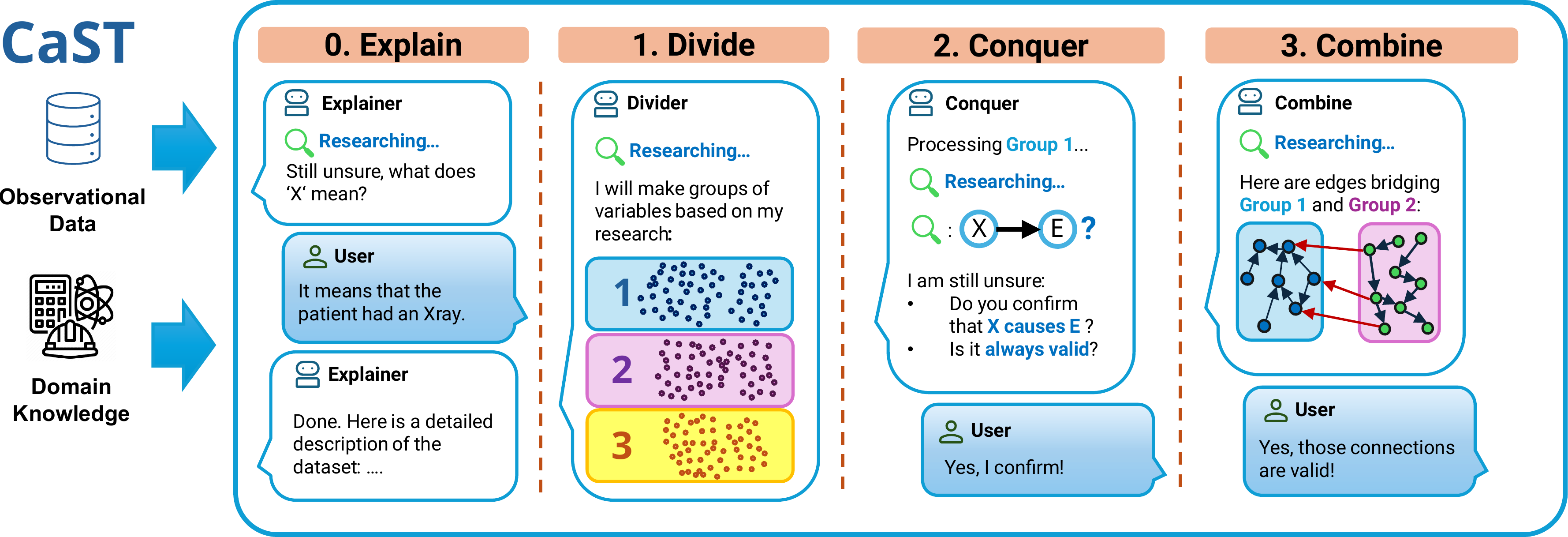}
    \caption{\textbf{CausalSteward (\ourmethod)}: 
    This paper develops and showcases the utility of a human-in-the-loop, multi-agent system for causal discovery, leveraging the combined strength of automated discovery from data and prior knowledge of LLM, together with human-expert feedback to incorporate domain specific knowledge.
    }
    \vspace{-0.5cm}
    \label{fig:causalsteward_concept}
\end{figure}
In this paper we introduce \textbf{C}ausal\textbf{ST}eward (\ourmethod), a multi-agent causal copilot that discovers large causal graphs by combining human interaction, retrieval augmented generation (RAG), and data-driven methods. 
Our approach is \textit{human-first}, and \ourmethod{} has the duty of assisting the human in building a causal model while automating the collection of prior knowledge.
\ourmethod{} adopts a Divide-and-Conquer approach where agents iteratively partition the variable set based on their likelihood of being causally connected until the right level of granularity is reached. 
Successively, \ourmethod{} estimates a local causal graph for each partition based on prior knowledge (retrieved from HITL and RAG), and observational data. 
Finally, the local causal graphs are merged together, obtaining the global causal graph. 

We apply \ourmethod{} to a diverse set of challenging datasets in the domains of manufacturing, neuropathic pain and the CausalChambers benchmark \citep{gamella2024causalchambers}. Our experiments evaluate \ourmethod{} with LLMs of varying capabilities and compare against a set of seven other LLM- and data-driven baselines. Our ablations, which disable the RAG and human-in-the-loop components, demonstrate that only the joint use of both components leads to the strong performance of \ourmethod{}.

\textbf{Contribution:} In this paper we present \ourmethod, a Human-in-the-Loop (HITL) framework for joint collaborative causal discovery integrating tabular data, LLM prior knowledge and RAG search, together with human-feedback back domain specific questions.
To the best of our knowledge, we are first to present an interactive multi-agentic framework with a Divide-and-Conquer paradigm that has the ability to discover causal graphs in high-dimensional causal settings.

\section{Preliminaries and Related Work}\label{sec:background}
\textbf{Notation:} We denote sets with bold $\textbf{V}$ and variables in lowercase $v_i$. 
\textbf{Causal Models:} Within the framework of Pearlian causality \citep{Pearl_2009}, the most common class of causal models are \textit{Structural Causal Models}, which we define as a 4-tuple $\mathcal{M} \coloneq (\textbf{U}, \textbf{V}, P_{\textbf{U}}, \mathcal{F})$ where $\textbf{U}$ is the set of exogenous variables, $\textbf{V}$ is the set of endogenous variables, $P_\textbf{U}$ is the probability density function of the exogenous variables $\textbf{U}$,
and $\mathcal{F} = \{f_1, f_2, \dots, f_n\}$ is the set of \textit{Structural Equations}, where each element is a mapping such that $f_i : u_i \cup \parents_i \rightarrow \textbf{V}_i$, with $u_i \subseteq \textbf{U}$ and $v_i \subseteq \textbf{V}$.
Every endogenous variable $v_i \in \textbf{V}$ is coupled to a structural equation that determines its values, i.e.\, $ v_i = f_i (u_i, \parents_i) $. 

Structural equations induce a set of dependencies between endogenous variables. Those dependencies can be organized in a directed acyclic graph (DAG) $\gG(\textbf{V},\textbf{E})$, where $\textbf{V}$ are nodes and $\textbf{E}\subseteq \textbf{V} \times \textbf{V}$ are directed edges. We say that $v_i$ is a parent of $v_j$ if and only if $(v_i,v_j)\in \textbf{E}$, and denote as \textcolor{reviewer1}{$\parents_i$} the parents of $v_i$. In most cases, the true causal graph is unknown, and the task of extracting a causal graph is called \textit{causal discovery}.
When all variables are observable, we say there is \textit{causal sufficiency}. Otherwise, if hidden variables are present, we have \textit{causal insufficiency}. In causally insufficient scenarios it might not be possible to establish the direction of causation, and we can use a Maximal Ancestral Graph (MAG) \textcolor{reviewer1}{\citep{zhang08ancestral}} to account for those ambiguities by using undirected edges. Before defining a MAG, we first recall what is an \textit{inducing path}. Given a subset $\textbf{L}\subset \textbf{V}$, we say that a path is inducing relative to $\textbf{L}$ if every vertex not in $\textbf{L}$ is a collider on the path (apart from its endpoints), and every collider is an ancestor of at least one of the endpoints.
\begin{definition}[Maximally Ancestral Graph \citep{zhang08ancestral}]
A mixed graph is called a maximal ancestral graph (MAG) if: 1) the graph does not contain any directed or almost directed cycles (i.e.\, it is Ancestral), and 2) there is no inducing path between any two non-adjacent vertices. 
\end{definition}
A MAG can be obtained by marginalizing hidden variables in a DAG \citep{verma2013equivalencecausalmodels}. The resulting MAG describes a group of causal models preserving ancestral and independence relations between observable variables.

An additional challenge is \textit{causal identifiability}: \textcolor{reviewer1}{When specific assumptions on the structural mechanisms are not fulfilled \citep{shimizu2011directlingamdirectmethodlearning, hoyerpostnonlinear2008}}, multiple causal graphs become statistically undistinguishable, unless prior knowledge \citep{zheng2024localcausaldiscoverybackground} or interventional data \citep{brouillard2020differentiablecausaldiscoveryinterventional, li2023causal} is available. When only observational data is available, it is often not possible to identify a unique causal graph \textcolor{reviewer1}{\citep{moojibivariate2016}}. In that case, we are limited to a set of plausible causal models called the Markov Equivalence Class (MEC), which is an equivalence class containing all causal graphs sharing the same set of conditional independencies. Consequently, causal discovery on causally insufficient scenarios using observational data often results in an equivalence class of MAGs \textcolor{reviewer1}{\citep{spirtes01fci, rohekar2022icd}}, which can be graphically represented with a Partially Ancestral Graph (PAG), that we define below.
\begin{definition}[Partially Ancestral Graph]
    Let $\mathcal{M}$ be an arbitrary MAG. We can represent the MEC associated to $\mathcal{M}$ with a (PAG) $\mathcal{P}$, which is a partial mixed graph such that: (1) $\mathcal{P}$ shares the same adjacencies as $\mathcal{M}$, (2) An edge has an arrowhead ($\blacktriangleright$) in $\mathcal{P}$ iff it is shared by all MAGs in the MEC, (3) An edge has a tail (\textemdash) in $\mathcal{P}$ iff it is shared by all MAGs in the MEC, and (4) An edge has a circle $\circ$ in $\mathcal{P}$ iff it is neither an arrowhead or a tail.
\end{definition}
Classic algorithms yielding a PAG in causally insufficient scenarios are the Fast-Causal-Inference (FCI) \citep{spirtes01fci} and the Iterative Causal Discovery (ICD) \citep{rohekar2022icd} algorithms. Both algorithms are constraint-based methods which iteratively remove edges when a conditioning set is found. That is, we can remove $u \rightarrow v$ when a conditioning set $\textbf{Z}$ making them independent exists ($\exists \textbf{Z} : u \independent v | \textbf{Z}$ with $u$,$v$, and $\textbf{Z}$ disjoint). Inconveniently, the number of possible conditioning sets grows exponentially with the number of variables, and many constraint-based methods present severe scaling limitations \citep{feigenbaum2023unlikelihooddseparation}. 
A valid approach for tackling high-dimensional data is to split variables into groups to be analyzed separately \citep{wu2025sampleestimateaggregaterecipe}. A mathematically rigorous way of grouping variables is the one of \textit{Causal Partitionings}, \textcolor{reviewer1}{defined in \citep{zhang2022CAPA} and reported below}.
\begin{definition}[Causal Partitioning]\label{def:causal_partitioning}
    Let $\gG(\textbf{V},\textbf{E})$ be a graph with variable set $\textbf{V}$ and edges $\textbf{E}\subseteq \textbf{V}\times \textbf{V}$. A partitioning $\textbf{P} = \{\textbf{P}_1, \dots, \textbf{P}_M\}$ of $\textbf{V}$ is a causal partitioning if and only if the following holds:
    \begin{enumerate}
    \item $ \bigcup_{i = 1}^M \textbf{P}_i = \textbf{V}$,
    \item $\forall u, v \in \textbf{V}$, if $u$ and $v$ are in separate partitions, then $u$ and $v$ are nonadjacent,
    \item $\forall u, v \in \textbf{V}$ that are nonadjacent in $\gG$, they can be either: 
    \begin{itemize}
        \item Contained on separate partitions, i.e., $u\in \textbf{P}_i$ and $v\in \textbf{P}_j$ with $i \neq j$,
        \item Contained on the same partition $\textbf{P}_k$ which also contains their d-separating set $\textbf{Z}$, such that $u \independent v | \textbf{Z}$ with $\textbf{Z} \subset \textbf{P}_k$ and $u$,$v$, and $\textbf{Z}$ are disjoint. 
    \end{itemize} 
\end{enumerate}
\end{definition}
The intuition behind causal partitionings is that variables are grouped in partitions reflecting how they are causally connected: If two variables are within the same partition, then they are causal neighbors. Importantly, a partition in a Causal partitioning can be further causally partitioned. Nested causal partitions can be arranged in a (causal) partition tree. In the ideal case, a causal partitioning can be computed exactly \citep{zhang2022CAPA}. In practice, computing a causal partitioning is intractable, which forces heuristic estimates as in \cite{zhang2022CAPA}.
 \begin{figure}[t]
    \centering
    \includegraphics[width=0.8\textwidth]{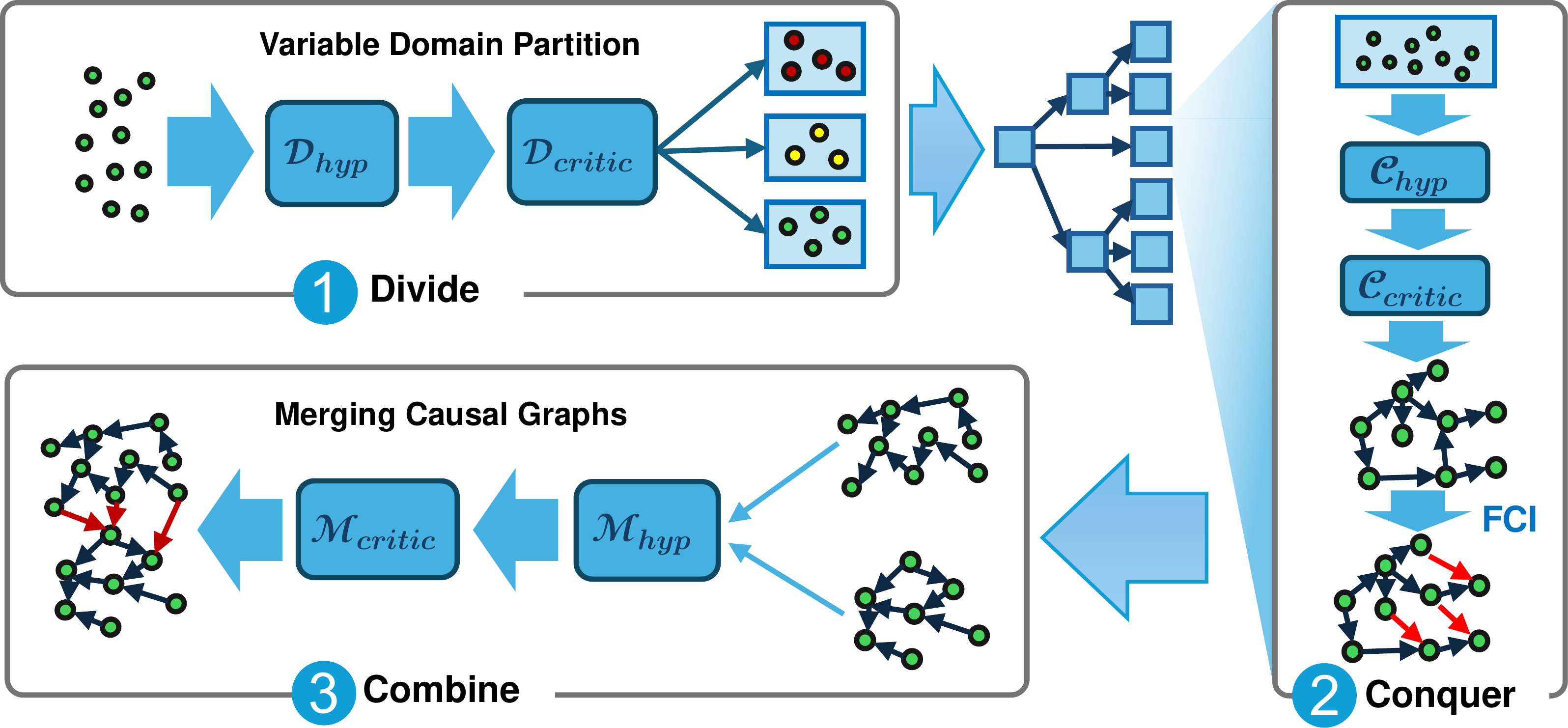}
    \caption{\textbf{Phases of the \ourmethod{} Approach}. \textbf{1) Divide phase:} Causal variables (circles) are iteratively splitted into partitions (rectangles) that are likely causally connected. \textbf{2) Conquer phase:} A local causal graph is derived for each partition. \textbf{3) Combine phase:} Local causal graphs are merged.}
    \label{fig:divide_and_conquer}
\end{figure}

\textbf{Divide-and-Conquer:} The Divide-and-Conquer paradigm \citep{cormen2009algorithms} relies on recursively splitting problems into smaller sub-problems, and successively merging their solutions. Algorithms in this class are typically executed in 3 separate phases: 1) \textit{Divide} phase, during which the problem is iteratively partitioned in sub-problems until they are easy enough to be solved, 2) a \textit{Conquer} step during which each sub-problem is individually solved, 3) a \textit{Combine} step where the solutions of the sub-problems are merged following the order of sub-division in reverse. The power of this class of algorithms is that, upon satisfying a set of mathematical properties, we have guarantees on the accuracy of the solution, and sometimes even on its optimality. Recently, Divide-and-Conquer strategies have been applied to prompting \citep{zhang2024examinationeffectivenessdivideandconquerprompting}.

\subsection*{Related work}\label{sec:related_work}
\textbf{LLMs for Causality:} Formal causal reasoning of LLMs has been investigated in several studies \citep{willig2022foundationmodelstalkcausality,kiciman2024causal,jin2023cladder,zečević2023causalparrotslargelanguage}. While generally being able to perform simple causal reasoning tasks, studies generally find shortcomings, such as memorization and lack of generalization, in the causal reasoning capabilities of LLM.
Further, \citet{vashishtha2024teachingtransformerscausalreasoning} and \cite{antonucci2023zeroshotcausalgraphextrapolation} explore to which extent LLMs can be taught causal reasoning, and show their capacity of extracting causal information from text.
For practical applications, LLMs have been applied for causal discovery and in the design of randomized-control trials \textcolor{reviewer1}{\citep{jiralerspong2024efficientcausalgraphdiscovery, abdulaal2024causalmodelingagents, tu2023causaldiscoveryperformancechatgptcontext, kiciman2024causal, le2025multiagentcausaldiscoveryusing, takayama2025integratinglargelanguagemodels}}. However, challenges persist in high-dimensional settings due to finite context windows and long edge-lists representing causal relationships, which motivates our divide-and-conquer approach. 

\textbf{High-dimensional causality:} Real world data can exhibit high dimensionality and non-identifiable causal relationships, making most causal models computationally prohibitive and inaccurate \citep{tagliapietra2025causalmanphysicsbasedsimulatorlargescale}. Consequently, researchers explored gradient-based methods \citep{NEURIPS2018notears, sanchez2023diffusion}, cluster-DAGs \citep{ijcai2022p675}, divide-and-conquer approaches \citep{wu2025sampleestimateaggregaterecipe, dong2025dcilpdistributedapproachlargescale}, and new variants of classic methods \citep{nazaret2024extremely, andrews2023fast}. Further, Human-in-the-Loop systems are studied in \cite{wang2025causalcopilotautonomouscausalanalysis}, \cite{dasilva2024humanintheloopcausaldiscoverylatent}
and \cite{kitson2023causaldiscoveryusingdynamically}. 

\section{CausalSteward: A multi-agentic causal copilot for causal discovery}\label{sec:causal_copilot}
In this section we describe the architecture and core principles of \ourmethod, which are further highlighted in Fig.\ref{fig:causalsteward_concept}. First, we present the core components of \ourmethod. 
Next, we dive into our Explain-Divide-Conquer-Combine approach. Lastly, we describe how \ourmethod{} retrieves causal information. 

\textbf{Human-in-the-Loop Causal Discovery:} \ourmethod{} is a framework for interactive causal discovery. It assists the experts in building a causal graph through a Human-in-the-Loop system, while automating the retrieval of prior knowledge from external sources. Our approach enables a more natural interaction, overcoming the limitations of manual methods. For example, instead of tediously enumerating edge constraints, which is unfeasible with high-dimensional data, humans can specify patterns of causal relationship or provide only partial causal information. Moreover, \ourmethod{} is responsible for translating human-provided information
into the language of causality, and also for complementing the human when it is uncertain.

\textbf{Causal Discovery through Divide-and-Conquer:} One of the main challenges in causal discovery is dealing with high-dimensional data \citep{brouillard2024landscapecausaldiscoverydata}. \ourmethod{} tackles high-dimensional causality with a divide-and-conquer approach, where variables are recursively assigned to smaller and more manageable partitions. To do so, \ourmethod{} employs LLM agents to decompose the variables into separate smaller and causally-coherent clusters of variables. The clustering of variables is dynamic and iterative: clusters can be subdivided multiple times until the right level of granularity is reached. In \ref{sec:divide} we define the criterion that allows \ourmethod{} to adapt to the complexity of the task. Following, each sub-problem consists of estimating a local causal graph for each partition of variables.
The core intuition is that LLMs perform better on smaller and more targeted tasks \textcolor{reviewer1}{\citep{khot2023decomposedpromptingmodularapproach}}. In practice, variables within each group should be single-themed and refer to a single causal phenomenon. This facilitates the agents during the conquer step, as they can reflect more in detail and perform more targeted RAG (or human) queries to retrieve the necessary causal information. As we will show in the Sec.\ref{sec:experiments}, variable partitioning makes the individual sub-tasks more intuitive for the agents.
In practice, \ourmethod{} is executed in four phases: \textit{Explain}, \textit{Divide}, \textit{Conquer}, and \textit{Combine}. These phases are carried out by a set of specialized agents, each one responsible for a specific task in the causal discovery process. First, during the \textbf{(1) Explain phase}, \ourmethod{} gathers all necessary information before starting any discovery process, then the \textbf{(2) Divide phase} repeatedly partitions the variable set $\textbf{V}$. Afterwards, the \textbf{(3) Conquer phase} produces a local causal graph for every partition. Finally, the \textbf{(4) Combine phase} merges all the local causal graphs.
In Fig.\ref{fig:divide_and_conquer} we provide a graphic depiction of \ourmethod. We highlight that all our agents can be provided with RAG tools and human interaction to dynamically retrieve all the necessary causal information. 

\subsection{Explain Phase}
Prior to the divide-conquer-combine phases, \ourmethod{} implements an explanation step where an Explainer agent $\mathcal{E}$ gathers information about the task and complements the information provided by the user. By doing so, it decreases the ambiguity about the meaning of each variable.
First, the available meta-data (variable labels and dataset description) is provided to the agent. Next, $\mathcal{E}$ outputs an expanded description with details on every single variable. We denote this new information as $I$ and write this step as 
\begin{equation}
     I = \mathcal{E}(\textcolor{reviewer1}{\text{X}_{labels}}, \textbf{p}_{explain}),
\end{equation}
where \textcolor{reviewer1}{$\text{X}_{labels}$} are the variable labels, and $\textbf{p}_{explain}$ is a prompt containing the directions for the agent plus a task description provided by the user.
\subsection{Divide Phase}\label{sec:divide}
The Divide phase recursively partitions the variable set $\textbf{V}$ with the aim of grouping causally related and semantically similar variables together.
This step is carried out by two agents: the Divide-hypothesis agent \textcolor{reviewer1}{$\mathcal{D}_{hyp}$} and the Divide-critic agent \textcolor{reviewer1}{$\mathcal{D}_{critic}$}. Both agents leverage the information gathered by the explainer agent to estimate a causal partitioning of $\textbf{V}$. 
Mathematically, given a partition $\textbf{P}_i\subseteq \textbf{V}$ and a divide prompt \textcolor{reviewer1}{$\textbf{p}_{div.hyp.}$}, the Divide-hypothesis agent $\mathcal{D}_{hyp}$ first proposes a possible partitioning,
\begin{equation}
    \{\textbf{P}_{i,1}, \dots, \textbf{P}_{i,N} \} = \mathcal{D}_{hyp}(\textbf{P}_i, \textbf{p}_{div. hyp.}),
\end{equation}
where $\textbf{P}_{i,k}$ denotes the k-th partition of $\textbf{P}_i$, such that $\bigcup_k \textbf{P}_{i,k} = \textbf{P}_i$. Different partitions can overlap, i.e.\, $\textbf{P}_{i, k} \cap \textbf{P}_{i, l}$ might be non-empty for some $k$ and $l$. 
Following, a second Divide-critic agent is tasked to review (using the prompt \textcolor{reviewer1}{$\textbf{p}_{div.critic}$}) the partitioning and apply corrections if necessary, 
\begin{equation}
    \{\tilde{\textbf{P}}_{i,1}, \dots, \tilde{\textbf{P}}_{i,N} \} = \mathcal{D}_{critic}(\{\textbf{P}_{i,1}, \dots, \textbf{P}_{i,K} \}, \textbf{p}_{div. critic}).
\end{equation}
The causal partitioning process continues recursively until $\mathcal{D}_{hyp}$ stops partitioning further and concludes that the right level of granularity for causal discovery has been reached. In detail, $\mathcal{D}_{hyp}$ and $\mathcal{D}_{critic}$ are queried again on  every partition with more than a number $k$ of variables, and they have to decide whether to partition further or not.
This iterative causal partitioning of the general task in simpler sub-tasks can be arranged into a partition tree, where each node describes a partition and its children are sub-partitions (Fig~\ref{fig:divide_and_conquer}).
\subsection{Conquer Phase}\label{sec:conquer}
The conquer phase performs LLM-aided causal discovery. Once the divide phase estimates a causal partitioning, a conquer step derives a local causal graph for the variables within each partition that is a leaf in the causal partition tree. First, \ourmethod{} employs two LLM agents to formulate a causal hypothesis in form of a graph. Next, a data-driven discovery step utilizes the causal hypothesis as a constraint to improve causal discovery.

{\bf 1) Hypothesis Generation:} 
    After being provided with a description of the dataset, a hypothesis agent $\mathcal{H}$ formulates an initial causal hypothesis \textcolor{reviewer1}{$\mathcal{C}_{hyp}$} for the variables within the partition. Given a partition $\textbf{P}_i$, we have
    \begin{equation}
        \gG^{hyp}_i = \mathcal{C}_{hyp}(\textbf{P}_i, \textbf{p}_{hyp}).
    \end{equation}
{\bf 2) Critic Evaluation:} 
    A critic agent $\mathcal{C}$ evaluates the causal hypothesis $\gG^{hyp}$ and refines it based on its internal prior and available external information. Additionally, the agent is prompted to focus on being right at the interventional and at the counterfactual level.
\begin{equation}\label{eq:conquer_critic}
        \gG^{critic}_i = \mathcal{C}_{critic}(\gG^{hyp}_i, \textbf{p}_{critic})
    \end{equation}
{\bf 3) Data-Driven Discovery:}  A causal discovery algorithm is used to insert additional edges to $\gG^{critic}_i$. In practice, we use the edges in $\gG^{critic}_i$ to bootstrap a constraint-based causal discovery algorithm $\mathcal{F}$. Mathematically,
    \begin{equation}\label{eq:conquer_refinement_1}
       \tilde{\gG}^{critic}_i = \mathcal{F}(\mathtt{D}_i, \gG^{critic}_i), 
    \end{equation}
    where $\mathtt{D}_i$ is a subset of the input dataset corresponding to the variables in $\textbf{P}_i$, and $\tilde{\gG}^{critic}_i$ is the causal graph discovered by $\mathcal{F}$.
    This discovery process complements the prior information provided by LLMs. In real scenarios, causal relationships can be non-identifiable, and prior constraints improve identifiability and overall performance \citep{zheng2024localcausaldiscoverybackground}. In turn, the discovery process may find valid relationships that the LLM failed to extract from knowledge sources.
\subsection{Combine Phase}
The Combine phase merges the local causal graphs generated during the Conquer phase into a single global causal graph. 
Since edges within a partition are already estimated, the combine phase is responsible for merging those graphs. 
As written in Sec.\ref{sec:theory}, if the Divide phase yielded a correct causal partitioning, performing the union of all local causal graphs is sufficient to obtain a consistent global causal graph.
However, a practical challenge persists: if the causal partitioning operation during the divide phase is not accurate, the local causal graphs might not overlap, leading to disconnected components. Therefore, when merging two or more subgraphs, it is important to assess for edges between nodes in one group and another. For this phase, we adopt a procedure similar to the conquer step. This time, a merge-hypothesis agent $\mathcal{M}_{hyp}$ is given a description of the groups, and is queried to generate a causal hypothesis for the missing edges that connect the local causal graphs, which we call \textit{merging causal hypothesis}. Similarly to before, once a first merging causal hypothesis is obtained, a merge-critic agent $\mathcal{M}_{critic}$ has the task to refine it. Given M local subgraphs, we have
\begin{equation}
\begin{split}
    \gG_{comb}^{hyp} &= \mathcal{M}_{hyp}(\gG_1, \dots, \gG_M, \textbf{p}_{comb}),\\
    \gG_{comb}^{critic} &= \mathcal{M}_{critic}(\gG_{comb}^{hyp}, \textbf{p}_{comb}),
    \end{split}
\end{equation}
where $\gG_{comb}^{hyp}$ and $\gG_{comb}^{critic}$ are graphs with nodes $\textbf{V}_{comb} = \bigcup_{i = 1}^M \textbf{V}_i$.
The combine operation is iterated by following the order of subdivision in reverse. In other words, we follow the structure of the partition tree from its leaves to the root.
\subsection{Retrieval of external knowledge}\label{sec:external_knowledge}
An ideal human-in-the-loop copilot can efficiently acquire information by balancing RAG for general knowledge and targeted queries to the human for context-specific information.  Therefore, every agent in \ourmethod{} can gather external knowledge either via retrieval-augmented generation (RAG) or via human interaction. For the \textbf{Human-in-the-loop}, all the agents have the capacity to query a human for additional information. The interaction is made on-demand, and takes the form of a tool-call. Similarly, we also provide to our agents the capacity to dynamically summon an \textbf{Agentic RAG} system on demand. Our RAG system can perform a web search and retrieve relevant information from the internet. We describe the system in detail in appendix \ref{appendix:rag} and present an exemplary application of \ourmethod{} on the ASIA dataset \citep{lauritzen2018asia} in Fig.~\ref{fig:ASIA}.

\section{Theoretical Analysis}\label{sec:theory}

We first study the ideal case. Next we analyze a general scenario. Throughout our analysis, we will always assume \textit{Causal Faithfulness}.

Assuming that $\mathcal{D}_{hyp}$ and $\mathcal{D}_{critic}$ derive a partitioning that is causal by Def.\ref{def:causal_partitioning}, we can prove that \ourmethod{} can reach asymptotically the correct MEC.
\begin{theorem}[Consistency for an ideal causal partitioning]\label{th:asymptotic_consistency}
    Let $\gG_i$ be the local PAG obtained after performing the conquer phase on variables in partition $\textbf{P}_i$, and denote with $\gG$ the graph obtained by performing the union of all local causal graphs $\gG_i$, $\forall i$. If all conquer phase is consistent, then also $\gG$ is a PAG representing the MEC entailed by the observational data.\end{theorem}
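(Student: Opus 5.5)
The plan is to reduce the global statement to two separate checks on the union graph $\gG=\bigcup_i \gG_i$: a \emph{skeleton} check and an \emph{orientation} check. Throughout, the partitioning $\textbf{P}=\{\textbf{P}_1,\dots,\textbf{P}_M\}$ is assumed causal in the sense of Def.~\ref{def:causal_partitioning}, and each conquer step is consistent. Consistency here means that, as the sample size grows, $\mathcal{F}$ returns the PAG entailed by the data restricted to $\textbf{P}_i$. By faithfulness, this PAG encodes exactly the conditional independencies among $\textbf{P}_i$ that hold in the global distribution.

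\textbf{Step 1 (skeleton).} Let $u,v$ be adjacent in the true PAG $\mathcal{P}$ of $\textbf{V}$. By condition 2 of Def.~\ref{def:causal_partitioning}, $u$ and $v$ lie in a common partition $\textbf{P}_k$. No set $\textbf{Z}\subseteq\textbf{V}\setminus\{u,v\}$ separates them, so in particular no $\textbf{Z}\subseteq\textbf{P}_k$ does. Hence $u$ and $v$ remain adjacent in $\gG_k$, and therefore in $\gG$.

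Conversely, let $u,v$ be non-adjacent in $\mathcal{P}$. By condition 3, either they share no partition, in which case no $\gG_i$ contains the edge, or every partition containing both also contains a separating set. In the latter case the consistent local procedure removes the edge in every such $\gG_i$. One point needs care: condition 3 must be read as holding for \emph{every} common partition, not just some. If it only guarantees some common partition, I would strengthen the reading accordingly.

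Conclusion of Step 1: $\gG$ and $\mathcal{P}$ have the same adjacencies, and each local adjacency agrees with the global one on $\textbf{P}_i$.

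\textbf{Step 2 (orientation).} Unshielded triples $u - w - v$ of $\mathcal{P}$ have $u,w$ and $w,v$ co-partitioned, but the whole triple need not sit in one partition. I would handle this case by case.
\begin{itemize}
\item If $u$ and $v$ share a partition, the separating set lies in it, and $w$ must belong to that partition whenever $w$ is needed in the separating set. The collider status of $w$ is then read off consistently, because the local sepset agrees with the global one, which faithfulness guarantees.
\item Otherwise, I will argue that the marginal independencies of $\textbf{P}_i$ reproduce the same arrowheads. The idea is that each local PAG is the PAG of the marginal MAG over $\textbf{P}_i$. Marginalisation preserves ancestral relations, so any arrowhead or tail invariant in the local MEC is also invariant globally.
\end{itemize}
Finally, I will show that the union contains no conflicting marks. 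An edge appearing in two partitions receives the same mark, since both marks encode the true ancestral relation. The union therefore yields a well-defined mixed graph that equals $\mathcal{P}$.

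\textbf{Main obstacle.} The hard part is Step 2. Marginal PAGs are in general \emph{less} informative than the global PAG: circle marks locally may become arrowheads globally through orientation rules that use variables outside $\textbf{P}_i$. I would first try to show that a causal partitioning contains every discriminating path and unshielded collider inside a single block. Failing that, I would state the result up to the FCI orientation rules reapplied on $\gG$. Under that weaker statement, the skeleton and sepsets from Step 1 suffice, and a final global run of FCI's orientation rules gives exactly $\mathcal{P}$.
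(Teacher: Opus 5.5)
Your Step 1 is sound, and your worry about condition 3 is justified. As worded, it only guarantees \emph{one} common block containing a separating set, so a literal union can keep a spurious edge coming from another overlapping block. The paper's conflict-management rule, which deletes an edge as soon as one overlapping block removes it, is what actually makes the skeleton correct.

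The genuine gap is Step 2, and neither of your routes can close it, because under Def.~\ref{def:causal_partitioning} the orientation part of the statement is false. Take $X\rightarrow Y\leftarrow Z$ with $\mathbf{P}_1=\{X,Y\}$ and $\mathbf{P}_2=\{Y,Z\}$. This is a causal partitioning: $X,Z$ are nonadjacent and never share a block, which is case (3a). The consistent local PAGs are $X\circ\!\!-\!\!\circ Y$ and $Y\circ\!\!-\!\!\circ Z$, so the union is $X\circ\!\!-\!\!\circ Y\circ\!\!-\!\!\circ Z$. The true PAG, however, is $X\circ\!\!\rightarrow Y\leftarrow\!\!\circ Z$. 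The chain $X\rightarrow Y\rightarrow Z$ with the same blocks gives identical local outputs, so nothing computed from the local PAGs alone can recover the global one.

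This refutes your first route outright, since (3a) explicitly lets an unshielded collider straddle blocks. It also breaks your first bullet. Adding the block $\{X,Z\}$, which contains their separating set $\emptyset$, keeps the partitioning causal, yet the collider stays invisible: $w$ is forced into the block of $u,v$ only when it is a \emph{non}-collider.

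Your fallback fails for the same reason. For pairs that never share a block, Step 1 produces no separating set, and that is exactly what FCI's collider and discriminating-path rules consume. So ``the skeleton and sepsets from Step 1 suffice'' is false without additional cross-block CI tests. Your marginalisation argument establishes only the converse direction: every arrowhead or tail a local PAG puts on a true adjacency is also in the global PAG. Hence overlapping blocks can disagree only as circle versus non-circle, never contradictorily; they need not carry the same mark, as you claim.

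The paper's own proof does not supply the missing step either. It splits the CI statements into $\mathbf{C}_1$, $\mathbf{C}_2$, $\mathbf{C}_{1-2}$ and argues that (3b) and (3a) ``conserve'' them. That is essentially your Step 1, and orientation is never discussed. For $\mathbf{C}_{1-2}$ it only asserts that some $\mathbf{Z}\subset\mathbf{P}_1\cup\mathbf{P}_2$ exists, without deciding whether the shared node lies in it. Its Example reads $A\perp C\mid B$ off the blocks $\{A,B\}$, $\{B,C\}$, which is false for $A\rightarrow B\leftarrow C$ with the same blocks.

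What can actually be proved is that the merged graph has the correct skeleton and that all its non-circle marks are correct. Recovering the full PAG requires an extra hypothesis and an extra step, for example:
\begin{itemize}
\item separating sets are recorded during the conquer phase;
\item they are available, from some block or from additional cross-block tests, for every nonadjacent pair the orientation rules query, i.e.\ endpoints of unshielded triples and of discriminating paths;
\item the combine phase ends with a single global pass of FCI's orientation rules over the merged skeleton.
\end{itemize}
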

\textcolor{reviewer1}{In our case, since FCI is consistent in the large-sample limit, Th.\ref{th:asymptotic_consistency} implies that \ourmethod{} inherits said asymptotic consistency.} In general applications, however, Divide agents might not yield a correct causal partitioning and the conquer agents might have inaccuracies. Therefore, we provide a general analysis on how the total SHD is influenced by each component in non-ideal scenarios.
\begin{theorem}
Let $\textbf{P}$ be a partitioning induced by $\mathcal{D}_{hyp}$ and $\mathcal{D}_{critic}$. Define $SA$ as the number of spurious adjacencies, $E_{cut}$ and $N_{NotIntra}$ the number of inter-partition edges present resp. absent in the ground truth.  
Let $FPR_{\mathcal{C}}$ denote the false-positive rate of Conquer agents, and let $Rec_{\mathcal{M}}$ and $FDR_{\mathcal{M}}$ denote respectively the recall and false-discovery rate of Merge agents.  
Then, as $n \to \infty$, the expected total structural Hamming distance satisfies
    \begin{equation}\label{th:real_world_estimate}
        \mathbb{E}[SHD_{total}] \approx (SA + N_{NotIntra} \cdot FPR_{\mathcal{C}}) + (E_{cut} (1 - Rec_{\mathcal{M}}) + m \cdot FDR_{\mathcal{M}}).
    \end{equation}
\end{theorem}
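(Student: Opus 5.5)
We decompose the structural Hamming distance edge by edge and then take expectations. The unordered pairs $\{u,v\}\subseteq\textbf{V}$ split into two disjoint classes. \emph{Intra-partition} pairs have both endpoints in a common leaf partition $\textbf{P}_k$ and are handled by the Conquer phase. \emph{Inter-partition} pairs have no common leaf and are handled only by the Merge agents $\mathcal{M}_{hyp},\mathcal{M}_{critic}$. Since SHD counts mismatching pairs, it is additive over this split:
\begin{equation*}
SHD_{total}=SHD_{intra}+SHD_{inter}.
\end{equation*}
By linearity of expectation we can bound each term separately. This produces the two parenthesised groups in the statement.

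\textbf{Intra-partition term.} For the intra-partition term we take $n\to\infty$ and use faithfulness together with the consistency of $\mathcal{F}$ (FCI), as in Th.~\ref{th:asymptotic_consistency}. Within each $\textbf{P}_k$, a pair is removed exactly when a separating set exists \emph{inside} $\textbf{P}_k$. True edges are never removed, because no separating set exists for them. The only data-driven errors are therefore non-adjacent pairs whose d-separating set is not contained in $\textbf{P}_k$, i.e.\ pairs violating condition 3 of Def.~\ref{def:causal_partitioning}. These are precisely the spurious adjacencies counted by $SA$, and they contribute deterministically. The remaining intra-partition errors come from the LLM prior: edges in $\gG^{critic}_k$ that $\mathcal{F}$ keeps as constraints. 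We model each truly absent intra-partition pair as being wrongly asserted by the Conquer agents independently with probability $FPR_{\mathcal{C}}$. Summing over these pairs gives $\mathbb{E}[SHD_{intra}]\approx SA+N_{NotIntra}\cdot FPR_{\mathcal{C}}$. Here $N_{NotIntra}$ must be read as the count of absent pairs that the Conquer agents actually evaluate; we will state this interpretation explicitly.

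\textbf{Inter-partition term.} There are $E_{cut}$ true edges crossing partitions, and no Conquer step sees them. The Merge agents recover each with probability $Rec_{\mathcal{M}}$, so the expected number of missed true edges is $E_{cut}(1-Rec_{\mathcal{M}})$. Let $m$ be the number of inter-partition edges proposed by the Merge agents. By definition of the false-discovery rate, the expected number of false proposals is $m\cdot FDR_{\mathcal{M}}$. Adding the two gives the second group.

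\textbf{Main obstacle.} The hard part is justifying the ``$\approx$'' rather than an exact equality, and we will make the required simplifications explicit as assumptions. First, we ignore orientation errors, counting only adjacency mismatches, or we absorb them into the rates. Second, we treat $SA$ as disjoint from the LLM false positives so there is no double counting; otherwise the exact value is bounded above by the sum. Third, we treat the rates as per-pair averages that are independent of the partition structure, so that expectations factor as rate times count. Fourth, we assume overlapping partitions do not create conflicting decisions on the same pair: each pair is assigned to one class, with intra-partition taking precedence. Under these assumptions the decomposition is exact in expectation. Without them, the argument only yields an upper bound via a union bound, which we will note.
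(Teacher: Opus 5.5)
Your proposal is essentially the paper's proof. It splits $\mathbb{E}[SHD_{total}]$ into intra- and inter-partition parts, uses large-sample FCI consistency under faithfulness to remove the intra-partition false negatives so the intra term is $SA+\mathbb{E}[FP_{LLM}]=SA+N_{NotIntra}\cdot FPR_{\mathcal{C}}$ (the paper also uses $N_{NotIntra}$ as you read it, as the denominator of the Conquer false-positive rate), and reads off $E_{cut}(1-Rec_{\mathcal{M}})+m\cdot FDR_{\mathcal{M}}$ from the definitions of recall and false-discovery rate, with your listed simplifications being the ones the paper leaves implicit. One correction to your closing remark: relaxing the adjacency-only convention gives an under-count, not an upper bound, since orientation mismatches only add to $SHD_{total}$ (and they can persist as $n\to\infty$, because marks in a local PAG over $\textbf{P}_k$ need not match the global PAG); only relaxing the double-counting simplifications, e.g.\ disjointness of $SA$ and the LLM false positives, yields an upper bound.
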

We observe above how the performance of \ourmethod{} is linked to the accuracy of $\mathcal{D}_{hyp}$\&$\mathcal{D}_{critic}$ to derive a correct causal partitioning. Still, merge agents $\mathcal{M}_{hyp}$\&$\mathcal{M}_{critic}$ provide a safety-net that recover edges lost during the partitioning operation. 
Further, the performance of \ourmethod{} is tied to the capacity of LLMs to provide the correct edge constraints to the causal discovery algorithm $\mathcal{F}$ during the conquer phase. Therefore, it is crucial for \ourmethod{} to access clear and reliable sources of information, which is why our agents leverage RAG and HITL, as described in Sec.\ref{sec:external_knowledge}. We remark that the human-in-the-loop influences all agents performance.
All our proofs \textcolor{reviewer1}{are} in appendix \ref{appendix:theory}.

\begin{table}[t]
\centering
\resizebox{\textwidth}{!}{
    \begin{tabular}{l|c|ccc|ccc}
   &Dataset & \multicolumn{3}{c}{Small} & \multicolumn{3}{c}{Medium} \\
  \midrule
   &Method & F1 $\uparrow$ & Prec.$\uparrow$ & SHD $\downarrow$ & F1 $\uparrow$ & Prec.$\uparrow$ & SHD $\downarrow$ \\
  \midrule
  \multirow{7}{*}{\rotatebox{90}{~}} &
               FCI & 0.028(0.012) & 0.021(0.005)  & 221.4(3.36) & \multicolumn{3}{c}{Intractable} \\
             &XGES & 0.056(0.009) & 0.037(0.006) & 448.8(14.89) & 0.014(0) & 0.211(0) & 547(0) \\
             &GranDAG & 0.002(0.003) & 0.018(0.040) & 113.4(2.51) & 0(0) & 0(0) &\cellcolor{green!45}\textbf{544.4(5.77)} \\
             &CausalCopilot & 0.078(0.012) & 0.260(0.100) & 117.4(6.4) & 0.014(0) & 0.211(0) & 547(0) \\
             &BOSS & 0.054(0.004) & 0.029(0.002) & 864.20(18.60) & 0.057(0.005) & 0.030(0.002) & 1,817.4(48.3) \\
             &LLM-BFS(o3-mini) & 0.283(0.000) & \cellcolor{green!45}\textbf{1.000(0.000)} & 315.2(0.0) & \textemdash & \textemdash & \textemdash \\
             &Pairwise(Qwen3-14B) & 0.153(0.008) & 0.093(0.005) & 583.50(11.1) & \textemdash & \textemdash & \textemdash \\
    \midrule
     \multirow{3}{*}{\rotatebox{90}{o3-mini}}
     &CS (HITL+RAG) &  0.303(0.076)  & 0.454(0.123) & 118.4(11.35) & 0.141(0.064)  & 0.357(0.119) &579.6(25.01) \\
             &CS (HITL) & 0.304(0.062) & 0.485(0.101) & 112.2(10.40)  & 0.175(0.084) & 0.499(0.127) & 558.8(15.498) \\
             &CS (RAG) & 0.235(0.093) & 0.373(0.118) & 121.4(6.34) & 0.209(0.089)  & 0.518(0.201) & 571.6(45.845) \\
            \midrule
    \multirow{3}{*}{\rotatebox{90}{o4-mini}}&
    \ourmethod{} (HITL+RAG)& 0.142(0.027) & 0.217(0.063) & 140.4(13.00) & 0.086(0.042) & 0.271(0.113) & 568.2(20.71) \\
             &\ourmethod{} (HITL) & 0.139(0.056)  & 0.326(0.149) & 126.8(15.93) & 0.103(0.039) & 0.392(0.083) & 547.0(6.89) \\
             &\ourmethod{} (RAG) & 0.113(0.039) & 0.169(0.047) &   145.0(14.35) & 0.113(0.049) & 0.296(0.094) & 594.2(81.69)\\
              \midrule
  \multirow{3}{*}{\rotatebox{90}{\small Qw\hspace{-1.5pt}.3 14B}}&
    \ourmethod{} (HITL+RAG)& 0.424(0.082) & 0.724(0.169) & 103.00(10.07) & \cellcolor{green!45}\textbf{0.274(0.052)} & \cellcolor{green!45}\textbf{0.656(0.110)} & 558.0(21.86)\\
             &\ourmethod{} (HITL) & \cellcolor{green!45}\textbf{0.483(0.062)} & 0.608(0.069) & \cellcolor{green!45}\textbf{102.4(17.27)} & 0.266(0.019) & 0.508(0.222) & 610.80(68.00)\\
             &\ourmethod{} (RAG) & 0.426(0.033) & 0.656(0.176) & 111.4(14.44) & 0.211(0.106) & 0.485(0.129) & 568.4(23.64) \vspace{2mm}\\
    \end{tabular}
}
    \caption{\textbf{Results on CausalMan Small and Medium.} \ourmethod{} variants largely outperform other baselines on Small and strong competitive results are achieved on Medium. CausalCopilot executed FCI on Small and XGES on Medium. FCI, LLM-BFS and LLM-Pairwise could not run on Medium.}\label{table:results}
\end{table}

\section{Experiments}\label{sec:experiments}

We benchmark on the CausalMan dataset \citep{tagliapietra2025causalmanphysicsbasedsimulatorlargescale} and on the Neuropathic-Pain dataset \citep{tu2019neuropathicpaindiagnosissimulator}. We evaluate $F_1$, precision, and Structural Hamming Distance (SHD). We repeat experiments with 5 different seeds, and compute mean and standard deviation for every metric. We also count the number of tool calls to RAG and HITL. \textcolor{joint}{HITL experiments on CausalMan have been carried on with the support of real domain experts in manufacturing.} Extended tables in \ref{appendix:benchmarking_setting}.

\textbf{Large-Language Models:} We test three different LLMs: GPT 4o-mini \citep{openai2024gpt4ocard}, o3-mini \citep{openai2024o3minicard}, and Qwen3-14B \citep{qwen3}. Prompt templates are in appendix \ref{appendix:prompts}. Graphs are encoded in form of node- and edge-lists \citep{fatemi2023talklikegraphencoding}. No fine-tuning is performed.

\textbf{Ablations:} We analyze the contribution of human interaction by testing three variants of \ourmethod. One with both HITL and RAG active (HITL+RAG), and two with only either HITL or RAG enabled. \ourmethod(RAG) features no human interaction, and is thus completely automatic. We additionally ablate with respect to the Divide\&Conquer approach, with respect to the Critic mechanism. In appendix \ref{appendix:additional_results} we also include an ablation with respect to the partitioning hyperparameter $k$.

\textbf{Baselines:} We compare with XGES \citep{nazaret2024extremely}, GranDAG \citep{DBLP:conf/iclr/LachapelleBDL20}, \citep{andrews2023fast}, FCI \citep{spirtes01fci}, BOSS \citep{andrews2023fast}. We include also LLM-based approaches as CausalCopilot \citep{wang2025causalcopilotautonomouscausalanalysis}, LLM-BFS \citep{jiralerspong2024efficientcausalgraphdiscovery}, and LLM-Pairwise \citep{kiciman2024causal}.

\section{Empirical Analysis}
We focus our analysis on the performance, scalability and capacity to use prior knowledge from RAG and HITL. 
Therefore, we assess \ourmethod{} with the following research questions: \textbf{(Q1)} Can \ourmethod{} tackle causal discovery?; and \textbf{(Q2)} Can \ourmethod{} scale?; and \textbf{(Q3)} \textcolor{reviewer3}{Do computational requirements scale tractably with the size of the graph?}; and finally \textbf{(Q4)} How beneficial are RAG and HITL to \ourmethod{}?
\begin{figure}[t]
    \centering
\includegraphics[width=0.95\textwidth]{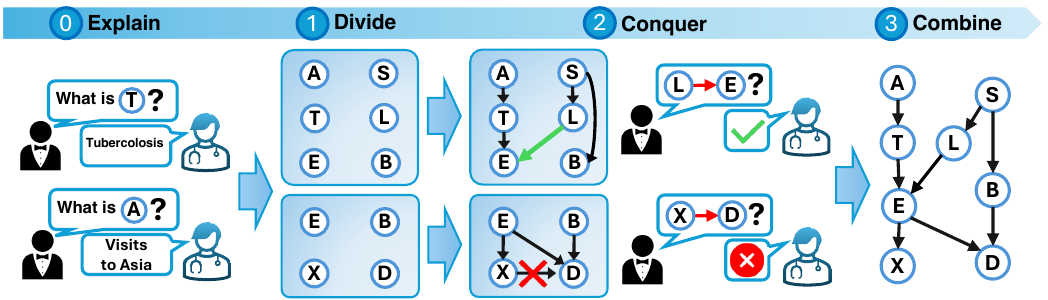}
    \caption{Illustration for \ourmethod{} on the ASIA dataset \citep{lauritzen2018asia}. First, the \textbf{Divide phase} partitions the variable set $\textbf{V}$ in $\textbf{P}_1=\{A, T, E, L, S, B\}$ and $\textbf{P}_2=\{E, X, D, B\}$. Next, during the \textbf{Conquer phase}, a local causal graph is estimated. Further, agents query the human to check whether $L\rightarrow E$, and if $X \rightarrow D$. After confirming the first and denying the latter, a causal discovery algorithm finalizes a local causal graph $\gG_1$ for $\textbf{P}_1$ and $\gG_2$ for $\textbf{P}_2$. Next, the \textbf{Combine phase} merges $\gG_1$ and $\gG_2$ into the global graph $\gG$.}
    \label{fig:ASIA}
\end{figure}

\subsection{Results on manufacturing and Neuropathic Pain data}\label{sec:experiments_causalman}
\textbf{Setup:} We test on 2 variants of the CausalMan benchmark \citep{tagliapietra2025causalmanphysicsbasedsimulatorlargescale}: CausalMan Small (53 nodes, 108 edges), and CausalMan Medium (186 nodes, 553 edges), which exhibit causal insufficiency, mixed data-types, and diversified causal mechanisms. We also test on the Neuropathic-Pain dataset (222 nodes, 770 edges), which is causally sufficient with binary variables. All those benchmarks have an expressive variable naming where valuable information can be extracted with LLMs. Further, they have repeated causal sub-structures. More details in appendix \ref{appendix:causalman_simulator}.

\textbf{Results:} See Table \ref{table:results} (extended in appendix \ref{appendix:additional_results}).
\ourmethod{} considerably improves over purely data-driven baselines, outperforming for precision and F1. Improvements for SHD are not as significant, and GranDAG is close. \textcolor{reviewer1}{Overall, results answer \textbf{(Q1)} positively, with margin for improvement in terms of recall, which signals that \ourmethod{} tends to discover sparse graphs.}  Regarding scalability, \textcolor{reviewer3}{the $F_1$ decrease in performance from CausalMan Small to Medium is only sub-linear (by a factor of $\approx 2$) for \ourmethod{}, whereas most data-driven baselines (XGES, GranDAG, CausalCopilot) degrade by several times. BOSS exhibits stronger robustness, but absolute metrics are still inferior with respect to \ourmethod{}. Overall, results indicate that \ourmethod{} scales robustly while keeping good performance, affirming \textbf{(Q2)}}. We also attribute this to the redundancy in CausalMan Medium: It contains many repeated patterns in variable naming/metadata, which are easy for LLMs to detect. 
The (HITL+RAG) variant provides a good trade-off between automation and control, which becomes more evident when using models with good instruction-following capabilities. Although they are prompted to make use of tools and balance HITL with RAG, we see how o3-mini struggles to follow those directions. For Qwen3-14B, instead, we see a significant improvement when the Human is inserted into the system, supporting \textbf{(Q4)} if the right LLMs are used. We notice that the closest baselines are the LLM-based. For LLM-BFS we observe a perfect precision, which is counterbalanced by low recall and high SHD, signaling a very conservative approach in inserting causal edges.
On the computational side, most classic methods (e.g.\, FCI) are intractable in high-dimensional settings. Our divide-and-conquer approach maintains a sub-quadratic growth from CausalMan Small to Medium both in token usage and runtime. Runtime indeed stays under 2 hours for all LLMs (Sec.\ref{appendix:causalman_medium_results}). Therefore, we confirm \textbf{(Q3)}. LLM-based baselines are limited in large-scale settings: LLM-pairwise requires $\mathcal{O}(D^2)$ queries \textcolor{reviewer1}{(with $D=\text{number of variables}$)}, which makes it applicable on CausalMan Small, but unfeasible on other benchmarks. LLM-BFS has similar scaling behavior, and is also stopped by the limited context window.

\begin{figure}[t]
\centering
\begin{minipage}{0.44\textwidth}
\centering
\begin{tikzpicture}
\begin{groupplot}[
    group style={group size=1 by 1, vertical sep=0pt},
    height=3.2cm, width=\linewidth,
    legend columns=6,
]
\nextgroupplot[
    ybar, bar width=10pt,
    symbolic x coords={RAG,RAG+HITL,HITL},
    xtick=data,
    ylabel={$SHD$},
    ymin=100,ymax=147,
    enlarge x limits=0.2,
    ymajorgrids,
    legend style={at={(0.43,1.1)}, anchor=south, font=\small, draw=none},
]
\addplot[fill=viridis1, draw=none] coordinates {(RAG,145) (RAG+HITL,140.4) (HITL,126.8)};
\addplot[fill=viridis2, draw=none] coordinates {(RAG,121.4) (RAG+HITL,118.4) (HITL,112.2)};
\addplot[fill=viridis3, draw=none] coordinates {(RAG,111.4) (RAG+HITL,103) (HITL,102.4)};
\legend{GPT4o-mini,o3-mini,Qwen3-14B}
\end{groupplot}
\end{tikzpicture}
\end{minipage}%
\hfill
\begin{minipage}{0.55\textwidth}
\resizebox{.95\textwidth}{!}{
\centering
\begin{tabular}{c|c|c|c|c}
  \toprule
  & Method & $F_1\uparrow$ & Prec.$\uparrow$ & SHD $\downarrow$ \\
  \midrule
  \multirow{3}{*}{\rotatebox{90}{Small}}  
    & \ourmethod{} & \cellcolor{green!45}\textbf{0.42(0.08)}& \cellcolor{green!45}\textbf{0.72(0.17)}& \cellcolor{green!45}\textbf{103.0(10.1)}\\
    & no D\&C/FCI & 0.34(0.06)& 0.34(0.08)& 199.0(20.3) \\
    & no Critic & 0.29(0.09)& 0.38(0.16)& 139.6(20.9) \\
     & no Explainer &0.239(0.107) & 0.45(0.12) & 120.20(16.94)\\
  \midrule
  \multirow{3}{*}{\rotatebox{90}{Medium}}  
    & \ourmethod{} &0.27(0.05)& \cellcolor{green!45}\textbf{0.66(0.11)}& 558.0(21.9)\\
    & no D\&C/FCI & \textemdash & \textemdash & \textemdash \\
    & No Critic &  \cellcolor{green!45}\textbf{0.31(0.02)} & 0.56(0.13)&  588.0(43.2)\\
    & No Explainer & 0.15(0.04) & 0.59(0.16) & \cellcolor{green!45}\textbf{540.00(19.89)} \\
  \bottomrule
\end{tabular}}
\end{minipage}
\caption{\textbf{1) SHD on CausalMan Small for different LLMs (left).} Models good at instruction-following (Qwen3 or GPT4o-mini) can use RAG and HITL to increase their performance. \textbf{2) Ablations of D\&C and Critic agents (right).} Removing D\&C prevents scaling: without partitioning, inference slows and accuracy drops. Without Critic agents, precision and SHD degrade because wrong causal edges remain uncorrected. Results on Qwen3-14B.}
\label{fig:performance_ablations}
\end{figure}

\textbf{Ablation for Divide\&Conquer} In \textcolor{reviewer2}{Fig.\ref{fig:performance_ablations}}, we ablate with respect to the divide\&Conquer mechanism. By disabling the Divide phase, we encounter two major limitations: 1) The agents $\mathcal{C}_{hyp}$ and  $\mathcal{C}_{critic}$ must now estimate a causal graph for the whole variable set $\textbf{V}$, which leads to context overflows and slower inference, and 2) FCI is also required to estimate a causal graph for $\textbf{V}$ as well, which is computationally prohibitive. 
To carry on the ablation, therefore, we deactivated also FCI, resulting in a Conquer phase solely based on LLMs. Still, without Divide\&Conquer, experiments on CausalMan Medium and Neuropathic were not feasible, similarly to LLM-BFS and LLM-Pairwise. On CausalMan Small, the Divide\&Conquer approach is still beneficial for SHD and precision. 

\textbf{Ablation for Critic Agents} By removing $\mathcal{D}_{critic}$, $\mathcal{C}_{critic}$, and $\mathcal{M}_{critic}$, we see in Table \ref{fig:performance_ablations} that precision and SHD drop consistently on all datasets. Hypothesis agents $\mathcal{D}_{hyp}$, $\mathcal{C}_{hyp}$, or $\mathcal{M}_{hyp}$ alone tend to generate denser graphs with higher recall (coverage), but lower precision. We deduce that critic agents enable the control of false positives (better $FPR_{\mathcal{C}}$), acting as a filter that improves factuality.

\textcolor{reviewer2}{\textbf{Ablation for Causal Discovery Algorithm} During the Conquer Phase, $\mathcal{C}_{critic}$ outputs a set of edges that are used as constraints to a causal discovery algorithm. To validate our approach, we ablated with respect to the Causal Discovery algorithm, and now we test also the gradient-based DAGMA \citep{bello2022dagma}. We additionally test both DAGMA and FCI with and without edge constraints. For DAGMA, those constraints are inserted post-training. Metrics are evaluated for each individual partition, and averaged across all partitions and across all runs. Results in Table \ref{fig:real_world_and_conquer_ablations} show that both algorithms improve upon using the agents' constraints.}

\subsection{Real-World Case-Studies}
Finally, we apply \ourmethod{} (with GPT-5) to the CausalChambers benchmarks \citep{gamella2024causalchambers}, and focus on the light-tunnel device. \textcolor{reviewer3}{Results are in \ref{fig:real_world_and_conquer_ablations}.} We obtained a True-Positive-Ratio of 0.89 and a False-Positive-Ratio of 0.04, showing that \ourmethod{} can valuably support experts during causal discovery. 
During execution, it performed multiple HITL queries, focused (at explain-phase) on clarifying some cryptic naming of variables, and later (during conquer-phase) on gathering details on the physical system.

Interestingly, the resulting graph this time is denser than the ground truth, and the conquer agents often provide coherent arguments for those additional links. This leads to higher SHD (76), which occurs because the model proposes additional causal sub-structures related to the electronics of the device, whose are absent in the ground truth graph. Those edges might either be unmodeled effects of the system, or overly detailed hypotheses (i.e.\, negligible causal relationships), which in both cases highlight \ourmethod{}'s capacity to retrieve potential relevant domain mechanisms that expert can then evaluate afterwards.
Additional case study on the Earthquake dataset is reported in Appendix \ref{appendix:additional_results}.
\begin{figure}[t]
\centering
\begin{minipage}{0.5\textwidth}
\resizebox{.95\textwidth}{!}{
\centering
\textcolor{reviewer2}{
\begin{tabular}{c|c|c|c}
  \toprule
   Method & $F_1\uparrow$ & Prec.$\uparrow$ & SHD $\downarrow$ \\
  \midrule
    DAGMA  & 0.1(0.3) & 0.1(0.3)  & 4.5(1.8) \\
   DAGMA+Agents & \cellcolor{green!45}\textbf{0.6(0.3)}  & \cellcolor{green!45}\textbf{0.4(0.4)}  & \cellcolor{green!45}\textbf{3.1(2.9)}\\
   FCI  & 0  &0  &4.6(4.7)\\
   FCI+Agents  & 0.3(0.4)   & 0.4(0.5) & 4.3(5.3) \\
  \bottomrule
\end{tabular}}}
\end{minipage}%
\hfill
\begin{minipage}{0.49\textwidth}
\resizebox{.95\textwidth}{!}{
\centering
\textcolor{reviewer3}{
\begin{tabular}{c|c|c|c}
  \toprule
   Method & $F_1\uparrow$ & Prec.$\uparrow$ & SHD $\downarrow$ \\
  \midrule
   FCI & 0.059(0.011) & 0.054(0.009)  & 84.4(3.6) \\
   GranDAG & 0.032(1e-4) & 0.233(0.037)  & \cellcolor{green!45}\textbf{59.4(0.9)}\\
   XGES & 0.053(1e-4) & 0.091(1e-4)  & 81(0.0)\\
   BOSS & 0.091(1e-4) & 0.081(1e-4) & 82(0.0) \\
      \ourmethod{} & \cellcolor{green!45}\textbf{0.411(0.086)} & \cellcolor{green!45}\textbf{0.315(0.059)}  & 96.6(11.81)\\
  \bottomrule
\end{tabular}}}
\end{minipage}
\caption{\textbf{1)} Ablation with respect to tha Causal Discovery algorithm used during the Conquer Phase. "+Agents" indicates that the edge constraints obtained by $\mathcal{C}_{hyp.}$ and $\mathcal{C}_{critic}$ are used. Using the agents constraints improved performance both for FCI and DAGMA. \textbf{2)} Results on CausalChambers real-world dataset. }
\label{fig:real_world_and_conquer_ablations}
\end{figure}

\section{Conclusions}\label{sec:conclusions}
\ourmethod{} is an AI copilot and to our knowledge it is the first one combining a Human-In-The-Loop approach, a multi-agent architecture, and the Divide-and-Conquer paradigm for causal discovery. In our experiments we demonstrated considerable performance improvements over existing data-driven methods and empirically showed how \ourmethod{} can integrate human feedback into causal discovery. 

\textbf{Limitations and future Work:} The performance of \ourmethod{} is limited by the availability and quality of prior information. Additionally, \ourmethod{} could integrate retrieval-augmented generation (RAG) from more varied sources of prior knowledge, which could further enhance its capabilities. Indeed, \ourmethod{} can be extended to use a wider set of tools and different LLM agents. Finally, this work could be extended beyond causal graphs, aiming towards more general causal reasoning tasks. 

\textcolor{joint}{
\section*{Ethics Statement}
We acknowledge that the behavior of \ourmethod{}~, being LLM-based, is sensible to data-leaks and/or bias in the content retrieved from the internet. Further, human accountability is necessary for who uses such systems, who need to be aware of the privacy risks of providing sensible data to agents.}
\textcolor{joint}{
\paragraph{LLM-Usage for writing} We acknowledge that LLM-based tools were used to improve the writing of this manuscript. The usage was limited to grammatical suggestions, phrase structure, and word choice. The design of the method or the experimental setting, instead, were not aided by any LLM.  
}
\textcolor{joint}{
\section*{Reproducibility Statement}
We provide the complete code-base to run each experiment that is present in this manuscript, and we report in Appendix \ref{appendix:benchmarking_setting} the computational resources that were used including hardware, runtime and token-count. For HITL experiments, attempts to reproduce results should take into account that we conducted such experiments with the help of human domain experts. This is a necessary requirement to be as close as possible to how \ourmethod{} is used in new real world applications.}

\bibliography{iclr2026_conference}
\bibliographystyle{iclr2026_conference}

\appendix
\newpage

\section*{Appendix for ``CausalSteward: An Agentic Divide-Conquer-Combine Copilot for Causal Discovery''}

\section{Theory}\label{appendix:theory}

We first provide a theory describing the ideal case, i.e.\, what happens when we have perfect agents which can derive a causal partitioning satisfying Def.\ref{def:causal_partitioning} without any error, while also having a perfect conquer phase. Next, we provide a generalization, where we provide error bounds and also an expression for the final SHD in terms of the performance of each agent.

\subsection{Ideal Case of Perfect Causal Partitioning}

We start from the assumption that the conquer phase is consistent in the infinite data limit.
\begin{assumption}[Consistency of the Conquer phase]\label{assumption_1}
    In the infinite data limit, for $n \rightarrow + \infty$, the conquer phase yields a PAG representing the local MEC $M_i$ for the variables within $\textbf{P}_i$.\end{assumption}
Assumption \ref{assumption_1} is valid for algorithms such as FCI, and is still valid if edge constraints provided by the LLMs are correct. When consistency of the conquer phase is assumed, then consistency of the overall method follows, as we prove in the next theorem.
\begin{theorem}[Consistency of the Combine phase]
    Let $\gG_i$ be the local causal graph (a PAG) obtained after performing the conquer phase on variables in partition $\textbf{P}_i$, and denote with $G$ the graph obtained by performing the union of all local causal graphs $\gG_i$, $\forall i$. Under assumption \ref{assumption_1}, $\gG$ is a PAG representing the MEC entailed by the observational data.
\end{theorem}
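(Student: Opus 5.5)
The plan is to compare the union graph $\gG=\bigcup_i \gG_i$ with the true global PAG $\mathcal{P}$ in two stages: first their adjacencies (the skeleton), then their edge marks. Let $\mathcal{M}$ be the true MAG over $\textbf{V}$, obtained by marginalizing the latent variables of the underlying DAG. For each $\textbf{P}_i$, let $\mathcal{M}[\textbf{P}_i]$ denote the MAG obtained by further marginalizing $\textbf{V}\setminus\textbf{P}_i$. By Assumption~\ref{assumption_1}, $\gG_i$ is the PAG of $\mathcal{M}[\textbf{P}_i]$. Two standard facts about MAG marginalization will be used throughout (\citet{zhang08ancestral}, \citet{verma2013equivalencecausalmodels}):
\begin{itemize}
    \item m-separation statements among variables of $\textbf{P}_i$ are the same in $\mathcal{M}$ and in $\mathcal{M}[\textbf{P}_i]$;
    \item ancestral relations among variables of $\textbf{P}_i$ are the same in $\mathcal{M}$ and in $\mathcal{M}[\textbf{P}_i]$.
\end{itemize}
Faithfulness makes the conditional-independence oracle coincide with m-separation.

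\textbf{Skeleton.} Suppose first that $u,v$ are adjacent in $\mathcal{M}$. The contrapositive of condition~2 of Def.~\ref{def:causal_partitioning} gives a partition $\textbf{P}_k$ containing both $u$ and $v$. In any such partition, $u$ and $v$ are not m-separated by any subset of $\textbf{P}_k$, because they are not m-separated by any subset of $\textbf{V}$. Hence they are adjacent in $\mathcal{M}[\textbf{P}_k]$ and in $\gG_k$, so the edge appears in $\gG$.

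Suppose instead that $u,v$ are non-adjacent in $\mathcal{M}$. Condition~3 gives two cases.
\begin{itemize}
    \item They lie in no common partition. Then no $\gG_i$ contains both, so $\gG$ has no edge between them.
    \item They lie in a common partition that also contains a separating set $\textbf{Z}$. Then they are m-separated in the corresponding marginal MAG, so that local PAG omits the edge.
\end{itemize}

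\textbf{Edge marks.} Since ancestral relations among variables of $\textbf{P}_k$ are the same in $\mathcal{M}$ and $\mathcal{M}[\textbf{P}_k]$, every edge shared by $\mathcal{M}$ and $\mathcal{M}[\textbf{P}_k]$ carries the same marks in both. Hence no local PAG can contain a mark contradicting $\mathcal{M}$. It remains to show that every invariant mark of $\mathcal{P}$ (arrowhead or tail) is also invariant in some $\gG_k$, so that the union (taking at each edge end the non-circle mark whenever one is present) reproduces $\mathcal{P}$. I would proceed through the sound and complete orientation rules of \citet{zhang08ancestral}:
\begin{itemize}
    \item The base case is an unshielded collider $u\ast\!\!\rightarrow w\leftarrow\!\!\ast v$. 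Its non-adjacent endpoints $u,v$ share a partition $\textbf{P}_k$ with a separating set $\textbf{Z}$. I would use condition~3, together with the fact that every edge lies in some partition, to argue that $w$ can be placed in the same $\textbf{P}_k$, so the collider is detected locally.
    \item The remaining rules $R1$--$R10$ would then be handled by induction on the order in which they fire, showing that each rule only involves configurations whose vertices lie in a common partition.
\end{itemize}

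\textbf{Main obstacles.} I expect two points to be the hardest.
\begin{itemize}
    \item \emph{Overlapping partitions in the skeleton step.} Condition~3 guarantees a separating set in \emph{one} partition containing $u,v$. With overlaps, another partition $\textbf{P}_j \ni u,v$ might lack such a set, and $\gG_j$ would then retain a spurious edge. I would first try to read condition~3 as holding for every partition containing both variables, which is the natural reading in \citet{zhang2022CAPA}. Failing that, I would define the union so that an adjacency is kept only if no local graph removed it.
    \item \emph{Marks.} Orientation propagation in FCI is not local, so the base-case claim that $w$ lies in $\textbf{P}_k$ can genuinely fail. The fallback is to show that discriminating paths and collider triples of $\mathcal{M}$ are contained in single partitions, using the inducing-path characterization of adjacency in marginal MAGs.
\end{itemize}
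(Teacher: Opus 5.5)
Your skeleton stage is correct, and it is sharper than anything the paper's own proof establishes. The paper only argues that the conditional-independence statements over $\textbf{P}_1\cup\textbf{P}_2$ are ``preserved'' by a causal partitioning, and never discusses edge marks. The removal-dominant union you fall back on for overlaps is exactly the paper's conflict-management rule in the appendix, and it is genuinely needed. Your mark-soundness step also works once you run it for every MAG $\mathcal{M}'$ Markov equivalent to $\mathcal{M}$, not only for $\mathcal{M}$: $\mathcal{M}'[\textbf{P}_k]$ lies in the local class and carries the marks of $\mathcal{M}'$. So every non-circle mark of some $\gG_k$ is a non-circle mark of $\mathcal{P}$.

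The genuine gap is mark \emph{completeness}, and it cannot be closed from Def.~\ref{def:causal_partitioning}. Your base case assumes that the endpoints of an unshielded collider share a partition, but condition~3 explicitly allows case (3a). Take the DAG $A\rightarrow C\leftarrow B$ with $\textbf{P}_1=\{A,C\}$ and $\textbf{P}_2=\{B,C\}$; this is a causal partitioning of exactly the shape used in the paper's own Example. The local PAGs are $A\circ\!\!-\!\!\circ C$ and $B\circ\!\!-\!\!\circ C$, so the union is $A\circ\!\!-\!\!\circ C\circ\!\!-\!\!\circ B$. That is also what the chain $A\rightarrow C\rightarrow B$ produces, whereas the true PAG here is $A\circ\!\!\rightarrow C\leftarrow\!\!\circ B$. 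Case (3b) does not rescue the base case either. The collider of an unshielded triple never belongs to a separating set of its endpoints, so nothing puts it into $\textbf{P}_k$. With $\textbf{P}_1=\{A,B\}$, $\textbf{P}_2=\{A,C\}$, $\textbf{P}_3=\{B,C\}$ and $\textbf{Z}=\emptyset$ the same failure occurs. Under the overlap-permitting reading of Def.~\ref{def:causal_partitioning} that the paper's Example relies on, the statement is therefore false at the level of marks.

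Your planned induction over $R1$--$R10$, and your fallback of placing collider triples and discriminating paths in single partitions, would not suffice even if such containment were assumed. The antecedent mark of a rule may only be derivable in another partition. Take $A\rightarrow C\leftarrow B$, $C\rightarrow D$ with partitions $\{A,B,C\}$, $\{A,C,D\}$, $\{B,C,D\}$. Every unshielded triple lies in one partition together with a separating set, and the collider is found in the first partition. But the other two local PAGs are $A\circ\!\!-\!\!\circ C\circ\!\!-\!\!\circ D$ and $B\circ\!\!-\!\!\circ C\circ\!\!-\!\!\circ D$. The union therefore keeps $C\circ\!\!-\!\!\circ D$, while $R1$ gives $C\rightarrow D$ in $\mathcal{P}$.

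What you can honestly prove is a weaker statement: the removal-dominant union has the skeleton of $\mathcal{P}$ and only correct non-circle marks. Recovering $\mathcal{P}$ itself requires an additional global orientation pass on the merged skeleton (collider detection, then $R1$--$R10$). For unshielded triples whose endpoints share no partition, no local run produces a separating set, so one must be obtained with extra tests. The paper's proof does not supply this ingredient either. Reasoning only about which independences the partition ``encodes'' cannot distinguish $A\perp C$ from $A\perp C\mid B$ in its own Example, and that is precisely the collider versus non-collider distinction on which orientation depends.
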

\begin{proof}
First, we study how the MEC of two partitions $\textbf{P}_1$ and $\textbf{P}_2$ can be decomposed. Next, we need to show that a causal partitioning is sufficient, i.e.\, it includes all those c.i. statement in the MEC. Extensions to multiple partitions automatically follows by studying all possible pairs.

First, we show that the set of c.i. statements of $\textbf{P}_1 \cup \textbf{P}_2$ can be partitioned into 3 (possibly overlapping) sets:
\begin{itemize}
    \item $\textbf{C}_1$ : All c.i. statements within $\textbf{P}_1$, such as $u \perp v\;|\;\textbf{Z}, \; \forall u, v, \textbf{Z} \in \textbf{P}_1$.
    \item $\textbf{C}_2$ : All c.i. statements within $\textbf{P}_2$, such as $u \perp v\;|\;\textbf{Z}, \; \forall u, v, \textbf{Z} \in \textbf{P}_2$.
    \item $\textbf{C}_{1-2}$ : All c.i. statements between variables in separate partitions, such as $u \perp v\;|\;\textbf{Z}$, $u \in \textbf{P}_1, v \in \textbf{P}_2$, and  $u \notin \textbf{P}_2, v \notin \textbf{P}_1$, with $\textbf{Z}\subset \textbf{P}_1 \cup \textbf{P}_2$.
\end{itemize}

In other words, we can rewrite the set of conditional independencies as $\textbf{C} = \textbf{C}_1 \cup \textbf{C}_2 \cup \textbf{C}_{1-2}$. To complete the proof, we need to show that all those sets follow from a Causal Partitioning.

For all valid causal partitionings obeying def.\ref{def:causal_partitioning}, all those C.I. statements described in (1), (2), and (3) are conserved.
Indeed, for $\textbf{C}_1$ and $\textbf{C}_2$, it follows from def.\ref{def:causal_partitioning}, point (3b): $\forall u,v \in \textbf{P}_i$ then also $\textbf{Z}\subset \textbf{P}_1$.
For $\textbf{C}_{1-2}$, those c.i. statements are "encoded" in def.\ref{def:causal_partitioning}. In point (3a), two nodes $u$ and $v$ in separate partitions are not connected $\rightarrow$ $u$ and $v$ are conditionally (or marginally) independent, i.e. $\exists\textbf{Z}\subset \textbf{P}_1 \cup \textbf{P}_2$.

Since the sets $\textbf{C}_1$, $\textbf{C}_2$, and $\textbf{C}_{1-2}$ are all entailed by the definitions of causal partitioning, it follows that all c.i. statements of $P_1 \cup P_2$ are conserved under a causal partitioning.
\end{proof} 

\paragraph{Can $\textbf{C}_1$, $\textbf{C}_2$, and $\textbf{C}_{1-2}$ overlap? } $\textbf{C}_1$ and $\textbf{C}_2$ can overlap in those cases where both variables and their conditioning set are contained in the intersection $\textbf{P}_1 \cap \textbf{P}_2$, i.e., $u \perp v | \textbf{Z}$ with $u, v, \textbf{Z}\in \textbf{P}_1, \textbf{P}_2$.

\begin{example}
    A partitioning for the graph $A\rightarrow B\rightarrow C$ into two partitions would be $\textbf{P}_1 = \{A, B\}$ and $\textbf{P}_2 = \{B, C\}$. The conditional independence $A\perp C | B$, as written above, follows from the partitioning itself i.e. from having $A$ and $C$ in separate partitions.
\end{example}

\subsection{Error-Bounds for approximations of Causal Partitionings}

Given the Divide\&Conquer approach, \ourmethod{} can be analysed in 2 parts: Inter-Partition level, and Intra-Partition level. The former is associated with edges between nodes in separate partitions, and the latters with edges within a partition, that belong to the individual sub-graphs. 

\paragraph{Inter-Partition Analysis}
To analyze how errors are generated during the partitioning operation, we define the Causal-Cut as
\begin{equation}
    E_{cut} = |\{ (u,v) \in \textbf{E} | \exists \textbf{P}_i, \textbf{P}_j \in \textbf{P} \text{ with }i\neq j\text{ such that }u \in \textbf{P}_i \text{ and }v \in \textbf{P}_j \}|
\end{equation}
where $E_{cut}$ is the number of causal edges "severed" by the partitioning operation, i.e.\, the size the cut. The causal cut measures the number of edges bridging partitions that are lost during the partitioning operation. We remark that $E_{cut}$ is computed between a partitioning and a causal graph, therefore it ignores eventual merging heuristics.

Still, our analysis must include the effect of Merge agents $\mathcal{M}_{hyp}$ and $\mathcal{M}_{critic}$, which are responsible for inserting additional edges between partitions. Those results in added edges which might increase the number of true positives, but also the number of false positives in case of bad guesses. We define with $H_{correct}$ the number of true positives (TP) guessed by the heuristic, and with $H_{incorrect}$ the sum of false negatives (FN) and false positives (FP) guessed by the merge agents. In the following, we assume $E_{cut}, H_{correct}$, $H_{incorrect}$ to be random variables that count such quantities.

Our first result characterizes the partitioning operation itself, which models the number of false positives in the learned graph.
\begin{theorem}[Partitioning-induced False Positives]\label{th:partitioning_false_positives}
    Let $\textbf{P}$ be a partitioning generated by the divide agents $\mathcal{D}_{hyp}$ and $\mathcal{D}_{critic}$, and denote with $\gG_{learned}$ and $\gG^{\star}$ respectively the PAG discovered by \ourmethod{} and the ground truth PAG. Further, let $\bar{p}$ be the average probability of $\mathcal{M}_{hyp}$ and $\mathcal{M}_{critic}$ discovering an inter-partition edge. Then, the inter-partition SHD in $\gG_{learned}$ can be written as
    \begin{equation}
        SHD_{inter} = E_{cut} - H_{correct} + FP_{inter}
    \end{equation}
    Further, if we assume that the bernoulli probabilities of guessing an inter-partition edge are equal and independent, it also holds in expectation that    
    \begin{equation}
         \mathbb{E}[SHD_{inter}] = \mathbb{E}[E_{cut}](1 - \bar{p}) + \mathbb{E}[FP_{inter}].
    \end{equation}
\end{theorem}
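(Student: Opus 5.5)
The first identity is pure bookkeeping, so I will prove it by decomposing the inter-partition part of the structural Hamming distance edge by edge. Restrict attention to unordered pairs $(u,v)$ with $u\in\textbf{P}_i$, $v\in\textbf{P}_j$, $i\neq j$, and neither variable in $\textbf{P}_i\cap\textbf{P}_j$. By the structure of the Conquer phase, no local graph $\gG_i$ ever contains such a pair. Hence in $\gG_{learned}$ an inter-partition adjacency exists iff it was inserted by $\mathcal{M}_{hyp}$ and survived $\mathcal{M}_{critic}$.

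Each such pair then falls into exactly one of four cases:
\begin{enumerate}
\item adjacent in $\gG^\star$ and recovered by the merge agents (a true positive, counted by $H_{correct}$);
\item adjacent in $\gG^\star$ and not recovered (a false negative);
\item non-adjacent in $\gG^\star$ but inserted (a false positive, counted by $FP_{inter}$);
\item non-adjacent and not inserted (contributes nothing).
\end{enumerate}
Counting the pairs that are adjacent in $\gG^\star$ gives exactly $E_{cut}$, so the false negatives number $E_{cut}-H_{correct}$. Summing the SHD contributions yields $SHD_{inter}=E_{cut}-H_{correct}+FP_{inter}$.

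I will treat a recovered edge with a wrong mark as correct at the adjacency level, and state explicitly that SHD is taken over adjacencies here. Alternatively, orientation errors can be absorbed into $FP_{inter}$; I will say which convention is used.

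For the expectation I use linearity together with conditioning on the partition. Conditional on $\textbf{P}$, hence on $E_{cut}$, write $H_{correct}=\sum_{e\in\text{cut}} B_e$ with $B_e\sim\mathrm{Bernoulli}(p_e)$. The stated assumption makes these i.i.d. with common parameter $\bar p$, so $\mathbb{E}[H_{correct}\mid E_{cut}]=\bar p\,E_{cut}$. The tower property then gives $\mathbb{E}[H_{correct}]=\bar p\,\mathbb{E}[E_{cut}]$. The essential point is that the merge agents' success probability must not depend on how many edges were cut; that is precisely the independence hypothesis, and I will invoke it at this step. Taking expectations of the identity then gives the claimed formula.

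The main obstacle is the first step, specifically making "inter-partition" well defined when partitions overlap. A pair sharing a common partition $\textbf{P}_k$ is handled by the Conquer phase, and I must exclude it from the count so that it is not double counted against the intra-partition SHD. I will define $E_{cut}$ only over pairs that co-occur in no partition. With that convention the case analysis goes through cleanly; otherwise the identity only holds up to edges that are recovered inside an overlap.
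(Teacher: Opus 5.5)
Your proof is correct and follows the paper's route: the same split $SHD_{inter}=FN_{inter}+FP_{inter}$ with $FN_{inter}=E_{cut}-H_{correct}$ (only the merge agents can create inter-partition adjacencies), then linearity of expectation giving $\mathbb{E}[H_{correct}]=\bar p\,\mathbb{E}[E_{cut}]$ under equal Bernoulli probabilities. Your tower-property step, your explicit adjacency-level SHD convention, and your restriction of $E_{cut}$ to pairs that co-occur in no partition are refinements the paper glosses over (its literal definition of $E_{cut}$ also counts edges recovered inside overlaps); note, however, that the requirement that $\bar p$ not depend on the cut comes from the ``equal, fixed $p$'' clause, not from independence, which the expectation step does not use at all.
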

\begin{proof}
        First of all, we model the probability of guessing an inter-partition edge with $p_e$, and assume that each edge discovery is independent from the other. It follows that a random variable counting the discovered inter-partition edges can be modeled as a sum of independent bernoulli random variables. In other words, we have
        \begin{equation}
            \bar{p} = \frac{\sum_{e \in E_{cut}} p_e}{E_{cut}} = \mathbb{E}\Bigg[\frac{H_{correct}}{E_{cut}}\Bigg].
        \end{equation}
     Next, we study which kind of errors arise from the partitioning operation. Since there is no "intra-partition" discovery phase during which we learn causal edges, there is no possibility to insert False Negatives besides the Merge Heuristic, which acts as a corrective mechanism. Therefore, we can quantify the number of false negatives with $E_{cut}$ (which are the false negatives derived from the partitioning), minus those edges recovered by the heuristic, which are $H_{correct}$. So, we have 
    \begin{equation}
        FN_{inter} = E_{cut} - H_{correct} 
    \end{equation}
    By adding $FP_{inter}$ at both sides, it yields
    \begin{align}
            FN_{inter} + FP_{inter} &= E_{cut} - H_{correct} + FP_{inter}\\
            SHD_{inter}&=E_{cut} - H_{correct} + FP_{inter}
    \end{align}
     The expected value of $H_{correct}$ is the sum of the expectations of $E_{cut}$ individual bernoulli variables $p_e$. Assuming that $p_e = p \in \mathbb{R}$, we have $\mathbb{E}[H_{correct}] = \mathbb{E}[E_{cut}] \cdot p$. By substitution, and by linearity of expectations, we have
     \begin{align}
         \mathbb{E}[SHD_{inter}] &= \mathbb{E}[E_{cut} - H_{correct} + FP_{inter}],\\
         \mathbb{E}[SHD_{inter}]&= \mathbb{E}[E_{cut}] - \mathbb{E}[E_{cut}] \cdot p + \mathbb{E}[FP_{inter}]\\ 
         \mathbb{E}[SHD_{inter}] &= \mathbb{E}[E_{cut}](1 - \bar{p}) + \mathbb{E}[FP_{inter}]
     \end{align}
     which terminates our proof.
\end{proof}
The intuition behind Th.\ref{th:partitioning_false_positives} is that the partitioning operation, besides constraining the search for a causal graph, also induces a number of false negatives, which are recovered by the merge heuristic. It is also trivial that $FP_{inter}$ can only be identical to $H_{incorrect}$ since the partitioning operation is not adding any other edge. From empirical experiments, typically LLMs excel in precision while having limited recall, implying that the term $H_{incorrect}$ tends to be small and often negligible.

We can extend it to a concentration lemma by applying Höeffding's inequality, which we recall below.
\begin{theorem}[Höeffding's Inequality \citep{hoeffding_inequality}]
        Let $X_1, X_2, \dots, X_n$ be independent random variables such that $a_i \leq X_i  \leq b_i$, and let their sum be $S_i = \sum_{i=0}^n X_i$ and its expected value be $\mu = \mathbb{E}[S_n]$.
        Then, $\forall t > 0$ we have
        \begin{equation*}
            P(S_n - \mu \geq t) \leq \exp \Bigg( -\frac{2t^2}{\sum_{i=0}^{n} (b_i - a_i)^2}\Bigg)
        \end{equation*}
\end{theorem}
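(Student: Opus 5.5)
We follow the classical Chernoff (exponential moment) route. Fix $t>0$ and a free parameter $\lambda>0$. Since $x\mapsto e^{\lambda x}$ is increasing and nonnegative, Markov's inequality gives
\begin{equation*}
P(S_n-\mu\geq t)=P\big(e^{\lambda(S_n-\mu)}\geq e^{\lambda t}\big)\leq e^{-\lambda t}\,\mathbb{E}\big[e^{\lambda(S_n-\mu)}\big].
\end{equation*}
Writing $S_n-\mu=\sum_i (X_i-\mathbb{E}[X_i])$ and using the independence of the $X_i$, the moment generating function factorizes as $\prod_i \mathbb{E}[e^{\lambda(X_i-\mathbb{E}[X_i])}]$. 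This reduces the problem to bounding the moment generating function of a single centered, bounded random variable.

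The key step, and the main obstacle, is \emph{Hoeffding's lemma}: if $Y$ satisfies $\mathbb{E}[Y]=0$ and $a\leq Y\leq b$, then $\mathbb{E}[e^{\lambda Y}]\leq \exp(\lambda^2(b-a)^2/8)$. We apply it to $Y_i=X_i-\mathbb{E}[X_i]$, which lies in an interval of length $b_i-a_i$.

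To prove the lemma, we first use convexity of $y\mapsto e^{\lambda y}$ on $[a,b]$:
\begin{equation*}
e^{\lambda y}\leq \frac{b-y}{b-a}e^{\lambda a}+\frac{y-a}{b-a}e^{\lambda b}.
\end{equation*}
Taking expectations with $\mathbb{E}[Y]=0$ gives $\mathbb{E}[e^{\lambda Y}]\leq e^{L(h)}$. Here $h=\lambda(b-a)$, $\theta=-a/(b-a)\in[0,1]$, and $L(h)=-\theta h+\log(1-\theta+\theta e^{h})$. A direct computation shows $L(0)=L'(0)=0$. Moreover, $L''(h)=\rho(1-\rho)\leq 1/4$, where $\rho=\theta e^h/(1-\theta+\theta e^h)\in[0,1]$. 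Taylor's theorem with remainder then yields $L(h)\leq h^2/8$, which is exactly the lemma. The delicate points are checking the sign and range of $\theta$ (which uses $a\leq 0\leq b$, forced by the zero mean) and the second-derivative bound.

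Combining the two steps, we obtain
\begin{equation*}
P(S_n-\mu\geq t)\leq \exp\Big(-\lambda t+\tfrac{\lambda^2}{8}\textstyle\sum_i (b_i-a_i)^2\Big)
\end{equation*}
for every $\lambda>0$. Minimizing the quadratic exponent over $\lambda$ gives the choice $\lambda^\star=4t/\sum_i (b_i-a_i)^2$. Substituting $\lambda^\star$ produces the exponent $-2t^2/\sum_i (b_i-a_i)^2$, which is the claimed bound. The indexing convention $i=0,\dots,n$ versus $i=1,\dots,n$ in the statement plays no role, provided the sum in the denominator runs over the same variables that make up $S_n$.
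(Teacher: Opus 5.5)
Your proof is correct: the Chernoff step, the convexity reduction to $L(h)=-\theta h+\log(1-\theta+\theta e^{h})$ with $L(0)=L'(0)=0$ and $L''=\rho(1-\rho)\le 1/4$, and the choice $\lambda^\star=4t/\sum_i(b_i-a_i)^2$ all check out. The only loose end is the degenerate case $a_i=b_i$, where $\theta$ is undefined, but there $Y_i=0$ almost surely and the lemma is trivial. The paper gives no proof of this statement and simply recalls it from Hoeffding's original work, and your argument is the classical proof from that source, so it matches the approach the paper relies on.
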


By applying the above-mentioned inequality, we have the following theorem.
\begin{theorem}
    For a confidence level $\delta \in (0,1)$, the inter-partition $SHD_{inter}$ satisfies the following inequality with probability of at least $1 - \delta$\begin{equation}
        SHD_{inter} \leq \mathbb{E}[SHD_{inter}] + \sqrt{\frac{E_{cut}\cdot \ln(2/\delta)}{2}} + \sqrt{\frac{m\cdot \ln(2/\delta)}{2}},
    \end{equation}
    where $m$ is the total number of edges proposed during the merge phase.
\end{theorem}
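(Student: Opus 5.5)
Starting from the decomposition in Th.\ref{th:partitioning_false_positives}, I will write $SHD_{inter} = E_{cut} - H_{correct} + FP_{inter}$, treating $E_{cut}$ as fixed once the partitioning $\textbf{P}$ is given. The randomness then comes only from the Merge agents. Then
\begin{equation*}
SHD_{inter} - \mathbb{E}[SHD_{inter}] = -\big(H_{correct} - \mathbb{E}[H_{correct}]\big) + \big(FP_{inter} - \mathbb{E}[FP_{inter}]\big),
\end{equation*}
so it suffices to bound the upper deviation of each of the two random terms separately and add the bounds.

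\textbf{First term.} As in the proof of Th.\ref{th:partitioning_false_positives}, $H_{correct} = \sum_{e \in E_{cut}} X_e$ with independent Bernoulli indicators $X_e \in [0,1]$, one for each severed true edge. Applying Höeffding's inequality to $-H_{correct}$, a sum of $E_{cut}$ variables each with range $1$, gives
\begin{equation*}
P\big(\mathbb{E}[H_{correct}] - H_{correct} \geq t_1\big) \leq \exp\!\big(-2t_1^2/E_{cut}\big).
\end{equation*}
Setting the right-hand side to $\delta/2$ yields $t_1 = \sqrt{E_{cut}\ln(2/\delta)/2}$.

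\textbf{Second term.} I will model $FP_{inter} = H_{incorrect}$, as remarked after Th.\ref{th:partitioning_false_positives}, as a sum over the $m$ edges proposed during the merge phase. For each proposed edge, an indicator $Y_j \in [0,1]$ records whether it is absent in the ground truth. Assuming these indicators are independent, Höeffding's inequality gives deviation $t_2 = \sqrt{m\ln(2/\delta)/2}$ with failure probability at most $\delta/2$.

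\textbf{Union bound.} Both events fail with total probability at most $\delta$. On the complement, $SHD_{inter} \leq \mathbb{E}[SHD_{inter}] + t_1 + t_2$, which is exactly the claimed inequality. The split of $\delta$ into two halves is what produces the $\ln(2/\delta)$ factor.

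The main obstacle is justifying the independence and boundedness structure needed for Höeffding in the second term. The number of proposals $m$ is itself produced by the agents, so it is random, and proposals within a single LLM call are not obviously independent. My first attempt is to condition on $m$ (and on $\textbf{P}$, hence $E_{cut}$), and to state the result as holding conditionally. This requires an explicit modelling assumption that the per-edge correctness indicators are conditionally independent given $m$, which extends the independence assumption already made in Th.\ref{th:partitioning_false_positives}. If that assumption is too strong, a fallback is a bounded-differences (McDiarmid) argument over the individual merge decisions, which yields the same form of bound.
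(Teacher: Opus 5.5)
Your proposal is correct and follows the paper's proof. Starting from $SHD_{inter} = E_{cut} - H_{correct} + FP_{inter}$, you apply Hoeffding's inequality to $H_{correct}$ (as $E_{cut}$ independent Bernoullis) and to $FP_{inter}$ (as $m$ independent Bernoullis), each at level $\delta/2$, and combine them with a union bound. Your version is slightly cleaner in two respects: you bound the lower tail of $H_{correct}$, which is the direction actually needed to upper-bound $FN_{inter} = E_{cut} - H_{correct}$ (the paper bounds the upper tail and gets the right constant only because Hoeffding's bound is symmetric), and you state explicitly the conditioning on $\textbf{P}$ and $m$ that the paper leaves implicit.
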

\begin{proof}From Th.\ref{th:partitioning_false_positives} we know that $SHD_{inter}= E_{cut} - H_{correct} + FP_{inter}$. To continue, we model $H_{correct}$ as a sum of $E_{cut}$ independent bernoullis, and similarly for $FP_{inter} = H_{incorrect}$, which we assume to be a sum of $m$ independent bernoullis.

First, we bound $FN_{inter}$ and $FP_{inter}$ in the same way by applying Höeffding's inequality. In both cases, being bernoulli variables, we have $a_i = 0$ and $b_i = 1$ $\forall i$.
\begin{enumerate}
    \item $\forall t > 0$ it holds that
    \begin{equation}
    Pr(H_{correct} - \mathbb{E}[H_{correct}] \geq t) \leq \exp \Big(-\frac{2 t^2}{E_{cut}} \Big),
    \end{equation}
    where, by setting the probability on the r.h.s. to $\delta/2$ yields 
    \begin{equation}
        \\exp \Big(-\frac{2 t^2}{E_{cut}} \Big) = \frac{\delta}{2} \implies t_{FN} = \sqrt{\frac{E_{cut}\cdot \ln(2/\delta)}{2}}.
    \end{equation}
    This implies that, with probability $\geq 1 - \delta/2$ it holds that
    \begin{equation}
        H_{correct} - \mathbb{E}[H_{correct}] \leq \sqrt{\frac{E_{cut}\cdot \ln(2/\delta)}{2}}
    \end{equation}
    which can be substituted into $FP_{inter} = E_{cut} - H_{correct}$, resulting into a bound for inter-partition false negatives
    \begin{equation}\label{eq:fn_bound}
        FN_{inter} \leq \mathbb{E}[FN_{inter}] + \sqrt{\frac{E_{cut}\cdot \ln(2/\delta)}{2}}
    \end{equation}
    \item Similarly, for false positives we have that $\forall t > 0$ it holds that
    \begin{equation}
    Pr(FP_{inter} - \mathbb{E}[FP_{inter}] \geq t) \leq \exp \Big(-\frac{2 t^2}{m} \Big),
    \end{equation}
    which, with a procedure completely identical to the one above, we can obtain, 
    \begin{equation}\label{eq:fp_bound}
        FP_{inter} \leq \mathbb{E}[FP_{inter}] + \sqrt{\frac{m\cdot \ln(2/\delta)}{2}},
    \end{equation}
    holding with probability $\geq 1 - \delta/2$.
\end{enumerate}
    We know that \ref{eq:fn_bound} and \label{eq:fp_bound} can be violated with probability $\delta/2$, and by union bound it holds that if $P(A) < \delta/2$ and $P(B) < \delta/2$, then $P(A\cup B) \leq P(A) + P(B) < \delta/2 + \delta/2 = \delta$. Therefore we have that with probability $\delta$
    \begin{align}
        FP_{inter} + FN_{inter} &\leq \mathbb{E}[FP_{inter}] + \mathbb{E}[FN_{inter}] + \sqrt{\frac{E_{cut}\cdot \ln(2/\delta)}{2}} + \sqrt{\frac{m\cdot \ln(2/\delta)}{2}}\\
        SHD_{inter} &\leq \mathbb{E}[SHD_{inter}] + \sqrt{\frac{E_{cut}\cdot \ln(2/\delta)}{2}} + \sqrt{\frac{m\cdot \ln(2/\delta)}{2}}
    \end{align}
    which concludes our proof.
\end{proof}
Further, in real-world and large-scale settings where no ground-truth is available, this lower bound  can be estimated by drawing a subset of the inter-partition edges, and checking with a domain expert how many were correctly estimated. With this procedure, it is possible to draw an estimate of $FP_{inter}$, $FN_{inter}$ and $E_{cut}$ to evaluate the real performance of \ourmethod{}. 

\paragraph{Intra-Partition Analysis} We hereby show that under the assumptions of sound and completeness, errors during the Conquer phase appear mainly as False Positives.

In case a partitioning is incorrect, some d-separating sets might end being scattered across partitions, making it impossible to find those sets \textit{within a partition}. This results in FCI being impeded to remove certain edges, which we call \textit{spurious adjacencies} (SA).
\begin{definition}[Spurious Adjacency]
    Let $\textbf{P}_i$ be a subset of $\textbf{V}$, and let $\gG$ be a graph such that $(u,v)\notin \textbf{E}$ with $u,v \in \textbf{V}$. We define a \textbf{Spurious Adjacency} between $u$ and $v$ as an additional edge appearing when all possible conditioning sets are in $\textbf{V}$, but are not in $\textbf{P}_i$. Formally, for a partition $\textbf{P}_i$
    \begin{equation}
        SA_i = \sum SA
    \end{equation}
    Further, for a whole partitioning $\textbf{P}$, we denote the number of all spurious adjacencies that occurred as
    \begin{equation}
        SA(P, \gG) = \sum SA(\textbf{P}_i, \gG_i)
    \end{equation}
 \end{definition}
In other words, the performance of FCI during the conquer step is hindered by the quality of the partitioning operation, as a bad partitioning might result in breaking the d-separating set across partitions (a d-separation failure), which results in false positives i.e.\, Spurious Adjacencies.

From this observation, we can derive an hard lower error-bound for the Conquer step
\begin{theorem}[Intra-Partition Expected error]
Given a causal discovery algorithm $\mathcal{F}$ which is sound and complete given the conditional independencies observable within a partition, we have
\begin{equation}
        SHD_{intra} = SA + FP_{LLM} + FN_{intra}
    \end{equation}
where $FP_{LLM}$ are false positives derived from the LLM.
In expectation,
    \begin{equation}
        \mathbb{E}[SHD_{intra}] = SA + \mathbb{E}[FP_{LLM}] + \mathbb{E}[FN_{intra}].
    \end{equation}
\end{theorem}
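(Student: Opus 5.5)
The plan is to decompose the intra-partition structural Hamming distance by error type and then attribute each false positive to its source. As in the proof of Th.~\ref{th:partitioning_false_positives}, we write $SHD_{intra} = FP_{intra} + FN_{intra}$, counting errors only over pairs $(u,v)$ that lie together in some leaf partition $\textbf{P}_i$. Orientation mistakes can either be absorbed into these two counts, as the paper implicitly does, or added as a separate term; we follow the paper and absorb them. This means the whole task reduces to showing that
\[
FP_{intra} = SA + FP_{LLM},
\]
after which the expectation statement follows by linearity.

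To prove that identity, fix a pair $u,v \in \textbf{P}_i$ that is nonadjacent in the ground truth but adjacent in $\gG_i$. Since $\mathcal{F}$ is sound and complete with respect to the conditional independencies observable within $\textbf{P}_i$, and under faithfulness, $\mathcal{F}$ removes the edge $u$--$v$ whenever some $\textbf{Z}\subseteq \textbf{P}_i\setminus\{u,v\}$ satisfies $u \independent v \mid \textbf{Z}$, unless that edge was imposed as a constraint. This leaves exactly two exclusive ways for the edge to survive. In the first case, no separating set lies inside $\textbf{P}_i$, although a separating set exists in $\textbf{V}$ because the pair is nonadjacent in the ground truth; this is precisely a spurious adjacency. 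In the second case, a separating set lies inside $\textbf{P}_i$ but the edge was forced by the LLM constraint $\gG^{critic}_i$ of Eq.~\ref{eq:conquer_refinement_1}, which we count in $FP_{LLM}$. To make the two cases disjoint, a pair that falls under both (no internal separator and also proposed by the LLM) is assigned to $SA$ by convention. Summing over partitions, and over pairs that appear in overlapping partitions without double counting (for example by taking the union graph), gives the identity. The term $FN_{intra}$ is left as the residual count of true edges that are missing, for instance because an erroneous LLM constraint forbade them; soundness guarantees that $\mathcal{F}$ itself never deletes a true adjacency in the population limit.

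For the expectation, note that $SA$ is deterministic once the partitioning $\textbf{P}$ and the ground truth are fixed. It depends only on whether d-separating sets are contained in the partitions, not on the randomness of the agents, because we work in the $n\to\infty$ regime where the conditional-independence tests are exact. Taking expectations over the agents' stochastic outputs and applying linearity then yields $\mathbb{E}[SHD_{intra}] = SA + \mathbb{E}[FP_{LLM}] + \mathbb{E}[FN_{intra}]$.

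The main obstacle is the disjointness and bookkeeping argument. Partitions may overlap, so the same pair can be judged in several leaves, possibly with different outcomes. In addition, spurious adjacencies in a PAG learned from a marginal can arise from inducing paths through variables outside $\textbf{P}_i$, and these are exactly the no-internal-separator case. I would try first to fix the convention that an edge is present in $\gG$ iff it is present in some $\gG_i$, and to define $SA$ at the level of pairs rather than per partition, so that each false positive is counted exactly once.
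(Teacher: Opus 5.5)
Your proposal is correct and follows essentially the same route as the paper. You split $SHD_{intra}$ into $FP_{intra}+FN_{intra}$ and use soundness and completeness of $\mathcal{F}$ to argue that a surviving false adjacency either has no separating set inside $\textbf{P}_i$ (giving $SA$) or was forced by the LLM prior (giving $FP_{LLM}$), then sum and apply linearity of expectation. Your extra bookkeeping---disjoint cases, overlapping partitions, and treating $SA$ as fixed by conditioning on the partitioning---only tightens points the paper glosses over; its proof even writes $FP_{intra} = SA + \mathbb{E}[FP_{LLM}]$, mixing a realized count with an expectation.
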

\begin{proof}
    The expected intra-partition SHD can be decomposed into the sum of $\mathbb{E}[SHD_{intra}] = \mathbb{E}[FP_{intra}] + \mathbb{E}[FN_{intra}]$.
    
    We study first the $FP_{intra}$ term. Within a partition, if $\mathcal{F}$ is sound and complete, an edge is kept if and only if no d-separating set $\textbf{Z}\in \textbf{P}_i$ s.t. $u \perp v | \textbf{Z}$ is found. At the same time, in every occasion where a d-separating set is present, the edge will be removed. 
    Therefore, $FP_{intra}$ can either be caused by:
    \begin{enumerate}
        \item The conditioning set $\textbf{Z}$ not being present within the partition due to a bad partitioning. This produces a spurious adjacencies.
        \item The LLM agents $\mathcal{C}_{hyp}$ and $\mathcal{C}_{critic}$ choosing a wrong prior constraint. This results in a $FP_{LLM}$ term.
    \end{enumerate}
    Therefore we have $FP_{intra} = SA + \mathbb{E}[FP_{LLM}]$.
    For the 
    By summing those terms, we get
    \begin{equation}
        SHD_{intra} = SA + FP_{LLM} + FN_{intra},
    \end{equation}
    and by linearity of expectation, we also have
    \begin{equation}
        \mathbb{E}[SHD_{intra}] = SA + \mathbb{E}[FP_{LLM}] + \mathbb{E}[FN_{intra}],
    \end{equation}
    which terminates our proof.
\end{proof}

The precision of the LLMs is crucial for achieving good intra-partition performance. From our experiments, the intra-partition precision of LLMs it typically very close to 1, which is enabled by the critic mechanism that helps filtering out unwanted edges.

\begin{theorem}[Upper Bound for $SHD_{intra}$]\label{th:shd_intra_bound}
The following holds with probability $1 - \delta$
    \begin{equation}
        SHD_{intra} < SA + \mathbb{E}[FP_{LLM}] + FN_{intra} + \sqrt{\frac{(N_{max} - N_{edges}) \ln(1 / \delta)}{2}}
    \end{equation}
\end{theorem}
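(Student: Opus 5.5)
We start from the decomposition of the preceding theorem (Intra-Partition Expected error),
\begin{equation*}
SHD_{intra} = SA + FP_{LLM} + FN_{intra}.
\end{equation*}
Here $SA$ is a deterministic quantity fixed by the partitioning $\textbf{P}$ and the ground truth. $FN_{intra}$ is kept as it appears in the bound, so the only term that needs a probabilistic treatment is $FP_{LLM}$. The statement then reduces to showing that, with probability at least $1-\delta$,
\begin{equation*}
FP_{LLM} < \mathbb{E}[FP_{LLM}] + \sqrt{\tfrac{(N_{max}-N_{edges})\ln(1/\delta)}{2}}.
\end{equation*}

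\textbf{Modelling $FP_{LLM}$.} Mirroring the proof of the inter-partition concentration result, we model $FP_{LLM}$ as a sum of independent Bernoulli variables, one per candidate pair within the partitions that is \emph{not} an edge of the ground truth. Only such non-edges can become false positives. Their number is $N_{max}-N_{edges}$, where $N_{max}$ is the total number of admissible intra-partition pairs (e.g.\ $\sum_i \binom{|\textbf{P}_i|}{2}$) and $N_{edges}$ is the number of true intra-partition edges. Each indicator $X_e\in\{0,1\}$ records whether $\mathcal{C}_{hyp}$/$\mathcal{C}_{critic}$ wrongly impose the edge $e$ as a constraint. We assume independence across pairs, exactly as in Th.~\ref{th:partitioning_false_positives}. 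Pairs counted as spurious adjacencies should be excluded from this index set, so that no edge is double counted between $SA$ and $FP_{LLM}$. Since this exclusion only shrinks the set, the bound with $N_{max}-N_{edges}$ remains valid.

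\textbf{Applying Höeffding.} We then apply the one-sided Höeffding inequality recalled above with $a_i=0$, $b_i=1$ and $n=N_{max}-N_{edges}$:
\begin{equation*}
\Pr\big(FP_{LLM}-\mathbb{E}[FP_{LLM}]\ge t\big)\le \exp\!\Big(-\tfrac{2t^2}{N_{max}-N_{edges}}\Big).
\end{equation*}
Setting the right-hand side equal to $\delta$ gives $t=\sqrt{(N_{max}-N_{edges})\ln(1/\delta)/2}$. Only one tail event is involved here, so no union bound is needed, which explains why the statement has $\ln(1/\delta)$ rather than $\ln(2/\delta)$. Substituting into the decomposition yields the claim with probability at least $1-\delta$. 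The strict inequality comes from the event complementary to $\{\,\cdot \ge t\}$.

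\textbf{Main obstacle.} The main difficulty is justifying the modelling step rather than the calculation. Three points need care: that $FP_{LLM}$ is a sum over exactly the $N_{max}-N_{edges}$ non-edges; that these indicators are independent and bounded in $[0,1]$; and that $FN_{intra}$ and $SA$ may be treated as fixed, or at least as not interacting with the LLM's choices, when conditioning on them. I would state these points explicitly as modelling assumptions, consistent with those used for the inter-partition bound, and then proceed with the calculation above.
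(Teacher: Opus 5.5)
Your proposal follows the paper's proof essentially step for step. You take the identity $SHD_{intra} = SA + FP_{LLM} + FN_{intra}$, model $FP_{LLM}$ as a sum of $N_{max}-N_{edges}$ independent Bernoulli indicators over the non-edges, apply the one-sided Hoeffding bound with the tail set to $\delta$, and substitute. You also state the final direction $FP_{LLM} < \mathbb{E}[FP_{LLM}] + t$ correctly, where the paper's write-up flips that inequality by a typo. The extra care you flag about conditioning on $SA$ and $FN_{intra}$ is not needed: on the event $\{FP_{LLM} < \mathbb{E}[FP_{LLM}] + t\}$ the bound follows pointwise from the identity, however the other terms depend on the LLM's choices.
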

\begin{proof}
    Given the maximum number of edges that the ground truth graph can have $N_{max}$, and the actual number of edges $|E| = N_{edges}$, we see that the difference $N_{max} - N_{edges}$ represents the maximum number of false positives possible, i.e.\, $N_{max} - N_{edges} \leq FP$.
    We model every potential false positive as an independent bernoulli trial that models the probability of the LLM inserting an edges. With this modeling assumption, it follows that $FP_{LLM}$ is the sum of $N_{max} - N_{edges}$ bernoulli variables, and its expected value is $\mathbb{E}[FP_{LLM}] = \sum_{e=0}^{N_{max} - N_{edges}} p_e$.
    Similarly to other theorems, we can apply Höeffding's inequality to bound $FP_{LLM}$ from above:
    \begin{equation}
    Pr(FP_{LLM} - \mathbb{E}[FP_{LLM}] \geq t) \leq \exp \Big(-\frac{2 t^2}{N_{max} - N_{edges}} \Big),
    \end{equation}
    where, by setting the probability on the r.h.s. to $\delta$ yields 
    \begin{equation}
        \\exp \Big(-\frac{2 t^2}{N_{max} - N_{edges}} \Big) = \delta \implies t_{FP} = \sqrt{\frac{(N_{max} - N_{edges})\cdot \ln(1/\delta)}{2}}.
    \end{equation}
    It follows that $FP_{LLM} \geq \mathbb{E}[FP_{LLM}] + t_{FP}$ holds with probability $\leq \delta$. Inversely, with probability of at least $1 - \delta$ we have that
    \begin{equation}
        FP_{LLM} \geq \mathbb{E}[FP_{LLM}] + t_{FP}
    \end{equation}
    which can be substituted into $SHD_{intra} = SA + FP_{LLM} + FN_{intra}$ to obtain
    \begin{equation}
        SHD_{intra} = SA + FP_{LLM} + FN_{intra} \leq SA +  \mathbb{E}[FP_{LLM}] + t_{FP} + FN_{intra} 
    \end{equation}
\end{proof}

We remark that this identity is improved further in the asymptotic limit as FCI removes all false negatives in the infinite-data limit.
\begin{theorem}
    Theorem \ref{th:shd_intra_bound} in the infinite-data limit becomes
    \begin{equation}
        SHD_{intra} = \leq SA +  \mathbb{E}[FP_{LLM}] + + \sqrt{\frac{(N_{max} - N_{edges}) \ln(1 / \delta)}{2}} 
    \end{equation}
\end{theorem}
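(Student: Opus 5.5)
The argument specializes Theorem~\ref{th:shd_intra_bound} to the large-sample regime. The only term that has to disappear as $n\to+\infty$ is $FN_{intra}$. Everything else in the bound is fixed by one of three things: the partitioning ($SA$), the LLM edge constraints ($\mathbb{E}[FP_{LLM}]$), or the concentration deviation $t_{FP}$. None of these depends on the sample size.

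\textbf{Step 1.} Restate the high-probability bound of Theorem~\ref{th:shd_intra_bound}. With probability at least $1-\delta$,
\[
SHD_{intra} \leq SA + \mathbb{E}[FP_{LLM}] + FN_{intra} + \sqrt{\tfrac{(N_{max}-N_{edges})\ln(1/\delta)}{2}}.
\]
The Höeffding step there concerns only the Bernoulli variables that model the LLM insertions. Its deviation term is therefore unchanged in the limit.

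\textbf{Step 2.} Show that $FN_{intra}\to 0$ as $n\to+\infty$. In the setting of Assumption~\ref{assumption_1}, $\mathcal{F}$ (FCI) is sound and complete on the conditional independencies observable within $\textbf{P}_i$. Under Causal Faithfulness, a true adjacency $(u,v)$ with $u,v\in\textbf{P}_i$ has no d-separating set at all. In particular, it has none inside $\textbf{P}_i$. Since the CI tests are consistent, the probability that some test wrongly declares $u\perp v\mid \textbf{Z}$ vanishes as $n\to+\infty$, so the edge is never removed. The graph $\mathcal{F}$ starts from is bootstrapped by $\gG^{critic}_i$, but that only adds or orients edges and cannot delete true ones. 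The number of edges is finite, so a union bound over them gives $\Pr(FN_{intra}>0)\to 0$. As a result, $FN_{intra}=0$ with probability tending to one.

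\textbf{Step 3.} Combine the two high-probability events. The bound holds with probability at least $1-\delta-o(1)$, which tends to $1-\delta$. Dropping $FN_{intra}$ then gives the stated inequality with only $SA$, $\mathbb{E}[FP_{LLM}]$ and the square-root term remaining. The stray ``$=$'' and the doubled ``$+$'' in the statement are typographical, and I would read the claim as exactly this inequality.

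\textbf{Main obstacle.} The delicate step is Step~2. One must check that the LLM-provided constraints never force removal of a true edge, since a wrong forbidden-edge constraint would create false negatives that no amount of data can fix. To handle this, I would assume, as the discussion after Assumption~\ref{assumption_1} does, that the LLM constraints only \emph{add} required edges. Any residual erroneous exclusion would then be folded into the LLM error term rather than into $FN_{intra}$.
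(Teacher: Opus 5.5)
Your proposal is correct and follows the paper's own argument. The paper simply notes that under faithfulness FCI removes all false negatives as $n\to+\infty$, so $FN_{intra}\to 0$ and that term drops out of the bound in Theorem~\ref{th:shd_intra_bound}; your version is a more careful rendering of the same idea, since you make the convergence precise (in probability, via consistent CI tests and a finite union bound, giving confidence $1-\delta-o(1)$) and you make explicit the paper's implicit requirement that LLM constraints act only as required edges and so cannot introduce false negatives.
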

\begin{proof}
    For $n \to +\infty$ and under the causal faithfulness assumption, we know that FCI is capable to remove all false negatives, i.e.\, $FN_{intra} \to 0^+$, from which the result.
\end{proof}
\paragraph{General Results} After analyzing what happens within partitions, and between partitions, we aggregate all the past results in order to obtain general estimate on the most important drivers for \ourmethod{}'s performance.
\begin{theorem}
Let $\textbf{P}$ be a partitioning induced by $\mathcal{D}_{hyp}$ and $\mathcal{D}_{critic}$. Define $SA$ as the number of spurious adjacencies, $E_{cut}$ and $N_{NotIntra}$ the number of inter-partition edges present resp. absent in the ground truth.  
Let $FPR_{\mathcal{C}}$ denote the false-positive rate of Conquer agents, and let $Rec_{\mathcal{M}}$ and $FDR_{\mathcal{M}}$ denote respectively the recall and false-discovery rate of Merge agents.  
Then, as $n \to \infty$, the expected total structural Hamming distance satisfies
    \begin{equation}\label{th:real_world_estimate}
        \mathbb{E}[SHD_{total}] \approx (SA + N_{NotIntra} \cdot FPR_{\mathcal{C}}) + (E_{cut} (1 - Rec_{\mathcal{M}}) + m \cdot FDR_{\mathcal{M}}).
    \end{equation}
\end{theorem}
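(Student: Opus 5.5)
The plan is to split the total error into an intra-partition part and an inter-partition part, treat each with the decomposition already proved, and then add the two. Every pair of variables is either inside some common partition or split across two partitions. For the pair sets themselves this is a disjoint split. A pair lying in an overlap of partitions is counted only once, assigned to whichever partition is chosen, and this counting convention is one source of the ``$\approx$''. Consequently $SHD_{total} = SHD_{intra} + SHD_{inter}$ exactly, and by linearity $\mathbb{E}[SHD_{total}] = \mathbb{E}[SHD_{intra}] + \mathbb{E}[SHD_{inter}]$.

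\textbf{Intra-partition term.} I start from the Intra-Partition Expected error theorem, $\mathbb{E}[SHD_{intra}] = SA + \mathbb{E}[FP_{LLM}] + \mathbb{E}[FN_{intra}]$. As $n\to\infty$, under faithfulness and the consistency of FCI used in the asymptotic version of Theorem~\ref{th:shd_intra_bound}, we get $\mathbb{E}[FN_{intra}]\to 0$. For the LLM false positives I use the Bernoulli model from the proof of Theorem~\ref{th:shd_intra_bound}. Each of the $N_{NotIntra}$ intra-partition pairs that are non-adjacent in the ground truth is a potential false positive, and the Conquer agents insert it with probability $p_e$. Defining $FPR_{\mathcal{C}}$ as the average $p_e$ over these pairs gives $\mathbb{E}[FP_{LLM}] = N_{NotIntra}\cdot FPR_{\mathcal{C}}$. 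Here I read $N_{NotIntra}$ as the count of absent pairs eligible for false positives; the statement's wording ``inter-partition \ldots absent'' is ambiguous, and I would state this interpretation explicitly. This yields the first bracket.

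\textbf{Inter-partition term.} I invoke Theorem~\ref{th:partitioning_false_positives}, $\mathbb{E}[SHD_{inter}] = \mathbb{E}[E_{cut}](1-\bar p) + \mathbb{E}[FP_{inter}]$. The average per-edge recovery probability $\bar p$ is exactly the recall of the Merge agents on cut edges, so $\bar p = Rec_{\mathcal{M}}$, and $E_{cut}$ is deterministic once the partitioning is fixed. For the false positives I reuse the modelling from the Höeffding bound: $FP_{inter} = H_{incorrect}$ is a count over the $m$ edges proposed in the merge phase, each of which is wrong with probability equal, on average, to $FDR_{\mathcal{M}}$. Hence $\mathbb{E}[FP_{inter}] = m\cdot FDR_{\mathcal{M}}$, which gives the second bracket.

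\textbf{Main obstacle.} The delicate step is justifying the ``$\approx$''. Three sources of error need to be spelled out:
\begin{itemize}
\item replacing heterogeneous $p_e$ by their averages, which is exact for sums of Bernoulli variables by linearity, so it only costs the interpretation of $FPR_{\mathcal{C}}$ and $FDR_{\mathcal{M}}$ as averages;
\item possible double counting between SA and LLM false positives on the same pair, which I would bound by $\min(SA, FP_{LLM})$ and declare negligible;
\item treating $m$ as fixed rather than random.
\end{itemize}
For the last point, I would condition on $m$ and note that $\mathbb{E}[m\cdot FDR]$ equals $\mathbb{E}[H_{incorrect}]$ under the same-rate assumption. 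Summing the two brackets then gives the claimed expression.
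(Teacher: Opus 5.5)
Your proposal is correct and follows the paper's proof essentially step for step. You split $\mathbb{E}[SHD_{total}]$ into intra- and inter-partition parts by linearity, let $FN_{intra}\to 0$ so the intra part becomes $SA + N_{NotIntra}\cdot FPR_{\mathcal{C}}$, and obtain $E_{cut}(1-Rec_{\mathcal{M}}) + m\cdot FDR_{\mathcal{M}}$ from $FN_{inter}=E_{cut}-H_{correct}$ and $FP_{inter}=H_{incorrect}$. Your reading of $N_{NotIntra}$ as the intra-partition non-adjacent pairs is the one the paper's proof actually uses, since it divides $FP_{LLM}$ by it, and your accounting of where the ``$\approx$'' comes from (notably the possible $SA$/$FP_{LLM}$ overlap) is a refinement the paper omits rather than a different route.
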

\begin{proof}
    We start from the following decomposition
    \begin{equation}
        \mathbb{E}[SHD_{total}] = \mathbb{E}[SHD_{infra}] + \mathbb{E}[SHD_{inter}]
    \end{equation}
    From previous theorems we know that, in the infinite data limit and under faithfulness, $\mathbb{E}[SHD_{infra}] \to \mathbb{E}[FP_{infra}]$. Further, we know that previous analysis that $\mathbb{E}[FP_{infra}] = SA + \mathbb{E}[FP_{LLM}]$. 
    How are $FP_{infra}$ generated? Given how FCI works, in the asymptotic limit those can only be generated by the LLM inserting them as prior constraints. All other possible false-positives can only be spurious adjacencies, since all other kinds of false-positives are asymptotically removed by FCI. Therefore, we have
    \begin{equation}
        \mathbb{E}[FPR_{LLM}] = \frac{\mathbb{E}[FP_{LLM}]}{N_{NotIntra}} \implies \mathbb{E}[FP_{LLM}] = N_{NotIntra} \cdot \mathbb{E}[FPR_{LLM}].
    \end{equation}
    For the inter-partition SHD, we can conduct a similar analysis. We start with the decomposition $\mathbb{E}[SHD_{inter}] = \mathbb{E}[FN_{inter}] + \mathbb{E}[FP_{inter}]$. We observe that the number of false negatives corresponds to the number of edges seved by the partitioning minus the ones recovered by the heuristic, so we have
    \begin{equation}
        \mathbb{E}[FN_{inter}] = E_{cut} - \mathbb{E}[H_{correct}].
    \end{equation}
    The recall of the merge heuristic is defined as $\mathbb{E}[Rec_{LLM}] = \frac{\mathbb{E}[H_{correct}]}{E_{cut}}$ which implies that $\mathbb{E}[H_{correct}] = E_{cut} \cdot Rec_{LLM}$. By sustitution, we get $\mathbb{E}[FN_{inter}] = E_{cut}(1 - Rec_{LLM})$.
    For $FP_{inter}$, it can only be identical to $H_{incorrect}$ since only the Merge heuristic can add new edges during the merge step. We can then develop an expression for $FDR_{\mathcal{M}}$ as
    \begin{equation}
        \mathbb{E}[FP_{inter}] = m \cdot FDR_{\mathcal{M}}
    \end{equation}
    where $m$ is the number of edges proposed by the heuristic.
    By aggregating all those expression together, we get our general result,
    \begin{equation}
        \mathbb{E}[SHD_{total}] \approx (SA + N_{NotIntra} \cdot FPR_{\mathcal{C}}) + (E_{cut} (1 - Rec_{\mathcal{M}}) + m \cdot FDR_{\mathcal{M}}).
    \end{equation}
    where we let $FPR_{LLM} = FPR_{\mathcal{C}}$ and $Rec_{LLM} = Rec_{\mathcal{M}}$ to highlight the importance of conquer and merge phases.
\end{proof}
In essence, we can now see more clearly how the total SHD depends on: 1) The quality of the partitioning, as the number of spurious adjacencies might increase under bad partitioning, 2) The quality of the heuristic, as merge agents act as a safety-net to recover edges lost due to partitioning, and 3) The quality of the LLM's prior constraints within each partition. 

\subsection{Conflict Management}

Conflicts might arise when, after performing causal discovery on two partitions that are overlapping, we observe different edges in different partitions.

From a theoretical standpoint, the exact/most rigorous way of managing conflicts is to search for conditioning sets on the merged graph, but that is not applicable to large-size graphs due to computational reasons. Therefore, we need a tractable criterion, and for CausalSteward we adopted what follows.

Let $u, v \in \textbf{P}_1 \subseteq \textbf{V}$ and $u, v \in \textbf{P}_2 \subseteq \textbf{V}$, i.e.\, both nodes belong to the overlap $\textbf{P}_1 \cap \textbf{P}_2$. Further, denote with $G_1$ the graph derived from $\textbf{P}_1$ and with $G_2$ the graph derived from $\textbf{P}_2$.

We identify 3 cases:
\begin{enumerate}
    \item An edge (any type) is present in $G_1$ but not in $G_2$: In this case, it means there exist a conditioning set s.t. $u \perp v\; |\;\textbf{Z}$ with $\textbf{Z}\in \textbf{P}_2$ and $\textbf{Z} \notin \textbf{P}_1$. Since $\textbf{Z}\subset \textbf{P}_2 \subset \textbf{V} \implies u\perp v \;|\textbf{Z} \;\text{with}\; u, v, \textbf{Z}\in \textbf{V} \implies$ \textbf{No edge between $u$ and $v$}. 
    \item For all other cases, we "overlap" the causal relationship: If the edge $u \rightarrow v$ is present in $G_1$, but the opposite one $v \rightarrow u$ is also present in $G_2$, we insert a bi-directed edge. If the edge $u \rightarrow v$ is present in $G_1$, but the edge $v \circ - \circ u$ is present in $G_2$, we insert the $u \circ - >v$ edge.
\end{enumerate}

Also other available methods leveraging Divide\&Conquer adopt approximations to manage conflicts tractably: For example, DCDILP (Dong et al. 2025) manages conflicts via Linear Programming, and other methods maximize a BIC score greedily.

\section{Additional results}\label{appendix:additional_results}

Hereby, we present a more complete exposition of our results which also includes additional metrics.

\subsection{Case-Study on Earthquakes Dataset}\label{apppendix:case_studies}
The Earthquakes dataset contains 7 variables that describe occurences of earthquakes by date, geographic location, and power. It contains the variables $\textbf{V} = \{richter, deaths, day, month, year, area, region\}$.

Given the small graph size, CAST partitioned the nodes into three groups: $\textbf{P}_1 = \{richter, deaths\}$, $\textbf{P}_2 = \{day, month, year\}$, and $\textbf{P}_3 = \{area, region\}$.

We observed that, during the explain phase, it understood correctly the variable meaning without the need of the domain expert. During the conquer phase:

\begin{itemize}
    \item On the first cluster, LLMs correctly establish that richter $\rightarrow$ deaths.
    \item On the second, the hypothesis agent suggests year $\rightarrow$ month $\rightarrow$ day, which is discarded by the critic agent, showing a nice corrective mechanism.
    \item On the third, it correctly suggests that region $\rightarrow$ area, as an area is contained within a region, and some regions are more susceptible to earthquakes.
\end{itemize}

During the merge-phase, CAST reflects on possible connections between clusters. Here is an extract:
\begin{tcolorbox}[colback=aqua!5,colframe=aqua!40!black,title=Conquer Agent on CausalChambers]
\textit{"[...] the time an event occurs does not cause it to occur in a particular area – and intervening on the date would not change the location of the earthquake. [...] The location (area, region) doesn’t “cause” the earthquake’s measurable magnitude or human toll. Instead, geographic factors might correlate with differences (for example, construction practices or terrain might affect death tolls) but they do not cause the energy release (richter) of the earthquake. So edges like (area, richter) or (region, deaths) are poorly characterized as direct causal effects."}
\end{tcolorbox}
Overall, we have more meaningful causal relations compared to prior works. CAST is more conservative, and removes connections when uncertain. 



\newpage
\subsection{CausalMan Small}\label{appendix:causalman_small_results}

\begin{table}[h]
\centering
\resizebox{0.95\textwidth}{!}{
\begin{tabular}{c|ccccc}
  \toprule
  Dataset & \multicolumn{5}{c}{CausalMan Small}  \\
  \midrule
   Method & F1 $\uparrow$ & Prec.$\uparrow$ & Recall $\uparrow$ & SHD $\downarrow$ & Runtime $\downarrow$\\
  \midrule
               FCI & 0.028(0.012) & 0.021(0.005) & 0.043(0.012) & 221.4(3.36) & 9.37(0.21)\\
             XGES & 0.056(0.009) & 0.037(0.006) & 0.114(0.159) & 448.8(14.89) & 25,173.80(7,813.26) \\
             GranDAG & 0.002(0.003) & 0.018(0.040) & 0.002(0.004) & 113.4(2.51) & 333.43(5.22)\\
             CausalCopilot & 0.078(0.012) & 0.260(0.100) & 0.048(0.012) & 117.4(6.4) & -\\
             BOSS & 0.054(0.004) & 0.029(0.002) & - & 864.20(18.60)& - \\
             LLM-BFS(o3-mini) & 0.283(0.000) & \textbf{1.000(0.000)} &- & 315.2(0.0) & - \\
             Pairwise(Qwen3-14B) & 0.153(0.008) & 0.093(0.005) & - & 583.50(11.1)& - \\
             Pairwise(o3-mini) & 0.126(0.004) & 0.074(0.003) & 0.421(0.017) & 710(6.2) & - \\
    \midrule
  \multicolumn{6}{c}{o3-mini}\\
  \midrule
     \ourmethod{} (HITL+RAG) &  0.303(0.076)  & 0.454(0.123) & 0.228(0.055) & 118.4(11.35) & 316.65(40.26) \\
             \ourmethod{} (HITL) & 0.304(0.062) & 0.485(0.101)  & 0.224(0.053) & 112.2(10.40)  & \textbf{254.66(18.54)}\\
             \ourmethod{} (RAG) & 0.235(0.093) & 0.373(0.118)  & 0.178(0.080) & 121.4(6.34) & 455.31(115.65)\\
            \midrule
    \multicolumn{6}{c}{gpt 4o-mini}\\ 
    \midrule 
    \ourmethod{} (HITL+RAG)& 0.142(0.027) & 0.217(0.063) & 0.105(0.018) & 140.4(13.00) & 1,630.19(614.08)\\
             \ourmethod{} (HITL) & 0.139(0.056)  & 0.326(0.149) & 0.096(0.042) & 126.8(15.93) & 1,456.33(580.47)\\
             \ourmethod{} (RAG) & 0.113(0.039) & 0.169(0.047) & 0.089(0.038) & 145.0(14.35) & 913.15(231.66)\\
              \midrule
  \multicolumn{6}{c}{Qwen 3 14B}\\
  \midrule
    \ourmethod{} (HITL+RAG)& 0.424(0.082) & 0.724(0.169) & 0.307(0.070) & 103.00(10.07) & 1,064.03(231.38)\\
             \ourmethod{} (HITL) & \textbf{0.483(0.062)} & 0.608(0.069) & \textbf{0.402(0.059)} & \textbf{102.4(17.27)} & 1,856.65(477.45)\\
             \ourmethod{} (RAG) & 0.426(0.033) & 0.656(0.176) & 0.335(0.072) & 111.4(14.44) & 944.92(164.78)\\
            \bottomrule
\end{tabular}}

    \vspace{0.3cm}
    \caption{\textbf{Results on CausalMan Small.} Different \ourmethod{} variants largely outperform other baselines on Small. CausalCopilot executed FCI for Small.}
\end{table}

\begin{table}[h]
\centering
\begin{tabular}{l|l|lll}
\toprule
\multicolumn{5}{c}{ CausalMan Small}\\
\toprule
                          & LLM         & \#Tokens & \#HITL Calls & \#RAG Calls \\
                          \midrule
\multirow{3}{*}{HITL+RAG} & Qwen3-14B   & 161,500.8(38,975.15)
 & 3.6(2.88) & 0 \\
                          & o3-mini     & 148,650.8
(23,167.15
) & 0 & 0 \\
                          & GPT 4o-mini & 692,451.4(161,356.91
)
 & 1.60(2.07) & 85.20(21.67) \\
                          \midrule
\multirow{3}{*}{HITL}     & Qwen3-14B   & 141,070.2(15,844.65)
 & 5.6(3.21) & - \\
                          & o3-mini     & 132,012.6(15,839.26)
 & 0 & - \\
                          & GPT 4o-mini & 126,612.8(43,272.24)
 & 61.00(29.86) & - \\
                          \midrule
\multirow{3}{*}{RAG}      & Qwen3-14B   & 146,948.4
(23,176.93
)
 & - & 2.00(1.58)            \\
                          & o3-mini     & 161,193.0(37,013.54
)
 & - & 0 \\
                          & GPT 4o-mini & 440,740.2(115,599.65
) & - & 69.4(16.38) \\
                          \bottomrule
\end{tabular}
    \vspace{0.3cm}
    \caption{\textbf{\ourmethod{}}. Ablation for different LLMs on CausalMan Small.}
\end{table}

\newpage
\subsection{CausalMan Medium}\label{appendix:causalman_medium_results}

\begin{table}[h]
\centering
\resizebox{0.95\textwidth}{!}{
    \begin{tabular}{c|ccccc}
  \toprule
  Dataset & \multicolumn{5}{c}{CausalMan Medium} \\
  \midrule
    Method & F1 $\uparrow$ & Prec.$\uparrow$ & Recall $\uparrow$ & SHD $\downarrow$ & Runtime $\downarrow$ \\
  \midrule
               FCI & \multicolumn{5}{c}{Intractable} \\
             XGES & 0.014(0) & 0.211(0) & 0.007() & 547(0)  & - \\
             GranDAG & 0(0) & 0(0) & 0(0) & \textbf{544.4(5.77)} & 1061.17(18.91)\\
             CausalCopilot & 0.014(0) & 0.211(0) & 0.007(0) & 547(0) & -\\
             BOSS & 0.057(0.005) & 0.030(0.002) &- & 1,817.4(48.3) & - \\
    \midrule
  \multicolumn{6}{c}{o3-mini}\\
  \midrule
     \ourmethod{} (HITL+RAG) &  0.141(0.064)  & 0.357(0.119) & 0.089(0.043) & 579.6(25.01) & 1,370.25(222.84)\\
             \ourmethod{} (HITL) & 0.175(0.084) & 0.499(0.127) & 0.108(0.057) & 558.8(15.498) & \textbf{1,052.11(87.07)}\\
             \ourmethod{} (RAG) & 0.209(0.089)  & 0.518(0.201) & 0.131(0.057) & 571.6(45.845) & 1,321.15(264.94)\\
            \midrule
    \multicolumn{6}{c}{gpt 4o-mini}\\ 
    \midrule 
    \ourmethod{} (HITL+RAG)& 0.086(0.042) & 0.271(0.113) & 0.051(0.023) & 568.2(20.71) & 2,488.79(2234.79) \\
             \ourmethod{} (HITL) & 0.103(0.039) & 0.392(0.083) & 0.056(0.023) & 547.0(6.89) & 1,929.17(859.52) \\
             \ourmethod{} (RAG) & 0.113(0.049) & 0.296(0.094) & 0.072(0.035) & 594.2(81.69) & 1,241.17(627.636) \\
              \midrule
  \multicolumn{6}{c}{Qwen 3 14B}\\
  \midrule
    \ourmethod{} (HITL+RAG)& \textbf{0.274(0.052)} & \textbf{0.656(0.110)} & 0.175(0.036) & 558.0(21.86) & 4,179.19(327.94)\\
             \ourmethod{} (HITL) & 0.266(0.019) & 0.508(0.222) & \textbf{0.191(0.021)} & 610.80(68.00) & 5,905.02(1139.03) \\
             \ourmethod{} (RAG) & 0.211(0.106) & 0.485(0.129) & 0.139(0.082) & 568.4(23.64) & 4,901.53(518.27)\\
            \bottomrule
    \end{tabular}
}
    \vspace{0.3cm}
    \caption{\textbf{Results on CausalMan Medium.} Different \ourmethod{} variants achieve strong competitive results on Medium. CausalCopilot executed XGES for Medium. FCI could not run for Medium.}
\end{table}

\begin{table}[h]
\centering
\begin{tabular}{l|l|lll}
\toprule
\multicolumn{5}{c}{ CausalMan Medium}\\
\toprule
                          & LLM         & \#Tokens & \#HITL Calls & \#RAG Calls \\
                          \midrule
\multirow{3}{*}{HITL+RAG} & Qwen3-14B   & 585,953.4(36,933.9)
 & 10.6(4.8) & 2.0(0.71) \\
                          & o3-mini     & 563,568.2(81,608.5)
 & 0 & 0 \\
                          & GPT 4o-mini & 588,292.4(280,373.52)
 & 8.60(11.57) & 58.00(17.82) \\
                          \midrule
\multirow{3}{*}{HITL}     & Qwen3-14B   & 525,285(115,550.9) & 15.2(5.39) & - \\
                          & o3-mini     & 482,419.2(51,456.2
) & 0.87(0.52) & - \\
                          & GPT 4o-mini & 218,590.4(69,261.0)
 & 40.07(17.74) & - \\
                          \midrule
\multirow{3}{*}{RAG}      & Qwen3-14B   & 668,371.4(61,898.7) & - & 13.2(2.91) \\
                          & o3-mini     & 544,760.0(96,936.4)

  & - & 0 \\
                          & GPT 4o-mini & 749,905.2(356,113.73
)
 & - & 85.2(42.05) \\
                          \bottomrule
\end{tabular}
    \vspace{0.3cm}
    \caption{\textbf{\ourmethod{}}. Ablation for different LLMs on CausalMan Medium.}
\end{table}

\newpage 
\subsection{Neuropathic Pain Dataset}\label{appendix:neuropathic_results}

\begin{table}[h]
\centering
\resizebox{0.95\textwidth}{!}{
    \begin{tabular}{c|ccccc}
  \toprule
  Dataset & \multicolumn{5}{c}{Neuropathic Pain Dataset} \\
  \midrule
    Method & F1 $\uparrow$ & Prec.$\uparrow$ & Recall $\uparrow$ & SHD $\downarrow$ & Runtime $\downarrow$ \\
  \midrule
               FCI & \multicolumn{5}{c}{Intractable} \\
             XGES & 0.309(0.003) & 0.442(0.005) & 0.238(0.003) & 786.40(5.177) & 520.24(6.57) \\
             GranDAG & 0.004(0.000) & 0.009(0.000) & 0.003(0.000) & 993.60(2.88) & \textbf{30.59(0.182)} \\
             CausalCopilot & \textbf{0.334(0)} & 0.457(0) & \textbf{0.264(0)} & 778(0) & 769(12) \\
    \midrule
  \multicolumn{6}{c}{o3-mini}\\
  \midrule
     \ourmethod{} (HITL+RAG) & 0.104(0.032) & 0.572(0.183) & 0.057(0.018) & 758.80(26.84) & 1,663(1,283.45) \\
             \ourmethod{} (HITL) & 0.125(0.002) & \textbf{0.674(0.041)} & 0.069(0.002) & \textbf{742.67(4.04)} & 801.87(13.69) \\
             \ourmethod{} (RAG) & 0.095(0.038) & 0.552(0.447) & 0.053(0.022) & 762.00(21.55) & 790.77(137.81) \\
            \midrule
    \multicolumn{6}{c}{gpt 4o-mini}\\ 
    \midrule 
    \ourmethod{} (HITL+RAG)& 0.030(0.019) & 0.145(0.090) & 0.017(0.010) & 831.40(13.76) & 4,115.49(2,306.32) \\
             \ourmethod{} (HITL) & 0.027(0.026) & 0.235(0.249) & 0.014(0.014) & 797.67(27.46) & 4,763.00(1,949.63) \\
             \ourmethod{} (RAG) & 0.043(0.031) & 0.151(0.091) & 0.026(0.018) & 850.20(28.99) & 3,543.94(2,202.961) \\
              \midrule
  \multicolumn{6}{c}{Qwen3-14B}\\
  \midrule
    \ourmethod{} (HITL+RAG)& 0.013(0.002) & 0.152(0.008) & 0.007(0.001) & 795.00(1.41) & 3,250.408(1,412.83) \\
             \ourmethod{} (HITL) & 0.012(0.002) & 0.126(0.046) & 0.006(0.001) & 802.67(13.32) & 6,214.86(5,231.01) \\
             \ourmethod{} (RAG) & 0.064(0.051) & 0.465(0.183) & 0.035(0.029) & 776.60(16.18) & 4,672.66(760.25)\\
            \bottomrule
    \end{tabular}
}
    \vspace{0.3cm}
    \caption{\textbf{Results on Neuropathic Dataset.} The data is causally sufficient, binary, and the causal graph is sparse. For this reason, XGES is less in disadvantage on this dataset with respect to CausalMan. \ourmethod{} is still above the baselines in terms of precision and SHD for o3-mini. When necessary, CausalCopilot was steered towards XGES, as it often preferred intractable methods such as FCI.}
\end{table}

\begin{table}[h]
\centering
\begin{tabular}{l|l|lll}
\toprule
\multicolumn{5}{c}{ Neuropathic Pain Dataset}\\
\toprule
                          & LLM         & \#Tokens & \#HITL Calls & \#RAG Calls \\
                          \midrule
\multirow{3}{*}{HITL+RAG} & Qwen3-14B   & 264,932.00(5,334.41)
 & 4(0.2) & 9.5(0.71) \\
                          & o3-mini     & 379,389.2(147,545.47)
 & 0 & 0 \\
                          & GPT 4o-mini & 1,041,107.6(318,067.42)
 & 3.40(3.36) & 225.0(73.73) \\
                          \midrule
\multirow{3}{*}{HITL}     & Qwen3-14B   & 480,190.00(372,857.35)
 & 23.33(33.49) & - \\
                          & o3-mini     & 314,726.33(6,386.65)
 & 0 & - \\
                          & GPT 4o-mini & 449,445.00(241,653.30)
 & 148.00(101.38) & - \\
                          \midrule
\multirow{3}{*}{RAG}      & Qwen3-14B   & 696,260.8(118,152.04
)
 & - & 35.00(7.35) \\
                          & o3-mini     & 330,736(44,927.24) & - & 0 \\
                          & GPT 4o-mini & 831,541.8(73,133.23)
 & - & 186.60(25.71) \\
                          \bottomrule
\end{tabular}
    \vspace{0.3cm}
    \caption{\textbf{\ourmethod{}}. Ablation for different LLMs on the Neuropathic Pain Dataset.}
\end{table}

\subsection{Extended Discussion}

\paragraph{Tool usage across LLMs:} Ablating with respect to RAG and HITL, we can evaluate the impact of prior knowledge during the discovery process. However, we see that different LLMs respond differently to having those new tools available. For Qwen3 and GPT4o-mini, the number of human queries is linked to an increase in performance. However, we observe that GPT 4o-mini is very prone to forget past interactions, and frequently asks multiple times for the same information. Other LLMs, in particularly o3-mini, exhibit a smaller increase in performance.
Although prompts encourage human interaction, o3-mini shows a scarce amount of HITL calls, even in instances where it is necessary. From this, we deduce that the best fit for \ourmethod{} are those LLMs that have good instruction-following capabilities. Finally, results show how \ourmethod(HITL+RAG) provides a viable trade-off between automatization and human control, and is consistently better than using just RAG. Again, this holds only for those models which follow the prompted instructions.

\paragraph{Ablation for Partitioning Hyperparameter $k$:} In \ourmethod{}, the divide agents $\mathcal{D}_{hyp}$ and $\mathcal{D}_{critic}$ are queried every time a group has a number of variables bigger than $k$, which is an integer hyperparameter to choose. We remark that this partitioning is not enforced: Divide agents can choose to not partition further even if the partition-size is bigger than $k$. In Table \ref{fig:ablations_k}, we varied this hyperparameter and observed how performance varied. We see how increasing the partition size reduces SHD but also recall, while increasing precision. Overall $F_1$ also degrades. As partitions get bigger, it becomes harder to get a good coverage, and LLMs tend to act more conservatively.

\begin{figure}
    \centering
    \begin{tabular}{c|ccccc}
    \toprule
    Variant  & $F_1\uparrow$& Precision$\uparrow$& Recall$\uparrow$&$SHD\downarrow$\\
    \toprule
         CAST (k = 5) &	0.152(0.017)&0.409(0.125)	&0.096(0.016)	&598.67(26.25)  \\
         CAST (k = 10)	&0.141(0.064)	&0.357(0.119)&	0.089(0.043)	&579.60(25.01)\\
         CAST (k = 20) &	0.115(0.022) &	0.445(0.015)	&0.067(0.014)&	544.00(5.65)\\
         \bottomrule
         \end{tabular}
    \caption{\textbf{Ablation for partitioning hyperparameter $k$ on CausalMan Medium.} Experiments conducted with \ourmethod{}(RAG+HITL) using o3-mini.}
    \label{fig:ablations_k}
\end{figure}

\paragraph{Performance on Neuropathic-Pain diagnosis data:} Although high-dimensional, data from the Neuropathic Pain dataset is binary and causally sufficient, which results in less limitations from a causal identifiability point of view. Indeed, purely data-driven methods tend to be less in disadvantage with respect to \ourmethod{}. Still, it is evident that the impact of prior knowledge permits \ourmethod{} to achieve high precision and lower SHD. We still observe, again, that other LLM baselines could not run due to slow inference (LLM-Pairwise grows $\mathcal{O}(D^2)$) and limited context window. 

\section{Agentic RAG}\label{appendix:rag}

One of the pillars of \ourmethod{} is the capacity to dynamically query an agentic RAG system on demand. In detail, it takes the form of a tool-call, and every agent can decide when to query the RAG system. In this way, agents can use their own internal prior for trivial tasks, and retrieve further prior information when it is necessary.

\subsection{Architecture}

The system is composed by a a set of agents which, upon receiving a query $\mathcal{Q}$, are responsible for performing a web search to retrieve relevant causal knowledge.

\paragraph{Web-search agents:} The system is composed by a set of agents which perform a web search. The steps are: 

\begin{enumerate}
    \item A Query agent receives the query $\mathcal{Q}$, and translates it into keywords to be used on a search engine.
    \item A web-search API is executed with the keywords, and we extract the 5 best results.
    \item A summary agent aggregates all results by writing a summary.
    \item A reflection agent reflects on the summary and extrapolate further causal information.
    \item A final summary agent provides a conclusive answer with a comprehensive explanation.
\end{enumerate}

\section{Divide-Conquer-Combine Algorithm}
Before detailing the logic of the process, we clarify the organization of variable partitions and their local causal graphs. Each partition is represented as $\mathbf{P}_i$, with its corresponding local causal graph defined as
$$\mathcal{G}_i := \mathcal{G}(\mathbf{P}_i, \mathbf{E}_i),$$
where $\mathbf{E}_i$ is the edge list for that partition.

Partitions are stored as nodes within a Partition Tree, denoted by $\mathtt{T}$, whose structure has the following key aspects.

\textbf{Attributes of each partition node:}
\begin{itemize}\item \textbf{Variable list} ($\mathbf{P}_i$): the subset of variables in the partition.
    \item \textbf{Edge list} ($\mathbf{E}_i$): the edges defining the local causal graph.
    \item \textbf{Description}: specific information about the partition, used together with the general description $I$ to write prompts $\textbf{p}$ for the agents.
    \item \textbf{Solved flag} (\texttt{solved}): a boolean indicating whether the partition has been processed by the Conquer or Merge operations.
\end{itemize}
\textbf{Functions of the Partition Tree:}
\begin{itemize}\item \textbf{Remove node:} deletes a partition node from $\mathtt{T}$.
    \item \textbf{Retrieve children:} obtains the list of child partitions for a given node. For example, for a partition $\mathbf{P}_i$, its subpartitions $\{\mathbf{P}_{i,1}, \dots, \mathbf{P}_{i,N}\}$ are stored as its children.
    \item \textbf{Find unsolved leaves:} identifies the leaf nodes (nodes without children) that have not yet been marked as \texttt{solved}.
    \item \textbf{Find ready parents:} identifies parent nodes for which all children have been marked as \texttt{solved}.
\end{itemize}
The root node of $\mathtt{T}$ stores the full variable set $\mathbf{V}$.

\begin{algorithm}
\caption{Agentic Divide Operation}\label{alg:divide}
\begin{algorithmic}
\State \textbf{Input:} partition $\mathbf{P}_i\subset \mathbf{V}$, partition tree $\mathtt{T}$
\State \textbf{Output:} updated partition tree $\mathtt{T}$
\Function{\texttt{Divide}}{$\mathbf{P}_i$, $\mathtt{T}$}
    \If{$|\mathbf{P}_i| < \mathcal{K}$}
        \State \Return  \Comment{Partition size is acceptable}
    \Else
        \State $\{\mathbf{P}_{i,1},\dots, \mathbf{P}_{i,N}\}_{hyp} \gets \mathcal{D}_{hyp}(\mathbf{P}_i, I, \textbf{p}_{div.hyp})$ \Comment{If not, invoke agents again}
        \State $\{\mathbf{P}_{i,1},\dots, \mathbf{P}_{i,N}\}_{critic} \gets \mathcal{D}_{critic}(\{\mathbf{P}_{i,1},\dots, \mathbf{P}_{i,N}\}_{hyp}, I, \textbf{p}_{div.critic})$
        \If{$N>1$} \Comment{Agents do not always propose subpartitions}
            \State Add $\{\mathbf{P}_{i,1},\dots, \mathbf{P}_{i,N}\}_{critic}$ as children of $\mathbf{P}_i$ in $\mathtt{T}$
                    \ForAll{$\mathbf{P}_{i,j} \in \{\mathbf{P}_{i,1},\dots, \mathbf{P}_{i,N}\}_{critic}$}
            \State \Call{\texttt{Divide}}{$\mathbf{P}_{i,j}$, $\mathtt{T}$} \Comment{Apply recursively to subpartitions}
            \EndFor
        \EndIf
    \EndIf
    \State \Return{$\mathtt{T}$}
\EndFunction
\end{algorithmic}
\end{algorithm}

\begin{algorithm}
\caption{Key Operations on Partition Tree data structure}\label{alg:partition_tree}
\begin{algorithmic}
\Function{\texttt{RemoveNode}}{$\mathbf{P_i}$, $\mathtt{T}$} \Comment{Remove partition node $\mathbf{P_i}$, pruning $\mathtt{T}$}
\EndFunction
\Function{\texttt{ChildrenOf}}{$\mathbf{P_i}$, $\mathtt{T}$} \Comment{Find all children nodes, i.e. sub-partitions, of $\mathbf{P}_i$ in $\mathtt{T}$}
\EndFunction
\Function{\texttt{FindUnsolvedLeaves}}{$\mathtt{T}$} \Comment{Find all leaf nodes in $\mathtt{T}$ not marked "\texttt{solved}"}
\EndFunction
\Function{\texttt{FindReadyParents}}{$\mathtt{T}$} \Comment{Find all nodes in $\mathtt{T}$ whose children are all marked "\texttt{solved}"}
\EndFunction
\end{algorithmic}
\end{algorithm}

\begin{algorithm}
\caption{Agentic Conquer Operation}\label{alg:conquer}
\begin{algorithmic}
\State \textbf{Input:} leaf partition node $\mathbf{P}_i \subset \mathbf{V}$, partition tree $\mathtt{T}$, extended description $I$ made by explainer agent $\mathcal{E}.$
\State \textbf{Output:} $\mathtt{T}$ with $\mathbf{P}_i$ "solved", i.e. with a local causal graph $\mathcal{G}_i$
\Function{\texttt{Conquer}}{$P_i$, $\mathtt{T}$}
    \State $\mathcal{G}^{hyp}_i \gets \mathcal{C}_{hyp}(\mathbf{P}_i, I, \textbf{p}_{hyp})$ \Comment{Local graph $\mathcal{G}_i:=\mathcal{G}(\mathbf{P}_i,\mathbf{E}_i)$ hypothesis}
    \State $\mathcal{G}^{critic}_i \gets \mathcal{C}_{hyp}(\mathcal{G}^{hyp}_i, I, \textbf{p}_{hyp})$ \Comment{Local graph revision}
    \State $\tilde{\mathcal{G}}_i^{critic} \gets \mathcal{F}(\mathtt{D}_i, \mathcal{G}^{critic}_i)$ \Comment{Run FCI on data subset $\mathtt{D}_i$, with $\mathcal{G}_i$ as prior knowledge}
    \State Store $\tilde{\mathcal{G}}_i^{critic}$ in $\mathtt{T}$
    \State Mark $\mathbf{P}_i$ with a "\texttt{solved}" flag
    \State \Return $\mathtt{T}$ 
\EndFunction
\end{algorithmic}
\end{algorithm}

\begin{algorithm}
\caption{Agentic Combine Operation}\label{alg:combine}
\begin{algorithmic}
\State \textbf{Input:} parent partition node $\mathbf{P}_i \subset \mathbf{V}$, partition tree $\mathtt{T}$
\State \textbf{Output:} $\mathtt{T}$ with $\mathbf{P}_i$ "solved", i.e. with a local causal graph $\mathcal{G}_i$
\Function{\texttt{Combine}}{$\mathbf{P}_i$, $\mathtt{T}$}
    \State $\textbf{C}\gets$\Call{ChildrenOf}{$\mathbf{P_i}$, $\mathtt{T}$} \Comment{Defined in \ref{alg:partition_tree}}
    \State Initialize $\mathbf{E}_i\gets \emptyset$
    \ForAll{$\mathbf{P}_{i,j}$ in \textbf{C}}
        \State $\mathbf{E}_i\gets\mathbf{E}_i \cup \mathbf{E}_{i,j}$ \Comment{Unite edge lists of local graphs}
        \State \Call{\texttt{RemoveNode}}{$\mathbf{P}_{i,j}$, $\mathtt{T}$} \Comment{Defined in \ref{alg:partition_tree}}
    \EndFor
    \State $\mathcal{G}_{comb}^{hyp}\gets \mathcal{M}_{hyp}(\mathcal{G}_1,\dots, \mathcal{G}_M, I, \textbf{p}_{comb})$ \Comment{Hypothesis for bridging edges}
    \State $\mathcal{G}_{comb}^{critic}\gets \mathcal{M}_{hyp}(\mathcal{G}_i^{hyp}, I, \textbf{p}_{comb})$ \Comment{Revised bridging edges}
    \State $\mathcal{G}_i\gets\mathcal{G}_{comb}^{critic}\cup \mathcal{G}(\mathbf{P_i},\mathbf{E}_i)$ \Comment{Unite bridging and local edges}
    \State Store $\mathcal{G}_i$ in $\mathtt{T}$
    \State Mark $\mathbf{P}_i$ as "\texttt{solved}" in $\mathtt{T}$
    \State \Return $\mathtt{T}$
\EndFunction
\end{algorithmic}
\end{algorithm}

\begin{algorithm}
\caption{Agentic Explain, Divide, Conquer and Combine Phases}\label{alg:explain_divide_conquer_combine}
\begin{algorithmic}
\State \textbf{Input:} Data $\mathtt{D}$ with variable set $\mathbf{V}$, integer threshold $\mathcal{K}$
\State \textbf{Output:} Causal graph $\mathcal{G}(\mathbf{V}, \mathbf{E})$
\State $I \gets \mathcal{E}(\mathtt{D}_{labels}, \textbf{p}_{explain})$ \Comment{Generate description $I$}
\State Initialize $\mathtt{T}$ with root node $\mathbf{V}$
\State $\{\mathbf{P}_1,\dots, \mathbf{P}_N\}_{hyp}\gets\mathcal{D}_{hyp}(V, I, \textbf{p}_{div.hyp})$  \Comment{Hypothesis partition of $\mathbf{V}$}
\State $\{\mathbf{P}_1,\dots, \mathbf{P}_N\}_{critic}\gets\mathcal{D}_{critic}(\{\mathbf{P}_1,\dots , \mathbf{P}_N\}, I, \textbf{p}_{div.critic})$ \Comment{Revised partition of $\mathbf{V}$}
\State Add $\{\mathbf{P}_1,\dots, \mathbf{P}_N\}$ as children of $\mathbf{V}$ in $\mathtt{T}$ \Comment{Store partitions in $\mathtt{T}$}
\For{$\mathbf{P}_i$ in $\{\mathbf{P}_1,\dots, \mathbf{P}_N\}$} \Comment{Repeat recursively}
    \State \Call{\texttt{Divide}}{$\mathbf{P}_i$, $\mathtt{T}$} \Comment{Defined in \ref{alg:divide}}
\EndFor
\While{$V$ is not marked "\texttt{solved}" in $\mathtt{T}$}
    \State $\textbf{L}\gets$ \Call{\texttt{FindUnsolvedLeaves}}{$\mathtt{T}$} \Comment{Defined in \ref{alg:partition_tree}}
    \While{$\textbf{L}\neq \emptyset$}
        \ForAll{$\mathbf{P}_{leaf}$ in $\textbf{L}$} \Comment{Solve the leaf nodes of $\mathtt{T}$}
            \State \Call{\texttt{Conquer}}{$\mathbf{P}_{leaf}$, $\mathtt{T}$} \Comment{Defined in \ref{alg:conquer}}
        \EndFor
        \State $\textbf{R}\gets$ \Call{\texttt{FindReadyParents}}{$\mathtt{T}$} \Comment{Defined in \ref{alg:partition_tree}}
        \While{$\textbf{R}\neq\emptyset$}
            \ForAll{$P_{parent}$ in $\textbf{R}$} \Comment{Combine the local graphs of solved partitions}
                \State \Call{\texttt{Combine}}{$P_{parent}$, $\mathtt{T}$} \Comment{Defined in \ref{alg:combine}}
            \EndFor
            \State $\textbf{R}\gets$ \Call{\texttt{FindReadyParents}}{$\mathtt{T}$}
        \EndWhile
        \State $\textbf{L}\gets$ \Call{\texttt{FindUnsolvedLeaves}}{$\mathtt{T}$}
    \EndWhile
\EndWhile
\State \Return $\mathcal{G}(\mathbf{V}, \mathbf{E})$ from $\texttt{T}$
\end{algorithmic}
\end{algorithm}

\section{Benchmarking setting} \label{appendix:benchmarking_setting}

In this section we provide additional details on the benchmarking setting, including details on the implementation and how each dataset has been pre-processed.

\paragraph{Hardware:} Experiments for \ourmethod{} using GPT 4o-mini and o3-mini were performed through Azure OpenAI. For Qwen3-14B, experiments were performed by using an A100 GPU (80GB VRAM) and an  AMD EPYC 7643 CPU. Identically, also GranDAG used the same GPU. XGES instead has been run as a CPU-only job.

\subsection{Neuropathic Pain Dataset:}
We perform our experiments on the Neuropathic pain dataset (222 nodes). All variables are binary and we do not perform any normalization. 

We remark that the data-driven FCI algorithm within the Conquer phase is executed by extracting the relevant columns from the dataset (related to the nodes in the sub-graph to refine). However, this dataset contains a number of constant columns, therefore we added a small gaussian noise $\epsilon \sim \mathcal{N}(0, \sigma)$ with $\sigma = 0.0001$ to the dataset, with the goal of maintaining the correlation matrix always non-singular. This is necessary, as the $\mathcal{X}^2$ conditional independence tests require it. This dataset is publicly available at \href{https://github.com/TURuibo/Neuropathic-Pain-Diagnosis-Simulator}{Link}

\subsection{CausalMan Dataset:} We perform our experiments on CausalMan Small (53 nodes) and Medium (186 nodes) with 50.000 samples. Data has been normalized between -1 and 1. Categorical variables have been embedded on an equally spaced grid, still between -1 and 1. Further details on the data in the next section \ref{appendix:causalman_simulator}.

\section{Manufacturing Data: CausalMan Dataset}\label{appendix:causalman_simulator}

The CausaMan Dataset \citep{tagliapietra2025causalmanphysicsbasedsimulatorlargescale} is modeled after a real world production line performing a press-fitting process. It presents two variants, namely \textit{CausalMan Small} and \textit{CausalMan Medium}, which feature respectively 53 and 186 variables. From the evaluations present in \cite{tagliapietra2025causalmanphysicsbasedsimulatorlargescale}, we observe how classic data-driven methods are limited due to limitations arising from causal identifiability.

\subsection{Data Description}

Data from the CausalMan Dataset features different characteristics which makes it a challenging benchmark. Importantly, it exhibits a large amount of confounding effects and non-identifiable causal relationships, which bounds the efficacy of purely data-driven methods.

We recall here the main characteristics:
\begin{itemize}
    \item \textbf{Causal Insufficiency:} A large SCM where half of the variables are hidden, implying an high level of confounding effects.
    \item \textbf{Hybrid Data-types:} Continuous, discrete, categoricals and binary variables are all present.
    \item \textbf{Non-linear Causal Mechanisms and non-additive noise models:} Causal mechanisms are often non-linear and the noise is not treated additively.
    \item \textbf{Discrete-to-Continuous relationships:} Relationships from discrete variables to continuous ones are present. They take the form of switching mechanisms, where the noise distribution of certain continuous nodes varies depending on the discrete values of its parents.
\end{itemize}

\subsubsection{}

\section{Prompts}\label{appendix:prompts}

We present in this section all the prompt templates that have been used. We remark that every phase -\textbf{except for the explain phase}- adopts an Hypothesis-Critic approach, where an agent proposes an hypothesis for the solution, and a critic agent follows with a refined version. We recall our notation: $\mathcal{E}$ is the explainer agent, $\mathcal{D}_{hyp}$ is the Divide-Hypothesis agent,  $\mathcal{D}_{critic}$ is the Divide-Critic agent,$\mathcal{C}_{hyp}$ is the Conquer-Hypothesis agent,  $\mathcal{C}_{critic}$ is the Conquer-Critic agent, $\mathcal{M}_{hyp}$ is the Merge-Hypothesis agent, and finally $\mathcal{M}_{critic}$ is the Merge-Critic agent. 

We follow with every prompt template that has been used. We additionally concatenate to those prompts a number of few-shot examples on how to efficiently use the provided tool (RAG and HITL), which we omit for brevity.

\begin{tcolorbox}[colback=aqua!5,colframe=aqua!40!black,title=Explainer Agent (System Prompt)]
You are an expert in the {domain} domain tasked with explaining a dataset and its variables. Your goal is to provide a clear and concise description of the dataset, including the meaning of each variable and their potential relationships. You will receive a list of variables and, optionally, a dataset description.
Utilize the available tools to look up the meanings of variable names. Employ an iterative thought-action-observation approach to gather evidence, validate your reasoning, and refine both the description of variables.
\end{tcolorbox}

\begin{tcolorbox}[colback=aqua!5,colframe=aqua!40!black,title=Explainer Agent (User Prompt)]
You are tasked with analyzing the following dataset:
\newline
Variable names: \textbf{\{variable\_names\}}\newline
Dataset description: \textbf{\{dataset\_description\}}
\newline \newline
Your objectives are:\newline
(1) Refine and expand the Dataset description, if present.\newline
(2) Provide an informative description of the dataset labels, clarifying the meaning of acronyms and context of each variable. Preserve all information in the above description, if present.\newline
(3) Include both above results within \textbf{\textless{}general\_description\textgreater{}\textless{}/general\_description\textgreater{}} tags.
\newline \newline
IMPORTANT: Web search cannot understand acronyms. If variale names are acronyms, do not use those acronyms for a web search. Instead, make the query in natural language.
IMPORTANT: Variable labels in output should be the same as in the input.
\end{tcolorbox}

\subsection{Divide Phase}

\subsubsection{Divide-Hypothesis agent}

First, the Divide hypothesis agent $\mathcal{D}_{hyp}$ received the following system prompt.
\begin{tcolorbox}[colback=aqua!5,colframe=aqua!40!black,title=Divide-Hypothesis Agent (System Prompt)]
You are an expert in the {domain} domain tasked with partitioning a dataset’s variables into groups that might share causal relationships. Each variable represents a node in a causal graph.

Your goal is, given a list of variables, to divide them into groups.
\newline\newline
Reason on:
\begin{itemize}
    \item Pairs of variables that are directly causally related (``A" causes ``B"). If they are, they belong in the same group.
    \item If they are not likely to be directly causally related, they either:

    Case 1) Belong in separate groups, meaning they are likely independent.
    
    Case 2) Are indirectly related through common causes (``C" causes both ``A" and ``B") or mediators (``A" causes ``C", ``C" causes ``B"). In this case, you can form a group by also including the common causes or mediators.
    \item If two variables have semantically similar names or describe similar things, it does not necessarily mean they belong in the same group.
\end{itemize}

Additional requirements:
\begin{itemize}
    \item The groups may overlap.
    \item Each variable in the dataset should be assigned to at least one group.
    \item You may form any number of groups.
    \item Include detailed information on each variable within each group.
\end{itemize}

Your process:
\begin{itemize}
    \item Available tools should be used to gather information and validate your reasoning. Use them to to confirm variable meanings and assess relationships.
    \item Utilize an iterative thought-action-observation approach to gather evidence, validate your reasoning and refine your groups.
\end{itemize}
\end{tcolorbox}

Following, at every iteration of the Divide phase, $\mathcal{D}_{hyp}$ adopts the following user prompt.

\begin{tcolorbox}[colback=aqua!5,colframe=aqua!40!black,title=Divide-Hypothesis Agent (User Prompt)]
You are tasked with analyzing the following dataset:

Dataset description: \textbf{\{dataset\_description\}}

For your partitioning task, focus on the following variables: \textbf{\{variable\_names\}}
\newline\newline
IMPORTANT: Web search cannot understand acronyms. If variable names are acronyms, do not use those acronyms for a web search. Instead, make the query in natural language.

IMPORTANT: Variable labels in output should be the same as in the input. Each input variable should be assigned to a group. Double check as to not introduce new variables by accident.

IMPORTANT: Output a list of variables and their descriptions for each group. Ensure that each list is enclosed within \textless{}nodes\textgreater{} tags. For example: \textbf{\textless{}group\textgreater{}\textless{}nodes\textgreater{}[Variable1, Variable2]\textless{}/nodes\textgreater{}\textless{}description\textgreater{}}Description of the group\textbf{\textless{}/description\textgreater{}\textless{}/group\textgreater{}}
\end{tcolorbox}

\subsubsection{Divide-Critic Agent}

The Divide hypothesis agent $\mathcal{D}_{critic}$ receives the following system prompt.
\begin{tcolorbox}[colback=aqua!5,colframe=aqua!40!black,title=Divide-Critic Agent (System Prompt)]
You are an expert in the \textbf{\{domain\}} domain tasked with partitioning a dataset’s variables into groups that might share causal relationships. Each variable represents a node in a causal graph.
Your goal is to analyze groups proposed by another agent and and modify them if necessary.
\newline\newline
Reason on:
\begin{itemize}
    \item If a pair of variables are in the same group, they are either:
    
    Case 1) Directly causally related (``A" causes "B").
    
    Case 2) They are indirectly related through common causes (``C" causes both ``A" and ``B") or mediators (``A" causes ``C", ``C" causes ``B"). In this case, the common causes or mediators should also be in the group.
    
    \item If two variables have semantically similar names or describe similar things, review whether they really belong in the same group.
\end{itemize}

Additional requirements:
\begin{itemize}
    \item The groups may overlap.
    \item Each variable in the dataset should be assigned to at least one group.
    \item You may edit any number of groups or even create new ones.
    \item Include detailed information on each variable within each group, preserve any relevant information that is already present.
\end{itemize}

Your process:
\begin{itemize}
    \item Available tools should be used to gather information and validate your reasoning. Use them to to confirm variable meanings and assess relationships.
    \item Utilize an iterative thought-action-observation approach to gather evidence, validate your reasoning and refine your groups.
\end{itemize}
\end{tcolorbox}

Following, at every iteration of the Divide phase, $\mathcal{D}_{critic}$ adopts the following user prompt.

\begin{tcolorbox}[colback=aqua!5,colframe=aqua!40!black,title=Divide-Critic Agent (User Prompt)]
You are tasked with analyzing the following dataset:

Dataset description: \textbf{\{dataset\_description\}}

For your partitioning task, focus on the following variables: \textbf{\{variable\_names\}}

The groups proposed by the previous agent are: \textbf{\{proposed\_groups\}}
\newline\newline
IMPORTANT: Web search cannot understand acronyms. If variable names are acronyms, do not use those acronyms for a web search. Instead, make the query in natural language.

IMPORTANT: Variable labels in output should be the same as in the input. Each input variable should be assigned to a group. Double check as to not introduce new variables by accident.

IMPORTANT: Output a list of variables and their descriptions for each group. Ensure that each list is enclosed within \textless{}nodes\textgreater{} tags. For example: \textbf{\textless{}group\textgreater{}\textless{}nodes\textgreater{}[Variable1, Variable2]\textless{}/nodes\textgreater{}\textless{}description\textgreater{}Description of the group\textless{}/description\textgreater{}\textless{}/group\textgreater{}}
\end{tcolorbox}

\subsection{Conquer Phase}

\subsubsection{Conquer-Hypothesis Agent}
\begin{tcolorbox}[colback=aqua!5,colframe=aqua!40!black,title=Conquer-Hypothesis Agent (System Prompt)]
You are an expert in the \textbf{\{domain\}} domain tasked with identifying cause-effect relationships from data. Your goal is to construct a causal graph where nodes represent variables, and directed edges represent hypothesized cause-effect relationships. You will receive a list of variables and, optionally, a dataset description.
You have access to tools to assist in gathering prior knowledge, helping you to refine your hypothesis. Use these tools iteratively to gather relevant knowledge, and refine your causal graph. Reason through each step and leveraging the tools to support your conclusions.
Try to understand what each variable means. You can the tools available in case additional information is needed. 
\newline \newline
IMPORTANT: In some cases you might have access to a human expert. In that case, use it to gather contextual knowledge, and balance it between knowledge from RAG and human expert.

IMPORTANT: Do not be conservative in putting a causal relationship. It is better to put an edge in case you are in doubt even after using your tools.
\end{tcolorbox}

\begin{tcolorbox}[colback=aqua!5,colframe=aqua!40!black,title=Conquer-Hypothesis Agent (User Prompt)]
Reason to get a causal graph from the following dataset.

Dataset description: \textbf{\{general\_description\}}

You will focus on the following variables:

Variable names: \textbf{\{variable\_names\}}

Description of variables: \textbf{\{variable\_description\}}
\newline \newline
Successively, you will perform the following points until you are confident about a final answer:
\begin{enumerate}
    \item Understand the context, about what are the variables modeling at a causal level. You can use the tools to search for more information.
    \item Reason on which might be the root nodes which are not influenced by other variables. Successively, reason on the relationships between those root nodes and the other child variables. Then reason between children of children, and so on...
    \item Output a preliminary list of all edges that COULD POTENTIALLY be present between those variables. DO NOT BE TOO CONSERVATIVE. AN EDGE MORE IS BETTER THAN AN EDGE LESS.
    \item Reflect and improve your estimate. 
\end{enumerate}

Finally, output a cumulative list of directed edges between the provided variables.
\newline \newline
IMPORTANT: Make sure the the list of edges are within the \textless{}edges\textgreater{} tags. For example: \textbf{\textless{}edges\textgreater{}(Variable1, Variable2), (Variable2, Variable3)\textless{}/edges\textgreater{}}

IMPORTANT: ABSOLUTELY RESPECT THE FORMAT. DO NOT USE ARROWS OR ANYTHING SIMILAR TO REPRESENT EDGES. ONLY USE THIS FORMAT, AND USE THE COMMA TO SEPARATE VARIABLE NAMES:\textbf{\textless{}edges\textgreater{}(Variable1, Variable2), (Variable2, Variable3)\textless{}/edges\textgreater{}}

\end{tcolorbox}

\subsubsection{Conquer-Critic Agent}
\begin{tcolorbox}[colback=aqua!5,colframe=aqua!40!black,title=Conquer-Critic Agent (System Prompt)]
You are an expert in the {domain} domain tasked with critically evaluating proposed causal relationships between variables in a dataset. Your goal is to analyze a hypothesized causal graph and suggest modifications, particularly focusing on identifying edges that are in the wrong anti-causal direction, or that do not hold true.
\newline \newline
You will receive a list of variables, a dataset description, and a proposed causal graph. Use a thought-action-observation process for your reasoning. You have access to tools to assist in verifying relationships by gathering prior knowledge. Be systematic in your approach, reasoning through each step and leveraging the tools to support your conclusions.
\newline \newline
First, assess the validity of each edge in the proposed graph and consider potential confounding relationships which might have been missed. Use the tools available to gather additional information as needed.
\end{tcolorbox}

\begin{tcolorbox}[colback=aqua!5,colframe=aqua!40!black,title=Conquer-Critic Agent (User Prompt)]
Your task is to critically analyze the hypothesized causal graph based on the provided dataset. Here is the hypothesis you need to evaluate:

Dataset description: \textbf{\{general\_description\}}

You will focus on the following variables:

Variable names: \textbf{\{variable\_names\}}

Description of variables: \textbf{\{variable\_description\}}

Hypothesis graph: \textbf{\{causal\_graph\}}
\newline \newline
You will perform the following steps until you reach a confident conclusion:
\begin{enumerate}
    \item Evaluate the proposed relationships between the variables.
    \item Assess whether any edges are in the anti-causal direction (i.e., (effect, cause) instead of (cause, effect)). Consider that intervening on the effect should not change the cause, but not the viceversa.
    \item Do those causal relationships always hold, or only on some context? Think about counterfactual scenarios.
    \item Utilize the available tools to gather additional information and evidence to support your analysis. If available, use the human expert for confirmation or for contextual knowledge.
\end{enumerate}
~\newline

Finally, provide a cumulative list of directed edges identified at each iteration for every group of variables.
\newline \newline
IMPORTANT: Ensure that the list of edges is enclosed within \textless{}edges\textgreater{} tags. For example: \textbf{\textless{}edges\textgreater{}(Variable1, Variable2), (Variable2, Variable3)\textless{}/edges\textgreater{}}
IMPORTANT: ABSOLUTELY RESPECT THE FORMAT. DO NOT USE ARROWS OR ANYTHING SIMILAR TO REPRESENT EDGES. ONLY USE THIS FORMAT, AND USE THE COMMA TO SEPARATE VARIABLE NAMES:\textbf{\textless{}edges\textgreater{}(Variable1, Variable2), (Variable2, Variable3)\textless{}/edges\textgreater{}}
\end{tcolorbox}

\subsection{Combine Phase}

\subsubsection{Combine-Hypothesis}

\begin{tcolorbox}[colback=aqua!5,colframe=aqua!40!black,title=Combine-Hypothesis Agent (System Prompt)]
You are a \textbf{\{domain\}} domain expert responsible for evaluating causal relationships. In this context, each variable in a dataset represents a node in a causal graph.
You will receive different groups of variables, each accompanied by a description. The causal relationships within each group have already been assessed. Your objective is to analyze the relationships between these groups to hypothesize potential connections bridging the graphs. The goal is to merge them into a larger causal graph. Specifically, you will propose potential directed edges (cause, effect) between variables from different groups.
Utilize the available tools to verify and refine your hypotheses iteratively. Approach this task systematically, reasoning through each step and leveraging the tools to support your conclusions.
\end{tcolorbox}

\begin{tcolorbox}[colback=aqua!5,colframe=aqua!40!black,title=Combine-Hypothesis Agent (User Prompt)]
Your task is to establish connections between variables from the following groups, which will introduce new edges in a directed causal graph. 
Output pairs of variables from different groups, ensuring that the relationships are consistent.
\newline \newline
Dataset description: \textbf{\{general\_description\}}

You will now focus on these specific groups:
\textbf{\{groups\}}
\newline \newline
Follow these steps until you are confident in your final answer:
\begin{enumerate}
    \item Understand the context, about what is each group of variables modeling at a causal level. Understand how each groups is related to each other. You can use the tools to search for more information.
    \item Output a preliminary list of bridging edges that COULD POTENTIALLY be present between those groups of variables. DO NOT BE TOO CONSERVATIVE. AN EDGE MORE IS BETTER THAN AN EDGE LESS.
    \item Reflect and improve your estimate.
\end{enumerate}
~\newline~\newline
Finally, present a single list of directed edges that connect the groups.
\newline \newline
IMPORTANT: Ensure the list of edges is enclosed within \textless{}edges\textgreater{} tags. For example: \textbf{\textless{}edges\textgreater{}(Variable1, Variable2), (Variable2, Variable3)\textless{}/edges\textgreater{}}

IMPORTANT: ABSOLUTELY RESPECT THE FORMAT. DO NOT USE ARROWS OR ANYTHING SIMILAR TO REPRESENT EDGES. ONLY USE THIS FORMAT, AND USE THE COMMA TO SEPARATE VARIABLE NAMES:\textbf{\textless{}edges\textgreater{}(Variable1, Variable2), (Variable2, Variable3)\textless{}/edges\textgreater{}}
\end{tcolorbox}

\subsubsection{Combine-Critic}

\begin{tcolorbox}[colback=aqua!5,colframe=aqua!40!black,title=Combine-Critic Agent (System Prompt)]
You are a \textbf{\{domain\}} domain expert tasked with critically analyzing the proposed connections between variable groups in a causal graph. Your role is to evaluate the validity of the directed edges (cause, effect) and suggest modifications, particularly focusing on identifying edges that are in the wrong anti-causal direction, or that do not hold true.
\newline \newline
You will receive a list of separate groups of variables, each with its description. 
You will also be provided with a set of causal edges bridging those groups. 
You need to understand how those groups are related.
\newline \newline
Use a thought-action-observation process for your reasoning. You have access to tools to assist in verifying relationships by gathering prior knowledge. 
\end{tcolorbox}

\begin{tcolorbox}[colback=aqua!5,colframe=aqua!40!black,title=Combine-Critic Agent (User Prompt)]
Your task is to critically evaluate the proposed connections between the following groups of variables. 

Dataset description: \textbf{\{general\_description\}}

You will now focus on these specific groups:
\textbf{\{groups\}}

Proposed connections:
\textbf{\{group\_connections\}}
\newline \newline
Follow these steps to analyze the connections:
\begin{enumerate}
    \item Evaluate the proposed relationships between the variables. 
    \item Assess whether any edges are in the anti-causal direction (i.e., (effect, cause) instead of (cause, effect)). Consider that intervening on the effect should not change the cause, but not the viceversa.
    \item Do those causal relationships always hold, or only on some context? Think about counterfactual scenarios.
    \item Utilize the available tools to gather additional information and evidence to support your analysis. If available, use the human expert for confirmation or for contextual knowledge.
\end{enumerate}
Provide a summary of your findings, including any necessary revisions to the proposed connections.
\newline \newline
IMPORTANT: Ensure the list of edges is enclosed within \textless{}edges\textgreater{} tags. For example: \textbf{\textless{}edges\textgreater{}(Variable1, Variable2), (Variable2, Variable3)\textless{}/edges\textgreater{}}

IMPORTANT: ABSOLUTELY RESPECT THE FORMAT. DO NOT USE ARROWS OR ANYTHING SIMILAR TO REPRESENT EDGES. ONLY USE THIS FORMAT, AND USE THE COMMA TO SEPARATE VARIABLE NAMES:\textbf{\textless{}edges\textgreater{}(Variable1, Variable2), (Variable2, Variable3)\textless{}/edges\textgreater{}}

\end{tcolorbox}

\textcolor{joint}{
\section{Complete End-to-End Interaction}}
\textcolor{joint}{
To illustrate how \ourmethod{}'s execution flows, we report here a complete interaction on the Earthquakes dataset. To keep it readable, we omit to repeat the system and user prompt templates (already provided in Sec.\ref{appendix:prompts}), and only report the LLM answers, HITL Calls, and User responses. An additional end-to-end interaction on the CausalChambers dataset is provided in the supplementary materials.}
\textcolor{joint}{
\paragraph{Main Takeaways} Some interesting insights into the behavior of \ourmethod{} are:
\begin{itemize}
    \item As it is shown in the HITL calls, it is not necessary to provide "explicit" causal information of the kind "A implies B" as the interaction happens in natural language, and the extraction of causal information from user feedback is managed by the agents \citep{antonucci2023zeroshotcausalgraphextrapolation}.
    \item Often, we intentionally provided high-level, partial, or even "I don't know" responses to the agents. This highlights CAST's ability to handle realistic, imperfect user input.
    \item You may observe occasional repeated queries. This results from the current design choice to keep agents separate and without a global shared memory. While this ensures modularity, we acknowledge that adding shared memory is a straightforward engineering optimization for future work to reduce token usage and enhance scalability further.
\end{itemize}}
\textcolor{joint}{
\subsection{Explain Phase}
Upon being given a first user-description and the dataset, the Explain phase employs an explainer agent that extends the provided information and builds a more extensive description of the whole dataset. In case of ambiguities in the variable naming, the Explainer agent will reach the Human via a call to the HITL tool.}
\begin{tcolorbox}[colback=blue!5,colframe=blue!40!black,title=Explainer Agent (Part 1 of 2)]
\section*{Refined Dataset Description}

This dataset contains records of earthquake events, including their occurrence date, location, seismic magnitude, and the number of recorded fatalities. It is intended for analyses of seismic activity and its human impact across different regions and time periods. Each record corresponds to a single earthquake, with fields that capture when it happened, where it was observed or affected, how strong it was (by a magnitude scale commonly referred to as the ``Richter'' scale), and how many deaths were attributed to that event.

\subsection*{Variable Descriptions}

\begin{description}
    \item[year] Calendar year in which the earthquake occurred. Typically a four-digit integer (e.g., 1999, 2015). Useful for temporal trend analyses and aggregations.
    \item[month] Calendar month of occurrence, usually as an integer from 1--12. Enables seasonal or monthly pattern analysis when combined with year and day.
    \item[day] Calendar day of occurrence, usually as an integer from 1--31. Together with year and month, can be combined to form a full event date.
    \item[region] The broader geographic region associated with the earthquake (e.g., country, multi-country region, or seismic region label). This is a categorical/location descriptor useful for grouping events and comparing regional risk and impacts.
    \item[area] A more specific locality within the region, such as a province/state, district, city, or named seismic/epicentral area. This typically offers finer spatial granularity than region; depending on the data source, it may denote the epicenter vicinity or the administrative area most affected.
    \item[richter] The earthquake's magnitude, labeled as ``richter.'' In many historical datasets, this refers to magnitude reported on the Richter/local magnitude scale ($M_L$). Note that modern catalogs often report magnitudes on the moment magnitude scale ($M_w$); if the source uses $M_w$ but retains the ``richter'' label, values are still logarithmic magnitudes where each unit increase reflects $\sim 10\times$ increase in ground-motion amplitude and $\sim 32\times$ increase in released energy. Units: dimensionless magnitude.
    \item[deaths] Number of recorded fatalities attributable to the earthquake event. Typically a non-negative integer. Be aware that zero can mean ``no deaths recorded,'' while missing values may indicate ``unknown.'' Interpretation may require scrutiny of reporting practices.
\end{description}
\end{tcolorbox}
\begin{tcolorbox}[colback=blue!5,colframe=blue!40!black,title=Explainer Agent (Part 2 of 2)]
\subsection*{Context and Potential Relationships}

\begin{itemize}
    \item \textbf{Temporal structure:} \texttt{year}, \texttt{month}, and \texttt{day} define the event date; analyses often aggregate by year or month to examine temporal trends in seismic activity or impact.
    \item \textbf{Spatial structure:} \texttt{region} and \texttt{area} are hierarchical, with \texttt{area} nested within \texttt{region} in many datasets; this supports spatial aggregation (e.g., counts by region) and finer local analyses.
    \item \textbf{Magnitude-impact linkage:} \texttt{richter} (magnitude) is a primary predictor of potential damage and deaths, but actual fatalities also depend on depth, proximity to population centers, building standards, and secondary hazards (e.g., tsunamis, landslides).
    \item \textbf{Reporting considerations:} \texttt{deaths} may be influenced by data completeness and post-event updates; magnitudes can be revised as more data becomes available.
\end{itemize}

\paragraph{Preserved information:} This dataset describes data with records about earthquakes, such as their location, date, magnitude, and recorded fatalities.
\end{tcolorbox}

\textcolor{joint}{
\subsection{Divide Phase}
During this phase, the Divide agent takes the description built by the explainer agent, and we first have $\mathcal{D}_{hyp}$ that proposes a possible partitioning of the variables, which is later refined by $\mathcal{D}_{critic}$. In case partitions are too large (by choosing a maximum partition size $k$), this process is iterated on the individual partitions. The result is a partition tree. 
}
\begin{tcolorbox}[colback=blue!5,colframe=blue!40!black,title=Divide Hypothesis Agent (Part 1 of 2)]
\section*{Variable Groups and Structural Relationships}

\begin{description}

    \item[Nodes: \{year, month, day\}] \textbf{Temporal structure variables that jointly define the earthquake's calendar date.}
    \begin{itemize}
        \item \textbf{year:} The calendar year of occurrence (four-digit integer). It constrains valid month/day combinations and is used for trend analyses (e.g., long-term changes in seismic activity or impacts).
        \item \textbf{month:} Calendar month (1--12). Nested within year; together with year, it sets seasonal context (e.g., monsoon, winter), which can indirectly affect vulnerability or secondary hazards.
        \item \textbf{day:} Calendar day (1--31). Dependent on year and month for validity (e.g., leap years). When combined with year and month, it forms the full event date and can capture fine-grained timing effects (e.g., holidays) that may indirectly affect exposure and impact.
    \end{itemize}
\end{description}
\end{tcolorbox}
\begin{tcolorbox}[colback=blue!5,colframe=blue!40!black,title=Divide Hypothesis Agent (Part 2 of 2)]
\begin{description}
    \item[Nodes: \{region, area\}] \textbf{Spatial hierarchy describing where the earthquake occurred or was most impactful.}
    \begin{itemize}
        \item \textbf{region:} Broader geographic descriptor (e.g., country or multi-country seismic region). Used for coarser spatial aggregation and comparison across regions.
        \item \textbf{area:} More specific locality within region (e.g., state/province, district, city, or named epicentral area). Typically nested in region; directly linked via administrative or geographic hierarchy. Differences in local population density and building stock at the area level can strongly influence observed impacts.
    \end{itemize}

    \item[Nodes: \{richter, deaths\}] \textbf{Magnitude--impact linkage.}
    \begin{itemize}
        \item \textbf{richter:} Earthquake magnitude (often labeled on the ``Richter''/local magnitude scale; values are logarithmic). Directly influences expected ground-motion amplitude and energy release, which in turn drive potential damage.
        \item \textbf{deaths:} Recorded fatalities attributed to the event. Causally downstream of shaking and damage, but also shaped by mediators such as local vulnerability, building codes, depth, proximity to population centers, and secondary hazards. The relationship is typically positive but highly modulated by context and reporting practices.
    \end{itemize}

    \item[Nodes: \{year, month, day, region, area, richter, deaths\}] \textbf{Integrated impact pathway combining temporal, spatial, and physical drivers of human consequences.}
    
    \textit{Pathways:}
    \begin{itemize}
        \item \texttt{richter} $\to$ \texttt{deaths} (direct effect): Larger magnitudes tend to produce stronger shaking and, absent mitigation, more casualties.
        \item \texttt{region}/\texttt{area} $\to$ \texttt{deaths} (contextual mediators): Regional/area characteristics (population density, construction practices, enforcement of building codes, terrain, tsunami/landslide susceptibility) modulate how shaking translates into fatalities.
        \item \texttt{year}/\texttt{month}/\texttt{day} $\to$ \texttt{deaths} (temporal mediators): Year captures secular changes (e.g., evolving building standards, emergency response capacity, and reporting completeness). Month and day provide seasonal/contextual timing that can affect exposure (e.g., tourism seasons, weather conditions) and secondary hazards.
    \end{itemize}
    
    \textit{Variable details:}
    \begin{itemize}
        \item \textbf{year:} Event year; useful for temporal trend and policy-era analyses.
        \item \textbf{month:} Event month; supports seasonal pattern analysis.
        \item \textbf{day:} Event day; completes the date and allows alignment to specific calendar effects.
        \item \textbf{region:} Broad location for cross-region comparisons.
        \item \textbf{area:} Fine-grained locality reflecting exposure heterogeneity within regions.
        \item \textbf{richter:} Physical intensity driver of potential damage.
        \item \textbf{deaths:} Outcome reflecting human impact, downstream of the above determinants.
    \end{itemize}

\end{description}
\end{tcolorbox}

\begin{tcolorbox}[colback=blue!5,colframe=blue!40!black,title=Divide Critic Agent (Part 1 of 2)]

\section*{Variable Groups and Structural Relationships}

\begin{description}

    \item[Nodes: \{year, month, day\}] \textbf{Temporal structure variables that jointly define the earthquake's calendar date.}
    \begin{itemize}
        \item \textbf{year:} Calendar year of occurrence (four-digit integer). Constrains valid month/day combinations and supports trend analyses (e.g., long-term changes in seismic activity or impacts).
        \item \textbf{month:} Calendar month (1--12). Nested within year; provides seasonal context (e.g., monsoon, winter), which can indirectly affect vulnerability or secondary hazards.
        \item \textbf{day:} Calendar day (1--31). Dependent on year and month for validity (e.g., leap years). Combined with year and month to form the full event date; fine-grained timing can capture context (e.g., holidays) that may indirectly affect exposure and impact.
    \end{itemize}

    \item[Nodes: \{region, area\}] \textbf{Spatial hierarchy describing where the earthquake occurred or was most impactful.}
    \begin{itemize}
        \item \textbf{region:} Broader geographic descriptor (e.g., country or multi-country seismic region). Useful for coarser spatial aggregation and comparison across regions.
        \item \textbf{area:} More specific locality within region (e.g., state/province, district, city, or named epicentral area). Typically nested within region; local population density and building stock at the area level strongly influence observed impacts.
    \end{itemize}

    \item[Nodes: \{richter, deaths\}] \textbf{Magnitude--impact linkage.}
    \begin{itemize}
        \item \textbf{richter:} Earthquake magnitude (logarithmic scale). Directly influences expected ground-motion amplitude and energy release, driving potential damage.
        \item \textbf{deaths:} Recorded fatalities attributed to the event. Causally downstream of shaking and damage; the relationship with magnitude is typically positive but modulated by context (e.g., vulnerability, secondary hazards, reporting practices).
    \end{itemize}

    \item[Nodes: \{year, month, day, region, area, richter, deaths\}] \textbf{Integrated impact pathway combining temporal, spatial, and physical drivers of human consequences.}
    
    \textit{Pathways:}
    \begin{itemize}
        \item \texttt{richter} $\to$ \texttt{deaths} (direct effect): Larger magnitudes tend to produce stronger shaking and, absent mitigation, more casualties.
        \item \texttt{region}/\texttt{area} $\to$ \texttt{deaths} (contextual mediators): Regional/area characteristics (population density, construction practices, enforcement of building codes, terrain, tsunami/landslide susceptibility) modulate how shaking translates into fatalities.
        \item \texttt{year}/\texttt{month}/\texttt{day} $\to$ \texttt{deaths} (temporal mediators): Year captures secular changes (e.g., evolving building standards, emergency response capacity, reporting completeness). Month and day provide seasonal/contextual timing that can affect exposure (e.g., tourism seasons, weather conditions) and secondary hazards.
    \end{itemize}
\end{description}
\end{tcolorbox}

\begin{tcolorbox}[colback=blue!5,colframe=blue!40!black,title=Divide Critic Agent (Part 2 of 2)]
\begin{description}
    \item[\textit{Variable details:}]
    \begin{itemize}
        \item \textbf{year:} Event year; useful for temporal trend and policy-era analyses.
        \item \textbf{month:} Event month; supports seasonal pattern analysis.
        \item \textbf{day:} Event day; completes the date and allows alignment to specific calendar effects.
        \item \textbf{region:} Broad location for cross-region comparisons.
        \item \textbf{area:} Fine-grained locality reflecting exposure heterogeneity within regions.
        \item \textbf{richter:} Physical intensity driver of potential damage.
        \item \textbf{deaths:} Outcome reflecting human impact, downstream of the above determinants.
    \end{itemize}

    \item[Nodes: \{year, deaths\}] \textbf{Secular trends and reporting-era effects on impact.}
    \begin{itemize}
        \item \textbf{year:} Captures changes over time in building codes, urbanization, emergency response capacity, and reporting completeness.
        \item \textbf{deaths:} Fatalities recorded for the event; trends by year can reflect true changes in vulnerability and exposure as well as improvements in data collection.
    \end{itemize}

    \item[Nodes: \{region, area, month, deaths\}] \textbf{Seasonal--spatial vulnerability pathway.}
    \begin{itemize}
        \item \textbf{region:} Broad spatial context affecting hazard environment (e.g., monsoon-prone regions, tsunami exposure).
        \item \textbf{area:} Localized exposure and vulnerability (population density, terrain, construction quality).
        \item \textbf{month:} Seasonal timing influencing secondary hazards (e.g., rainfall-triggered landslides, storm seasons) and exposure (tourism seasons, agricultural cycles).
        \item \textbf{deaths:} Fatalities as an outcome modulated by seasonal conditions and local vulnerability, even for similar magnitudes.
    \end{itemize}

    \item[Nodes: \{region, area, richter, deaths\}] \textbf{Spatial moderation of magnitude--impact.}
    \begin{itemize}
        \item \textbf{region:} Coarse spatial unit shaping typical building practices, enforcement, and hazard profiles.
        \item \textbf{area:} Fine spatial unit determining proximity to population centers and local site effects (e.g., basin amplification).
        \item \textbf{richter:} Physical intensity driver; higher magnitudes raise the potential for severe shaking.
        \item \textbf{deaths:} Impact outcome; the effect of magnitude on deaths is moderated by region and area characteristics.
    \end{itemize}

\end{description}
\end{tcolorbox}
\textcolor{joint}{
\subsection{Conquer Phase}
This phase is executed for every partition that is a leaf in the partition tree (that has never been partitioned further). On each one of those partitions, the Conquer agents $\mathcal{C}_{hyp}$ and $\mathcal{C}_{critic}$ propose and refine a candidate causal graph. That causal graph is then used as a constraint for a causal discovery algorithm.  During this phase, we can see (in the red and green boxes) that multiple HITL calls are made in those cases where the conquer agents are uncertain.
}

\begin{tcolorbox}[colback=blue!5,colframe=blue!40!black,title=Conquer Hypothesis Agent for variables Year/Month/Day]

\section*{Structural Edges}

\begin{itemize}
    \item $\texttt{year} \to \texttt{day}$
    \item $\texttt{month} \to \texttt{day}$
\end{itemize}
\end{tcolorbox}

\begin{tcolorbox}[colback=blue!5,colframe=blue!40!black,title=Conquer Critic Agent for variables Year/Month/Day]
\section*{Evaluation of Temporal Constraint Edges}

\subsection*{1. Evaluation of Proposed Relationships}
\begin{description}
    \item[(year, day)] Valid as a structural constraint in the Gregorian calendar. The \texttt{year} affects whether certain days are valid (e.g., February 29 exists only in leap years), so the permissible values of \texttt{day} depend on \texttt{year} for some months.
    \item[(month, day)] Valid as a structural constraint. The \texttt{month} determines the range of valid days (e.g., 30 vs. 31 days, and February's length), independent of other variables except for leap year effects that involve \texttt{year}.
\end{description}

\subsection*{2. Anti-causal Direction Assessment}
Neither edge is in the anti-causal direction. Intervening on \texttt{day} does not change \texttt{year} or \texttt{month}, whereas intervening on \texttt{year} or \texttt{month} can change which \texttt{day} values are valid. Thus, the direction from $\texttt{year} \to \texttt{day}$ and $\texttt{month} \to \texttt{day}$ is appropriate.

\subsection*{3. Contextual Validity and Counterfactuals}
These relationships hold under the Gregorian calendar and similar systems where day validity depends on month and sometimes year.

\textbf{Counterfactual examples:}
\begin{itemize}
    \item If we set $\texttt{month} = \text{February}$ and intervene to change \texttt{year} from a leap year to a non-leap year, the set of valid \texttt{day} values changes (29 becomes invalid).
    \item If we set \texttt{year} fixed and intervene to change \texttt{month} from April to May, the valid \texttt{day} set changes (30 vs. 31 days).
\end{itemize}
Outside this calendar context (e.g., non-Gregorian calendars), the specific constraints might differ. However, the dataset description implies standard calendar usage.

\subsection*{4. Additional Considerations}
No additional confounders among \texttt{year}, \texttt{month}, and \texttt{day} are necessary for validity constraints. The relationships are deterministic constraints rather than causal mechanisms in the physical sense, but they are directionally correct for data-generating validity.

\subsection*{Final Cumulative Edges}
\[
    (\texttt{year}, \texttt{day}), (\texttt{month}, \texttt{day})
\]

\end{tcolorbox}

\begin{tcolorbox}[colback=blue!5,colframe=blue!40!black,title=Conquer Hypothesis Agent for variables Region/Area]

\section*{Spatial Hierarchy Edges}

\begin{itemize}
    \item $\texttt{region} \to \texttt{area}$
    \item $\texttt{area} \to \texttt{region}$
\end{itemize}
\end{tcolorbox}

\begin{tcolorbox}[colback=blue!5,colframe=blue!40!black,title=Conquer Critic Agent for variables Region/Area]

\section*{Evaluation of Spatial Hierarchy Edges}

\subsection*{1. Evaluation of Proposed Relationships}
The dataset defines a spatial hierarchy where \texttt{area} is typically nested within \texttt{region}. In a generative sense (e.g., administrative structure or modeling of where events occur), \texttt{region} constrains the set of possible areas. This supports $(\texttt{region}, \texttt{area})$.

The reverse edge $(\texttt{area}, \texttt{region})$ implies a causal effect of the specific locality on the broader region, which does not reflect the hierarchical causation. It more closely represents a deterministic data derivation (region can be inferred from area), not a causal influence.

\subsection*{2. Assessment of Anti-Causal Directions}
The edge $(\texttt{area}, \texttt{region})$ is anti-causal under intervention semantics. Intervening on \texttt{area} (e.g., selecting a different locality within the same region) does not change the \texttt{region}; however, intervening on \texttt{region} (e.g., focusing on a different country) changes the set of permissible areas. Thus, $(\texttt{area}, \texttt{region})$ is an effect-to-cause edge and should be removed from the causal graph.

\subsection*{3. Contexts and Counterfactuals}
In practice, \texttt{region} can be algorithmically derived from \texttt{area} due to hierarchical coding. This yields a functional dependency that may appear as $(\texttt{area}, \texttt{region})$ in a data-processing DAG. However, causally, the broader spatial context constrains the finer one; therefore, the correct causal direction is $(\texttt{region}, \texttt{area})$ across typical earthquake datasets.

\textbf{Counterfactual examples:}
\begin{itemize}
    \item If we change \texttt{region} (e.g., from Country A to Country B), the distribution of areas changes (available areas differ).
    \item If we change \texttt{area} within the same region, \texttt{region} remains the same; choosing an area outside the current region effectively implies changing the region first.
\end{itemize}
Hence, the causal arrow from \texttt{region} to \texttt{area} is robust.

\subsection*{4. Methodological Note}
No additional human input required beyond the dataset's provided hierarchy; the conclusion follows from the spatial nesting and intervention logic.

\subsection*{Final Cumulative Edges}
\[
    (\texttt{region}, \texttt{area})
\]
\end{tcolorbox}

\begin{tcolorbox}[colback=blue!5,colframe=blue!40!black,title=Conquer Hypothesis Agent for Partition Richter/Deaths]

\section*{Analysis of Magnitude--Death Relationship}

\subsection*{1. Context Understanding}
\begin{itemize}
    \item \textbf{richter:} Represents earthquake magnitude on a logarithmic scale, which determines ground-motion amplitude and released energy. Higher magnitude tends to cause stronger shaking and more potential damage. 

    \item \textbf{deaths:} Represents recorded fatalities due to each event. Fatalities occur downstream of shaking and damage, and are modulated by many contextual factors (population density, building codes, proximity to epicenter, depth, secondary hazards like tsunamis/landslides, time of day, and reporting practices).
\end{itemize}

\subsection*{2. Root Nodes and Relationships}
Among the provided variables, \texttt{richter} is a root cause in the physical sense: the magnitude of the event is generated by tectonic processes and is not influenced by the number of deaths.

\texttt{deaths} is downstream of \texttt{richter}. The causal pathway is:
\[
    \texttt{richter} \to \text{shaking intensity/damage} \to \text{casualties}
\]
(Note: Many mediators/moderators are not in the provided variable list).

\subsection*{3. Preliminary Edges}
Potential edge: $(\texttt{richter}, \texttt{deaths})$. Larger magnitudes generally increase the risk and extent of damage, increasing fatalities.

\subsection*{4. Reflection and Improvement}
\begin{itemize}
    \item \textbf{Reverse causation} $(\texttt{deaths} \to \texttt{richter})$ is not plausible because fatalities do not determine physical magnitude.
    \item \textbf{Confounding} exists (e.g., population density, building standards, depth), but those are not provided variables; the best graph over the two variables is $\texttt{richter} \to \texttt{deaths}$.
    \item \textbf{Measurement/reporting biases} can affect \texttt{deaths} but do not causally impact magnitude; they would be separate variables if present.
\end{itemize}

\subsection*{Final Edge Determination}
\[
    (\texttt{richter}, \texttt{deaths})
\]
\end{tcolorbox}

\begin{tcolorbox}[colback=blue!5,colframe=blue!40!black,title=Conquer Critic Agent for Partition Richter/Deaths]

\section*{Causal Analysis of Richter $\to$ Deaths}

\subsection*{Step 1: Evaluate Proposed Relationship}
\textbf{Edge $(\texttt{richter}, \texttt{deaths})$:} Plausible and consistent with seismological understanding. Earthquake magnitude reflects energy release and correlates with expected shaking and potential damage. Higher magnitude tends to increase risk of fatalities, all else equal.

\subsection*{Step 2: Anti-Causal Direction Check}
No evidence that \texttt{deaths} cause magnitude. Intervening on \texttt{deaths} (e.g., improving emergency response) would not change the magnitude of the earthquake; intervening on magnitude (hypothetically reducing energy release) would affect expected fatalities. Therefore, the proposed direction $(\texttt{richter} \to \texttt{deaths})$ is not anti-causal.

\subsection*{Step 3: Context Dependence and Counterfactuals}
The relationship is not deterministic and holds only under certain contexts due to strong effect modification:

\begin{itemize}
    \item \textbf{Population exposure and vulnerability:} Dense urban areas with poor building standards can have high deaths even for moderate magnitudes; remote areas may have low deaths for large magnitudes.
    \item \textbf{Depth and proximity:} Shallow earthquakes near population centers produce stronger ground shaking at the surface; deeper events can have high magnitude but lower surface intensity.
    \item \textbf{Secondary hazards:} Tsunamis, landslides, fires, and aftershocks can substantially increase fatalities beyond what magnitude alone would predict. 
    \item \textbf{Time of day and preparedness:} Nighttime events or inadequate early warning increase risk.
    \item \textbf{Reporting practices:} Death counts can be underreported or revised; magnitude estimates can be updated post-event, but these do not reverse causality.
\end{itemize}

\textbf{Counterfactuals:}
\begin{itemize}
    \item Holding magnitude fixed and improving building codes or reducing exposure would reduce deaths.
    \item Increasing magnitude while holding other conditions constant would generally increase expected deaths.
    \item Changing deaths does not change the realized magnitude.
\end{itemize}

\subsection*{Step 4: Additional Information and Evidence}
Within seismology, magnitude is an exogenous geophysical variable relative to human impact variables; common ``confounders'' of \texttt{deaths} (e.g., population density, building quality, secondary hazards) do not cause magnitude. They primarily act as moderators or mediators along the damage pathway. Thus, while the edge is valid, it is incomplete without adding other parent variables to \texttt{deaths}.

\subsection*{Final Cumulative Edges}
\[
    (\texttt{richter}, \texttt{deaths})
\]
\end{tcolorbox}

\begin{tcolorbox}[colback=blue!5,colframe=blue!40!black,title=Conquer Hypothesis Agent for Partition Year/Month/Day/Region/Area/Richter/Deaths (Part 1 of 2)]

\section*{Causal Graph Construction and Edge List}

\subsection*{Step 1: Context Understanding}
The variables describe individual earthquake events with their occurrence date (\texttt{year}, \texttt{month}, \texttt{day}), location (\texttt{region}, \texttt{area}), physical intensity (\texttt{richter} magnitude), and human impact (\texttt{deaths}).

\begin{itemize}
    \item \textbf{Causally:} Magnitude drives shaking and potential harm; spatial context (\texttt{region}, \texttt{area}) modulates exposure and vulnerability; temporal context (\texttt{year}, \texttt{month}, \texttt{day}) affects vulnerability, exposure, and reporting.
    \item \textbf{Non-physical mechanisms:} Catalog completeness and measurement practices change over time ($\texttt{year} \to \texttt{richter}$), seasonal/tidal stresses can weakly modulate earthquake timing ($\texttt{month}/\texttt{day} \to \texttt{richter}$), and reporting of deaths evolves over time ($\texttt{year} \to \texttt{deaths}$).
\end{itemize}

\subsection*{Step 2: Root Nodes and Hierarchical Relations}
\begin{itemize}
    \item \textbf{Root-like exogenous drivers:} \texttt{year}, \texttt{month}, \texttt{day} (calendar), \texttt{region} (broad location).
    \item \textbf{Hierarchies and nesting:}
    \begin{itemize}
        \item \texttt{region} determines \texttt{area} ($\texttt{area}$ is nested within $\texttt{region}$).
        \item \texttt{year} $\to \texttt{month}$; $\texttt{month} \to \texttt{day}$; and $\texttt{year} \to \texttt{day}$ due to leap-year effects.
    \end{itemize}
    \item \textbf{Physical context:} \texttt{region} and \texttt{area} influence typical magnitude distributions due to tectonic setting ($\texttt{region}/\texttt{area} \to \texttt{richter}$).
    \item \textbf{Impact:}
    \begin{itemize}
        \item $\texttt{richter} \to \texttt{deaths}$ is a primary physical causal link.
        \item $\texttt{region}/\texttt{area} \to \texttt{deaths}$ via population density, building codes, terrain, secondary hazards (landslides, tsunamis).
        \item $\texttt{year}/\texttt{month}/\texttt{day} \to \texttt{deaths}$ via preparedness, exposure patterns, weather, tourism seasons, work/school schedules, and reporting differences.
    \end{itemize}
\end{itemize}

\subsection*{Step 3: Preliminary List of Potential Edges}
\textit{(Intentionally not conservative)}
\begin{itemize}
    \item $\texttt{region} \to \texttt{area}$ (Spatial hierarchy).
    \item $\texttt{region} \to \texttt{richter}$ (Tectonic setting influences magnitude distribution).
    \item $\texttt{area} \to \texttt{richter}$ (Local fault characteristics).
    \item $\texttt{richter} \to \texttt{deaths}$ (Physical impact).
    \item $\texttt{region} \to \texttt{deaths}$ (Vulnerability, exposure, secondary hazards).
    \item $\texttt{area} \to \texttt{deaths}$ (Finer-grained vulnerability, exposure).
    \item $\texttt{year} \to \texttt{deaths}$ (Preparedness, building codes, reporting).
    \item $\texttt{month} \to \texttt{deaths}$ (Seasonal exposure, weather, tourism).
    \item $\texttt{day} \to \texttt{deaths}$ (Day-of-week/time-related exposure; reporting logistics; tidal/secondary hazard timing).
    \end{itemize}
\end{tcolorbox}
\begin{tcolorbox}[colback=blue!5,colframe=blue!40!black,title=Conquer Hypothesis Agent for Partition Year/Month/Day/Region/Area/Richter/Deaths(Part 2 of 2)]
\begin{itemize}
    \item $\texttt{year} \to \texttt{richter}$ (Catalog completeness/measurement practices over time).
    \item $\texttt{month} \to \texttt{richter}$ (Seasonal hydrological loading; weak but plausible).
    \item $\texttt{day} \to \texttt{richter}$ (Earth tides/timing effects; weak but plausible).
    \item $\texttt{year} \to \texttt{month}$ (Calendar hierarchy).
    \item $\texttt{month} \to \texttt{day}$ (Calendar hierarchy).
    \item $\texttt{year} \to \texttt{day}$ (Leap-year structure).
\end{itemize}

\subsection*{Step 4: Reflection and Improvements}
The physical causation of earthquake magnitude is not driven by calendar time, but small seasonal/tidal modulations and catalog/reporting biases justify including $\texttt{year}/\texttt{month}/\texttt{day} \to \texttt{richter}$ as potential edges. Spatial context strongly affects both magnitude distributions and human impact, so \texttt{region}/\texttt{area} links to both \texttt{richter} and \texttt{deaths} are retained. Calendar hierarchy edges are included to reflect structured temporal variables which can influence downstream variables.

The final graph emphasizes direct physical causation ($\texttt{richter} \to \texttt{deaths}$) and multiple contextual mediators/modulators (\texttt{region}/\texttt{area}, \texttt{year}/\texttt{month}/\texttt{day}).

\subsection*{Final Answer: Cumulative Edges}
\[
\begin{aligned}
    &(\texttt{region}, \texttt{area}), (\texttt{region}, \texttt{richter}), (\texttt{area}, \texttt{richter}), \\
    &(\texttt{richter}, \texttt{deaths}), (\texttt{region}, \texttt{deaths}), (\texttt{area}, \texttt{deaths}), \\
    &(\texttt{year}, \texttt{deaths}), (\texttt{month}, \texttt{deaths}), (\texttt{day}, \texttt{deaths}), \\
    &(\texttt{year}, \texttt{richter}), (\texttt{month}, \texttt{richter}), (\texttt{day}, \texttt{richter}), \\
    &(\texttt{year}, \texttt{month}), (\texttt{month}, \texttt{day}), (\texttt{year}, \texttt{day})
\end{aligned}
\]
\end{tcolorbox}

\begin{tcolorbox}[colback=blue!5,colframe=blue!40!black,title=Conquer Critic Agent for Partition Year/Month/Day/Region/Area/Richter/Deaths (Part 1 of 2)]

\section*{Critical Evaluation and Edge Revision}

\subsection*{Step 1: Evaluation of Proposed Relationships}

\begin{description}
    \item[Hierarchy $(\texttt{region}, \texttt{area})$] As a data hierarchy, \texttt{area} is nested within \texttt{region}. Deterministically, the \texttt{area} label implies its \texttt{region}. The direction $(\texttt{region} \to \texttt{area})$ is questionable because changing the region label does not causally determine the specific area; rather, the area determines which region it belongs to. \textbf{Recommendation:} Reverse to $(\texttt{area}, \texttt{region})$.
    
    \item[Location $\to$ Magnitude] \textbf{$(\texttt{region}, \texttt{richter})$ and $(\texttt{area}, \texttt{richter})$:} Location (tectonic setting, fault characteristics) influences the distribution of magnitudes and the likelihood of large events. These edges are plausible as structural causes of magnitude for a given event.
    
    \item[Magnitude $\to$ Deaths] \textbf{$(\texttt{richter}, \texttt{deaths})$:} This is a primary physical driver of potential fatalities; higher magnitude generally increases shaking severity, which increases risk of casualties, all else equal. \textbf{Valid.}
    
    \item[Context $\to$ Deaths] \textbf{$(\texttt{region}, \texttt{deaths})$ and $(\texttt{area}, \texttt{deaths})$:} Contextual mediators like population density, building codes, terrain, and secondary hazards vary by location and modulate the translation of shaking into fatalities. \textbf{Valid.}
    
    \item[Time $\to$ Deaths] \textbf{$(\texttt{year}, \texttt{deaths})$, $(\texttt{month}, \texttt{deaths})$, $(\texttt{day}, \texttt{deaths})$:} Time can influence exposure and vulnerability (e.g., secular improvements in building standards over years; seasonal tourism; weather conditions that affect landslides/tsunami impacts). These edges are plausible as temporal mediators/proxies.
    
    \item[Time $\to$ Magnitude] \textbf{$(\texttt{year}, \texttt{richter})$, $(\texttt{month}, \texttt{richter})$, $(\texttt{day}, \texttt{richter})$:} Calendar time generally does not causally drive earthquake magnitude. While some weak seasonal signals may exist due to hydrological loading or anthropogenic activities, for natural tectonic earthquakes the causal direction is wrong in most contexts. These edges are likely anti-causal and should be \textbf{removed}.
    
    \item[Calendar Structure] \textbf{$(\texttt{year}, \texttt{month})$, $(\texttt{month}, \texttt{day})$, $(\texttt{year}, \texttt{day})$:} These encode calendar structure. As deterministic relationships of an event timestamp, the directions are acceptable for a data-generating graph (\texttt{year} determines valid months; \texttt{month} and \texttt{year} together determine valid days including leap years).
\end{description}

\subsection*{Step 2: Anti-Causal Direction Checks}

\begin{itemize}
    \item \textbf{$(\texttt{region}, \texttt{area})$:} More defensible as $(\texttt{area}, \texttt{region})$ because \texttt{area} implies \texttt{region}. Intervening on \texttt{area} (choosing a locality) changes \texttt{region}; intervening on \texttt{region} does not uniquely determine \texttt{area}.
    \item \textbf{Time $\to$ Richter:} Anti-causal for tectonic events. Intervening on calendar date does not change the physical magnitude of a specific earthquake.
\end{itemize}

\end{tcolorbox}
\begin{tcolorbox}[colback=blue!5,colframe=blue!40!black,title=Conquer Critic Agent for Partition Year/Month/Day/Region/Area/Richter/Deaths (Part 2 of 2)]
\subsection*{Step 3: Contextual Validity and Counterfactuals}

\begin{itemize}
    \item \textbf{$\texttt{region}/\texttt{area} \to \texttt{richter}$:} Holds in the sense of geological control on possible magnitudes and frequency distributions. Counterfactual changes in location (different faults/tectonic regimes) change magnitude distribution.
    \item \textbf{$\texttt{richter} \to \texttt{deaths}$:} Generally holds but can be overridden by counterfactuals with strong mitigation (e.g., high magnitude with few deaths in sparsely populated areas or where buildings are resilient).
    \item \textbf{$\texttt{region}/\texttt{area} \to \texttt{deaths}$:} Context-specific; strong in regions with high exposure/vulnerability; weaker where resilience is high.
    \item \textbf{$\texttt{year} \to \texttt{deaths}$:} Context-specific; secular improvements or degradations over time affect vulnerability and reporting. Counterfactual changes in \texttt{year} (as a proxy for era) can change expected deaths via evolving codes and emergency response.
    \item \textbf{$\texttt{month}/\texttt{day} \to \texttt{deaths}$:} Seasonal exposure and secondary hazards can modulate fatalities (e.g., monsoon-related landslides, tourist seasons). Context-dependent.
\end{itemize}

\subsection*{Step 4: Additional Considerations}

Important missing drivers of \texttt{deaths} include hypocentral depth, distance to populated areas, building code enforcement, time of day (occupancy), and secondary hazards (tsunami/landslide triggers). These confound $\texttt{richter} \to \texttt{deaths}$ and $\texttt{region}/\texttt{area} \to \texttt{deaths}$ pathways. Reporting completeness may also vary by region and year, confounding \texttt{deaths}.

\subsection*{Edge List Evolution}

\textbf{Iteration 1 (Original Hypothesis):}
\[
\begin{aligned}
    &(\texttt{region}, \texttt{area}), (\texttt{region}, \texttt{richter}), (\texttt{area}, \texttt{richter}), (\texttt{richter}, \texttt{deaths}), \\
    &(\texttt{region}, \texttt{deaths}), (\texttt{area}, \texttt{deaths}), (\texttt{year}, \texttt{deaths}), (\texttt{month}, \texttt{deaths}), \\
    &(\texttt{day}, \texttt{deaths}), (\texttt{year}, \texttt{richter}), (\texttt{month}, \texttt{richter}), (\texttt{day}, \texttt{richter}), \\
    &(\texttt{year}, \texttt{month}), (\texttt{month}, \texttt{day}), (\texttt{year}, \texttt{day})
\end{aligned}
\]

\textbf{Iteration 2 (Revised Edges):}
\textit{Reversed hierarchical direction; removed anti-causal time $\to$ magnitude edges; retained contextual and temporal mediators.}
\[
\begin{aligned}
    &(\texttt{area}, \texttt{region}), (\texttt{region}, \texttt{richter}), (\texttt{area}, \texttt{richter}), (\texttt{richter}, \texttt{deaths}), \\
    &(\texttt{region}, \texttt{deaths}), (\texttt{area}, \texttt{deaths}), (\texttt{year}, \texttt{deaths}), (\texttt{month}, \texttt{deaths}), \\
    &(\texttt{day}, \texttt{deaths}), (\texttt{year}, \texttt{month}), (\texttt{month}, \texttt{day}), (\texttt{year}, \texttt{day})
\end{aligned}
\]
\end{tcolorbox}

\begin{tcolorbox}[colback=blue!5,colframe=blue!40!black,title=Conquer Hypothesis Agent for Partition Year/Deaths]

\section*{Causal Analysis: Year vs. Deaths}

\subsection*{1. Context Understanding}
\begin{itemize}
    \item \textbf{\texttt{year}:} Represents the calendar time of the earthquake event. At a causal level, \texttt{year} is a proxy for secular changes such as urbanization, exposure, building codes, enforcement, infrastructure resilience, emergency response capacity, and reporting completeness. These time-varying factors can influence the number of recorded deaths per event.

    \item \textbf{\texttt{deaths}:} Represents fatalities attributed to the specific earthquake event. It is affected by many unobserved event-specific and context-specific factors (magnitude, depth, proximity to population, building vulnerability, secondary hazards, response quality), many of which evolve over time and are partially captured via \texttt{year}.
\end{itemize}

\subsection*{2. Root Nodes and Relationships}
\begin{itemize}
    \item \textbf{\texttt{year}} is a root node in this two-variable system: it is exogenous and not caused by \texttt{deaths}.
    \item \textbf{\texttt{year}} can influence \texttt{deaths} via:
    \begin{itemize}
        \item True vulnerability/exposure changes over time (population growth, urbanization patterns).
        \item Improvements in building codes and enforcement reducing fatalities.
        \item Changes in emergency response capacity and medical care reducing fatalities.
        \item Changes in reporting completeness potentially increasing recorded deaths even if true deaths are stable.
    \end{itemize}
    \item \textbf{\texttt{deaths}} does not influence \texttt{year}. The calendar date is determined independently of the outcome. While reporting practices might make certain events more likely to be recorded, that does not causally change the \texttt{year} variable of an event.
\end{itemize}

\subsection*{3. Preliminary List of Potential Edges}
\textit{(Being expansive)}
\begin{itemize}
    \item $(\texttt{year}, \texttt{deaths})$: Plausible and expected.
    \item $(\texttt{deaths}, \texttt{year})$: Considered but likely implausible because outcomes do not set calendar time.
\end{itemize}

\subsection*{4. Reflection and Improvement}
The only defensible causal direction between these two variables is $\texttt{year} \to \texttt{deaths}$. Year functions as a surrogate for multiple evolving causal mechanisms that affect fatalities and for reporting-era effects that change recorded deaths. There is no credible mechanism whereby deaths would cause the year of occurrence.

\subsection*{Final Edge Determination}
\[
    (\texttt{year}, \texttt{deaths})
\]
\end{tcolorbox}

\begin{tcolorbox}[colback=red!5,colframe=red!60!black,title=HITL Call]

In your earthquake dataset, do you expect reporting completeness and exposure (population/building density) to have changed over time, making year a proxy for such secular trends? Or is the dataset already harmonized to control for these factors?
\end{tcolorbox}

\begin{tcolorbox}[colback=green!5,colframe=green!40!black,title=User Response]

This dataset just contains raw measurements and recorded data. No preprocessing was made.
\end{tcolorbox}

\begin{tcolorbox}[colback=blue!5,colframe=blue!40!black,title=Conquer Critic Agent for Partition Year/Deaths (Part 1 of 2)]

\section*{Evaluation of the Relationship: Year $\to$ Deaths}

\subsection*{Step 1: Evaluate Proposed Relationship}
\begin{description}
    \item[Plausibility] \texttt{year} itself does not directly cause fatalities, but it is a proxy for secular changes that do influence earthquake mortality. Over time, urbanization (exposure), building codes and construction quality (vulnerability), emergency response capacity, and reporting completeness tend to change. Because the dataset is raw (no harmonization), \texttt{year} can capture these trends, making an association between \texttt{year} and recorded \texttt{deaths} plausible.
    
    \item[Mechanisms]
    \begin{itemize}
        \item \textbf{Exposure:} Urbanization increases population density near fault zones, potentially increasing deaths for similar magnitudes.
        \item \textbf{Vulnerability and resilience:} Improved building standards and response capacity can reduce deaths over time in some regions. 
        \item \textbf{Reporting completeness:} Later years often have more complete death counts; earlier years may underreport, inflating an apparent temporal trend.
    \end{itemize}

    \item[Conclusion] Modeling $\texttt{year} \to \texttt{deaths}$ is reasonable as a shorthand for unobserved time-varying causes and measurement processes.
\end{description}
\subsection*{Step 2: Assess Anti-Causal Direction}
The proposed edge $(\texttt{year}, \texttt{deaths})$ is not in the anti-causal direction. Intervening on \texttt{deaths} would not change the calendar year, while interventions on time-varying factors aligned with year (e.g., building codes) can change deaths.

\textit{Note:} \texttt{year} itself is not manipulable, but it indexes manipulable processes. So $\texttt{year} \to \texttt{deaths}$ is a proxy relationship rather than a direct causal mechanism.

\subsection*{Step 3: Context Dependence and Counterfactuals}
\begin{itemize}
    \item \textbf{Context dependence:} The relationship may vary by region and era. In some contexts, modernization reduces deaths; in others, rapid urbanization without adequate codes increases deaths. Thus, the sign and magnitude of the year effect are not uniform.
    \item \textbf{Narrow windows:} Within short time spans or within homogeneous regions with stable exposure/vulnerability, the $\texttt{year} \to \texttt{deaths}$ association may be weak or negligible.
    \item \textbf{Counterfactuals:} Holding magnitude, region, and exposure constant, deaths could change over calendar time due to building code improvements; conversely, if exposure increases faster than safety measures, deaths could rise with year.
\end{itemize}
\end{tcolorbox}

\begin{tcolorbox}[colback=blue!5,colframe=blue!40!black,title=Conquer Critic Agent for Partition Year/Deaths (Part 2 of 2)]

\subsection*{Step 4: Additional Evidence and Considerations}
Since the dataset is raw with no harmonization, reporting completeness and exposure likely change with \texttt{year}. This supports treating \texttt{year} as a proxy cause of recorded deaths via measurement error and secular trends.

\textbf{Potential confounding and omitted variables:}
\begin{itemize}
    \item \textbf{Magnitude (\texttt{richter}):} Strongly affects deaths.
    \item \textbf{Region/Area:} Determine exposure and vulnerability.
    \item \textbf{Event depth, distance, secondary hazards:} Further mediate deaths (not present in dataset).
\end{itemize}
Without controlling for these, the estimated $\texttt{year} \to \texttt{deaths}$ effect may be biased.

\subsection*{Final Conclusion and Edge List}
The edge $(\texttt{year}, \texttt{deaths})$ is directionally reasonable as a proxy for time-varying causes and reporting changes, but it is context-dependent and should be interpreted as capturing secular trends rather than a direct physical cause. No anti-causal issues are identified with the proposed direction.

\[
    (\texttt{year}, \texttt{deaths})
\]
\end{tcolorbox}

\begin{tcolorbox}[colback=blue!5,colframe=blue!40!black,title=Conquer Hypothesis Agent for Partition Region/Area/Months/Deaths]
\section*{Spatial and Temporal Context Edges}

\begin{itemize}
    \item $\texttt{region} \to \texttt{area}$
    \item $\texttt{region} \to \texttt{deaths}$
    \item $\texttt{area} \to \texttt{deaths}$
    \item $\texttt{month} \to \texttt{deaths}$
\end{itemize}
\end{tcolorbox}

\begin{tcolorbox}[colback=red!5,colframe=red!60!black,title=HITL Call]
 In your experience with earthquake impacts, does seasonal timing (month) significantly modulate fatalities via secondary hazards (e.g., monsoon-triggered landslides, winter cold, tourist seasons) across regions, or is its effect typically negligible compared to magnitude and proximity?
\end{tcolorbox}
\begin{tcolorbox}[colback=green!5,colframe=green!40!black,title=User Response]
I don't know.
\end{tcolorbox}
\begin{tcolorbox}[colback=red!5,colframe=red!60!black,title=HITL Call]
Beyond local area characteristics (population density, building standards), do broader regional factors (e.g., tsunami exposure, monsoon regimes, disaster preparedness) exert a direct influence on earthquake fatalities, or are they mostly mediated through the specific area-level conditions?
\end{tcolorbox}
\begin{tcolorbox}[colback=green!5,colframe=green!40!black,title=User Response]
They likely have an influence, as for example certain areas might just be more densely populated.
\end{tcolorbox}

\begin{tcolorbox}[colback=blue!5,colframe=blue!40!black,title=Conquer Critic Agent for Partition Region/Area/Month/Deaths (Part 1 of 2)]
\section*{Evaluation of Spatial and Temporal Edges}

\subsection*{Step 1: Evaluate Proposed Relationships}

\begin{description}
    \item[Hierarchy $(\texttt{region}, \texttt{area})$] \textbf{Valid.} \texttt{area} is hierarchically nested within \texttt{region}; regional context determines which areas exist and their hazard environment. Intervening on \texttt{area} (choosing a different locality) does not retroactively change the higher-level region definition; conversely, changing \texttt{region} necessarily changes the set of possible areas.
    
    \item[Region Effect $(\texttt{region}, \texttt{deaths})$] \textbf{Plausible but largely mediated by \texttt{area}.} Regional characteristics (tsunami exposure, monsoon regimes, governance and preparedness) can influence fatalities, but in most practical analyses, these act through local area-level conditions (population density, construction quality, terrain, emergency response capabilities). Without additional variables, including a direct $\texttt{region} \to \texttt{deaths}$ edge risks conflating mediation with a direct effect.
    
    \item[Local Effect $(\texttt{area}, \texttt{deaths})$] \textbf{Valid.} Local vulnerability and exposure strongly modulate fatalities (building standards, terrain-related secondary hazards, proximity to population).
    
    \item[Seasonal Effect $(\texttt{month}, \texttt{deaths})$] \textbf{Context-dependent.} Seasonal timing can affect fatalities via secondary hazards (e.g., rainfall-triggered landslides) and exposure (tourist season peaks), but its effect magnitude varies by region and hazard profile. It is exogenous and can causally influence conditions that modulate death tolls.
\end{description}
\subsection*{Step 2: Anti-Causal Direction Assessment}
None of the proposed edges are anti-causal in orientation. Intervening on \texttt{deaths} will not change \texttt{region}, \texttt{area}, or \texttt{month}. $\texttt{region} \to \texttt{area}$ is correctly oriented (not $\texttt{area} \to \texttt{region}$). $\texttt{month} \to \texttt{deaths}$ is correctly oriented given \texttt{month} is exogenous calendar timing.
\subsection*{Step 3: Do Relationships Always Hold?}
\begin{itemize}
    \item $(\texttt{region}, \texttt{area})$: Holds universally as a definitional/hierarchical relation.
    \item $(\texttt{area}, \texttt{deaths})$: Holds broadly; there are counterfactuals where even high-vulnerability areas see low deaths (e.g., deep, offshore events), but within this variable set, \texttt{area} remains a causal determinant of deaths.
    \item $(\texttt{month}, \texttt{deaths})$: Holds in contexts with strong seasonal secondary hazards or exposure cycles; may be negligible in tectonically dominated, arid, or less seasonally variable regions.
    \item $(\texttt{region}, \texttt{deaths})$: Often mediated by \texttt{area}; may hold directly in contexts where region-level policies or hazard exposures are not captured by the \texttt{area} variable (e.g., coastal tsunami belts spanning multiple areas, uniform regional building codes). Otherwise, it should be treated as indirect via \texttt{area}.
\end{itemize}
\end{tcolorbox}

\begin{tcolorbox}[colback=blue!5,colframe=blue!40!black,title=Conquer Critic Agent for Partition Region/Area/Month/Deaths  (Part 2 of 2)]

\subsection*{Step 4: Tool-Supported Evidence and Recommendations}
Human input indicates uncertainty about the magnitude of the effect of \texttt{month} and suggests broader regional influences are often expressed via area-level characteristics, supporting mediation of $(\texttt{region}, \texttt{deaths})$.

\textbf{Recommended modifications:}
\begin{itemize}
    \item Retain $(\texttt{region}, \texttt{area})$, $(\texttt{area}, \texttt{deaths})$, $(\texttt{month}, \texttt{deaths})$.
    \item Treat $(\texttt{region}, \texttt{deaths})$ as primarily mediated by \texttt{area}; include it only if analysis intends to capture region-level effects not encoded in \texttt{area} (e.g., preparedness or hazard regime differences spanning areas). Otherwise, remove to avoid over-specification and double-counting.
\end{itemize}

\subsection*{Final Accepted Edges}
\[
    (\texttt{region}, \texttt{area}), (\texttt{area}, \texttt{deaths}), (\texttt{month}, \texttt{deaths})
\]
\end{tcolorbox}
\textcolor{joint}{
\subsection{Merge Phase}
In the final merge phase, the merge agents $\mathcal{M}_{hyp}$ and $\mathcal{M}_{critic}$ respectively propose and refine the causal connections between partitions. In the partition tree, this procedure consists in collapsing the partition tree bottom-up by merging the leaves into their respective parent. In case of iterated partitioning during the Divide phase, also the Merge phase will be iterated in the same way. Once the root-node of the partition tree is reached, CAST finishes.}
\begin{tcolorbox}[colback=blue!5,colframe=blue!40!black,title=Merge Hypothesis Agent]
\section*{Directed Edges}

\begin{itemize}
    \item $(\texttt{year}, \texttt{deaths})$
    \item $(\texttt{month}, \texttt{deaths})$
    \item $(\texttt{day}, \texttt{deaths})$
    \item $(\texttt{region}, \texttt{deaths})$
    \item $(\texttt{area}, \texttt{deaths})$
    \item $(\texttt{richter}, \texttt{deaths})$
    \item $(\texttt{region}, \texttt{richter})$
    \item $(\texttt{area}, \texttt{richter})$
    \item $(\texttt{year}, \texttt{richter})$
\end{itemize}
\end{tcolorbox}

\begin{tcolorbox}[colback=blue!5,colframe=blue!40!black,title=Merge Critic Agent (Part 1 of 2)]
\section*{Analysis of Causal Relationships}

\subsection*{Relationship Evaluation}

\begin{description}
    \item[Time $\to$ Deaths]
    \begin{itemize}
        \item \textbf{$(\texttt{year}, \texttt{deaths})$:} Plausible. Secular changes (building codes, emergency response, urbanization, reporting) can change fatality risk and documentation over time.
        \item \textbf{$(\texttt{month}, \texttt{deaths})$:} Plausible. Seasonal conditions (e.g., monsoon, winter storms) and exposure patterns can modulate secondary hazards and vulnerability.
        \item \textbf{$(\texttt{day}, \texttt{deaths})$:} Weak/context-dependent. Day-of-month alone is a crude proxy; without day-of-week/holiday information, its causal influence is limited, though specific calendar effects could exist in some contexts.
    \end{itemize}

    \item[Location $\to$ Deaths] \textbf{$(\texttt{region}, \texttt{deaths})$ and $(\texttt{area}, \texttt{deaths})$:} Plausible. Spatial context (population density, construction quality, site effects, terrain, tsunami/landslide exposure) modulates fatalities given shaking.

    \item[Magnitude $\to$ Deaths] \textbf{$(\texttt{richter}, \texttt{deaths})$:} Strong and well-supported. Larger magnitude generally increases shaking potential and risk of casualties, though the effect is moderated by location and context.
        \item[Location $\to$ Magnitude] \textbf{$(\texttt{region}, \texttt{richter})$ and $(\texttt{area}, \texttt{richter})$:} Not appropriate as direct causal edges for event-level data. Region/area are descriptors of where the event occurred; the underlying tectonic setting influences both where earthquakes occur and the distribution of magnitudes, but the recorded ``region/area'' label does not cause an event's magnitude. A more accurate structure would include an unobserved tectonic/structural variable causing both location and magnitude, or treat region/area as stratifying variables rather than direct causes of magnitude.

\end{description}

\end{tcolorbox}

\begin{tcolorbox}[colback=blue!5,colframe=blue!40!black,title=Merge Critic Agent (Part 2 of 2)]
\begin{description}

    \item[Time $\to$ Magnitude] \textbf{$(\texttt{year}, \texttt{richter})$:} Generally not causal. Calendar year does not cause the physical magnitude of an earthquake. Apparent temporal trends in recorded magnitudes can reflect reporting/instrumentation changes (selection/measurement effects), but the true magnitude of an event is not determined by year. Modeling this as a direct causal edge risks conflating catalog biases with physical causation. 
\end{description}
\subsection*{Anti-Causal Direction Assessment}
Edges into \texttt{richter} from \texttt{region}, \texttt{area}, and \texttt{year} are in an anti-causal or non-causal direction for event-level causality. Intervening on these labels would not change the physical magnitude of a given event.

\subsection*{Context-Dependence}
\begin{itemize}
    \item $(\texttt{year}, \texttt{deaths})$ and $(\texttt{month}, \texttt{deaths})$ hold in many contexts but are mediated by evolving vulnerability, exposure, and seasonal hazards.
    \item $(\texttt{day}, \texttt{deaths})$ is especially context-dependent and likely weak unless coupled with additional information (e.g., day-of-week, holidays).
    \item $\texttt{region}/\texttt{area} \to \texttt{deaths}$ and $\texttt{richter} \to \texttt{deaths}$ are broadly applicable, though moderated by unobserved factors (e.g., depth, distance to population centers).
\end{itemize}
\subsection*{Revised Edges}
\begin{itemize}
    \item \textbf{Keep:} $(\texttt{year}, \texttt{deaths})$, $(\texttt{month}, \texttt{deaths})$, $(\texttt{region}, \texttt{deaths})$, $(\texttt{area}, \texttt{deaths})$, $(\texttt{richter}, \texttt{deaths})$.
    \item \textbf{Optionally keep:} $(\texttt{day}, \texttt{deaths})$ with the caveat that it is weak/context-dependent.
    \item \textbf{Remove:} $(\texttt{region}, \texttt{richter})$, $(\texttt{area}, \texttt{richter})$, $(\texttt{year}, \texttt{richter})$.
\end{itemize}

\section*{Final Recommended Edge Set}

\begin{itemize}
    \item $(\texttt{year}, \texttt{deaths})$
    \item $(\texttt{month}, \texttt{deaths})$
    \item $(\texttt{day}, \texttt{deaths})$
    \item $(\texttt{region}, \texttt{deaths})$
    \item $(\texttt{area}, \texttt{deaths})$
    \item $(\texttt{richter}, \texttt{deaths})$
\end{itemize}
\end{tcolorbox}
\textcolor{joint}{
\subsection{Final Result}
After combining the all edges discovered during the conquer and merge phase, the final causal graph is reported in \ref{fig:earthquakes_graph}.}

\begin{figure}
    \centering
    \includegraphics[width=\linewidth]{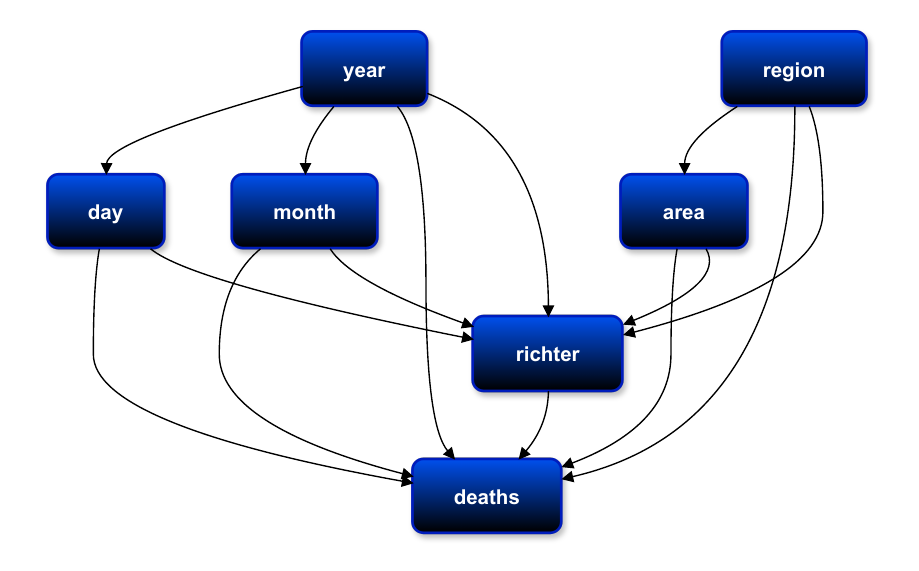}
    \caption{Causal Graph Discovered by \ourmethod{} for the Earthquakes dataset.}
    \label{fig:earthquakes_graph}
\end{figure}
\end{document}

%% file: math_commands.tex
\usepackage{amsmath,amsfonts,bm}

\def\eqref#1{equation~\ref{#1}}

\def\1{\bm{1}}

\DeclareMathAlphabet{\mathsfit}{\encodingdefault}{\sfdefault}{m}{sl}
\SetMathAlphabet{\mathsfit}{bold}{\encodingdefault}{\sfdefault}{bx}{n}

\def\gG{{\mathcal{G}}}

\newcommand{\parents}{Pa} 